\DeclareMathAlphabet{\mathpzc}{OT1}{pzc}{m}{it}
\DeclareMathAlphabet\EuFrak{U}{euf}{m}{n}	
\SetMathAlphabet\EuFrak{bold}{U}{euf}{b}{n}	
\newcommand{\End}{ {\bf end} }
\newcommand{\bC} {{\mathbb C}}
\newcommand{\bR} {{\mathbb R}}
\newcommand{\bZ} {{\mathbb Z}}
\newcommand{\eps}{\epsilon}
\newcommand{\mA}{\mathcal A}
\newcommand{\mB}{\mathcal B}
\newcommand{\mH}{\mathcal H}
\newcommand{\mM}{\mathcal M}
\newcommand{\mO}{\mathcal O}
\newcommand{\mP}{\mathcal P}
\newcommand{\mU}{\mathcal U}
\newcommand{\rC}{\mathrm{C}}
\newcommand{\bo}{{\partial_0 b}}
\newcommand{\bl}{{\partial_1 b}}
\newcommand{\bk}{{\partial_k b}}
\newcommand{\ad}{{\mathrm{ad}}}
\newtheorem{thm}{Theorem}[section]
\newtheorem{lem}[thm]{Lemma}
\newtheorem{prop}[thm]{Proposition}
\newtheorem{defn}[thm]{Definition}
\theoremstyle{definition}
\newtheorem{ex}{Example}[section]
\newtheorem{rem}[ex]{Remark}
\theoremstyle{remark}
\numberwithin{equation}{section}
\newcommand{\loc}[1]{{\scriptstyle (#1)}}
\begin{document}

\author{\textsc{Fabio Ciolli${}^1$, Giuseppe Ruzzi${}^2$ and Ezio Vasselli${}^2$} \\
\small{${}^1$Dipartimento di Matematica e Informatica, Universit\`a della Calabria,}\\
\small{Via Pietro Bucci, Cubo 30B, I-87036 Rende, Italy.}  \\
\small{${}^2$Dipartimento di Matematica, Universit\`a di Roma ``Tor Vergata'',}\\
\small{Via della Ricerca Scientifica, I-00133 Roma,  Italy.}  \\
\small{\texttt{fabio.ciolli@unical.it, ruzzi@mat.uniroma2.it, ezio.vasselli@gmail.com }}
}

\title{Where charged sectors are localizable:\\ a viewpoint from covariant cohomology

\bigskip

\large{Dedicated to Roberto Longo on the occasion of his 70th birthday}
}

\maketitle

\begin{abstract} 
Given a Haag-Kastler net on a globally hyperbolic spacetime,
one can consider a family of regions where quantum charges are supposed to be localized.
Assuming that the net fulfils certain minimal properties 
(factoriality of the global observable algebra and relative Haag duality),
we give a geometric criterion that the given family must fulfil 
to have a superselection structure with charges localized on its regions.
Our criterion is fulfilled by all the families used in the theory of sectors
(double cones, spacelike cones, diamonds, hypercones).
In order to take account of eventual spacetime symmetries,
our superselection structures are constructed in terms of covariant charge transporters, 
a novel cohomological approach generalizing that introduced by J.E. Roberts. 
In the case of hypercones, with the forward light cone as an ambient spacetime, 
we obtain a superselection structure with Bose-Fermi parastatistics and
particle-antiparticle conjugation. 
It could constitute a candidate for a different description
of the charged sectors introduced by Buchholz and Roberts for theories including massless particles.

\medskip

\noindent 
{\bf Mathematics Subject Classification.} 81T05, 81T20  \\
{\bf Keywords.} Algebraic quantum field theory, superselection sectors.

\end{abstract}


\section{Introduction}
The analysis of superselection sectors has been a topic of interest
since the early days of algebraic quantum field theory, 
in particular in the celebrated DHR analysis \cite{DHR69a,DHR69b,DHR71,DHR74}. 
There, superselection sectors are intrinsically characterized
in terms of endomorphisms of the observable C*-algebra $\mA$,
realized on the vacuum Hilbert space $\mH$. 
The endomorphisms of interest, say $\rho \in \End\mA$, are localized in 
double cones $\mO \subset \bR^4$, in the sense that
$\rho$ equals the identity 
on any subalgebra of $\mA$ generated by observables localized
in regions causally disjoint from $\mO$ (\emph{DHR endomorphisms}).
The relation with superselection sectors in the sense of Wigner, Wick and Wightman, 
that define Hilbert space representations of the type $\pi : \mA \to \mB(\mH_\pi)$, is that there is a 
unitary equivalence
$\pi \simeq \rho$
having regarded $\rho$ as a representation $\rho : \mA \to \mA \subset \mB(\mH)$.
This yields the DHR criterion to select the endomorphisms of interest:
the physical interpretation is that there is a quantum charge localized in $\mO$
that induces a superselection sector disjoint from the vacuum. 

\medskip 

An alternative way to look at DHR superselection sectors  was proposed by J.E. Roberts. Guided by the concept of parallel transport, Roberts introduced a families  of unitary operators $X_b \in \mA$,  
where $b$ are triples of double cones $\bo , \bl \subseteq |b| \subset \bR^4$,  \cite{Rob90,Rob04}. These unitaries  satisfy a cocycle relation and  realize  an equivalence between the DHR endomorphisms
$\rho\loc{\bl}$ and  $\rho\loc{\bo}$ localized in $\bl$ and $\bo$ respectively, as
\begin{equation}
\label{eq.intro.02}
X_b \, \rho\loc{\bl}(A) \ = \ \rho\loc{\bo}(A) \, X_b 
\ \ \ , \ \ \ 
A \in \mA 
\, .
\end{equation}
Hence, the physical idea is that a quantum charge, localized in $\bl$, is transported to $\bo$
along a curve $\vec{b} : [0,1] \to \bR^4$
with $\vec{b}([0,1]) \subset |b|$, $\vec{b}(0) \in \bl$, $\vec{b}(1) \in \bo$. 
The DHR endomorphisms appearing in (\ref{eq.intro.02}) are
constructed by adjoint action of $X_b$, by letting $\bo$ go to spacelike infinity.

\medskip 

In Minkowski space the Roberts approach is equivalent to the classical DHR formulation. 
Yet in curved spacetimes it turned out that charge transporters are unavoidable, 
because the DHR endomorphisms alone do not carry all the necessary physical informations \cite{Ruz05,BR09}. 
Successive work has shown that the analogy with parallel transport is fully
justified by the fact that, when $\mA$ is the observable net of the free Dirac field $\psi$,
for any charge transporter $X$ there is a closed 1--form $A$ on $\mM$, such that
\begin{equation}
\label{eq.intro.01}
X_b \ = \ \psi_\bo^* \, e^{i \int_{\vec{b}} A } \, \psi_\bl 
\ \ \ , \ \ \ 
\rho_{(\bk)} \, = \, \ad \psi_\bk 
\ \ \ , \ \ \  
k =0,1
\, .
\end{equation}
Here, $\psi_\bo$, $\psi_\bl$ are charged unitaries obtained by smeared field operators
by using the III-type factor property of the local algebras, inducing the DHR endomorphisms
by adjoint action. They carry opposite charges, coherently with the ideas of \cite{HK64,BF82}.
The physical interpretation is that $A$ is an external potential interacting with $\psi$,
inducing observable effects in spacetimes with non-trivial fundamental group 
(\emph{Aharonov-Bohm effect}) and vanishing in simply connected spacetimes.
Expressions analogous to (\ref{eq.intro.01}) can be obtained for a non-Abelian gauge group, 
with $A$ a flat connection with values in the corresponding Lie algebra \cite{Vas15,Vas19,DRV20}
\footnote{Charge transporters in the universal C*-algebra of the quantum electromagnetic field \cite{BCRV16,BCRV17} similar to \eqref{eq.intro.01} have been constructed in \cite{BCRV19,BCRV22,BCRV23}. We quote also \cite{CRV12, CRV15} where connection operators defined on paths are associated with representations for the quantum electromagnetic field.}. 

\medskip

A different superselection structure was proposed by Buchholz and Fredenhagen, which introduced a criterion
based on positivity of the energy and arrived to a charge localization in spacelike cones \cite{BF82}. 
This type of localization reflects the idea that a spacelike cone
is able to host two opposite charges, one of which is free to go to (or to come from) spacelike infinity,
\emph{behind the moon} following Haag and Kastler \cite{HK64}. 
This introduces a further level for the regions of interest: 
we have bounded regions (double cones) in which the observables are localized, 
and unbounded regions (spacelike cones) encoding the localization of charges.
This suggests to define by additivity a new observable net, say $\mA_K$, 
indexed by the charge localization regions instead of the usual family of double cones. 
As spacelike cones are not upward directed under inclusion, the so-obtained sectors 
are now described by net morphisms $\rho : \mA_K \to \mB(\mH)$ (\emph{BF-morphisms})
that in general do not match to form endomorphisms of the global C*-algebra. 
In spite of this difficulty, 
they form a mathematical structure analogous to the one obtained by DHR sectors, namely a symmetric tensor category with conjugates
encoding a permutation symmetry with Bose-Fermi parastatistics and particle-antiparticle correspondence.

\medskip 

In lower dimensions, because of topological obstructions different properties manifest themselves at the level
of statistics. The braid group appears instead of the permutation group when the exchange of charges is performed,
and exotic statistics appear. 
Such sectors have been studied in \cite{FRS89, FRS92}. 
In the present paper we focus on four spacetimes dimensions, so that we will always deal with Bose-Fermi (para)statistics.

\medskip 

All the previous approaches share an assumption, namely the absence of massless particles.
This of course excludes QED, which presents peculiarities that create serious problems
for the identification of superselection sectors.
Notably, the electric charge cannot be localized (not even in a spacelike cone), 
and charged fields cannot be local to the electromagnetic field 
in positive (and hence physical) gauges.
Both the problems have their root in the Gauss law \cite[\S 7.2]{Strocchi}.
The search for a superselection structure describing the electric charge
was approached in several papers \cite{FMS79A,Buc82,BDMRS96,BDMRS07},
culminating in the analysis by Buchholz and Roberts \cite{BR14}.
There, a crucial point is that the ambient spacetime is restricted 
to the future light cone $V_+$, in such a way to rule out the infrared clouds
giving rise to the overabundance of sectors with the same electric charge.
The restriction on $V_+$ has also consequences as the loss (then amended) of the full group of Poincar\'e symmetry,
which is restricted to an action by the semigroup stabilizing $V_+$,
and the eventuality that $\mA(V_+)$ (now generated by observables localized in $V_+$)
is not irreducible. 
This has an impact on Haag duality, that now must be referred to the von Neumann algebra $\mA(V_+)''$
that is not necessarily $\mB(\mH)$ but more generally a ${\mathrm{III}}_1$ factor.
The result is a family of approximately inner morphisms $\sigma : \mA(V_+) \to \mA(V_+)''$,
localized on hypercones and giving rise to the desired superselection structure (\emph{BR-sectors})
even if some questions remain open for parastatistics.

\

The present paper has two main objectives. 

\textit{First}, identify the essential features that define a class of regions within 
a globally hyperbolic spacetime which can serve as localization regions for charged superselection sectors. To emphasize this point, we describe these properties using a partially ordered set (poset), and consider the observable net defined on it.

\textit{Secondly}, develop a novel approach to superselection sectors 
in terms of \emph{covariant 1-cocycles}: charge transporters that maintain covariance with respect to a global symmetry of the spacetime. 
This 
concept extends the cohomology introduced by Roberts and addresses a gap in this approach, 
as the covariant properties of superselection sectors have previously only been discussed in terms of localized and transportable morphisms.
This aspect holds significance because it allows, in principle, for the examination of covariant superselection sectors 
on spacetimes with nontrivial topology. 
In such cases, in fact, the conventional approach relying on localized and transportable morphisms may prove inadequate \cite{BR09}.
  
We demonstrate that our superselection structure possesses permutation symmetry with Bose-Fermi (para)statistics, and a conjugation that encodes the particle-antiparticle correspondence. This machinery applies to the known situations described in the previous lines. Yet, whilst for double cones and spacelike cones we easily 
get the DHR and BF superselection structures respectively, 
in the case of hypercones it remains an open question whether our superselection structure 
agrees with the one constructed by Buchholz and Roberts. 
We postpone this discussion to future works, with the remark that, due to the generality of our approach, we did not take into account the energetic aspects
that come into play when the transport of a charge towards the boundary of the light cone is performed
{\footnote{
The authors are indebted with Detlev Buchholz for discussions on this point.
}}.

\ 

In the following lines we give a sketch of the content of the next sections. 

\medskip 

In \S \ref{obs} we fix a poset modelled on a family of regions 
of a spacetime $\mM$ encoding the charge localization and the geometry of the spacetime.
Namely, we consider a possibly curved spacetime denoted as $\mM$, foliated along time
and equipped with a symmetry action given by spacetime transformations.
In general, only symmetries that preserve the charge localization regions are of interest, 
and in all the mentioned examples they form a group
(even in the case of charges of electromagnetic type, 
despite the stabilizer of the forward light cone is a semigroup of Poincar\'e transformations,
the symmetries that stabilize the hypercones are given by the Lorentz group).

Hence, we consider an abstract  poset $K$ endowed with a causal disjointness relation $\perp$ and acted upon by a symmetry $\mP$, and determine two properties 
that the elements of $K$ must fulfill to play the role of localization regions 
in a superselection structure expressed in terms of charge transporters. 
The two properties are that the causal complement of any $o \in K$ must be pathwise connected, 
and that
for any pair $o_1,o_2$, if  $o_1\perp o_2$ then there is 
$a\perp o_1,o_2$, otherwise there are  $\tilde o_1,\tilde a$ such that $\tilde o_1\subset o_1$ and 
$\tilde a\perp \tilde o_1,o_2$. 
The former property ensure that we can freely transport charges along paths
in the causal complement of $o$, whilst the latter allows to add further charges
without interfering with a previously given set of mutually casually disjoint charges.

\medskip 

We then consider in \S \ref{rep} local von Neumann algebras $\mA(o)$ 
for any element of $o$ of $K$ defined in a reference Hilbert space $\mH$
\footnote{  
In the examples mentioned above, spacelike cones for instance, 
$\mA(o)$ is generated by the von Neumann algebras of observables
localized in double cones contained in $o$ and defined in the vacuum Hilbert space.}. 
The correspondence 
$\mA_K: o\mapsto \mA(o)$ gives a net of von Neumann algebras  
over $K$ on which we require minimal assumptions derived from the existing theory of superselection sectors: the 
symmetry action of $\mP$ on $\mA_K$ is unitarily implemented; the $\rC^*$-algebra
$\mA(K)$ generated by  $\mA(o)$ as $o$ varies in $K$ is factorial,
$\mA(K)'\cap \mA(K)''=\mathbb{C}\, 1_{\mH}$; 
relative Haag duality holds \cite{Cam07},   
\begin{equation}
\label{eq.intro.04}
 \mA(o^\perp)' \cap \mA(K)'' =  \mA(o) \ , \qquad o \in K \, ,
\end{equation}
where $\mA(o^\perp)$ is the von Neumann algebra associated with the causal complement
$o^\perp$ of $o$.

In \S \ref{cocycles}, starting from $\mA_K$ we introduce the notion of \emph{covariant cocycle}. 
It simultaneously yields a cocycle (charge transporter) in the sense of Roberts and a cocycle over $\mP$.
Covariant cocycles are designed to encode in a concise way $\mP$-covariant superselection sectors.
Considering for simplicity only simply connected spacetimes (hence simply connected posets) 
in order to rule out the above mentioned Aharonov-Bohm vector potentials,
we prove that covariant cocycles form a C*-category $Z^1_c(\mA_K)$.
A further analysis shows that $Z^1_c(\mA_K)$ has permutation symmetry with Bose-Fermi (para)statistics and conjugates, 
and hence is endowed with the properties
characterising a charged superselection structure.

$\mM$ being simply connected, $Z^1_c(\mA_K)$ can be described in terms of morphisms 
from $\mA_K$ into $\mA(K)$ in the sense of \cite{RV11};
these morphisms are localized on regions $o \in K$ \S \ref{c}, 
and in good cases (e.g. when $K$ is the set of double cones), 
they form actual endomorphisms of $\mA(K)$.
For various choices of $\mM$ and $K$, we get
DHR sectors (when $\mM = \bR^4$ and $K$ is given by double cones), 
BF sectors ($\mM = \bR^4$, $K$ given by spacelike cones),
and sectors in curved spacetimes ($\mM$ a globally hyperbolic spacetime, $K$ the family of diamonds).
When $\mM$ is the light cone and $K$ is formed by hypercones, we get a superselection structure
describing hypercone localized charges as explained in the previous lines.

\medskip 

In the final \S \ref{sec.concl} we expose some conclusions and discuss future developments of our work.

\section{The observable net}
\label{obs}
The investigation of charged superselection sectors within a given spacetime requires to consider observables in regions 
where the charges are localized, on the ground of a physical criterion. 
This leads to a correspondence assigning to such charge localization regions the algebras of observables measurable within them.
As usual, such algebras are obtained by additivity, starting from Haag-Kastler nets.
In the following we shall refer to the set of charge localization regions as the \emph{set of indices}. \smallskip

\subsection{The set of indices}
\label{Kn}

In the present section we discuss the essential properties defining a set of indices for due purposes. 
For convenience we regard it as an abstract partially ordered set (\emph{poset}) 
whose properties are modelled on a suitable family of open regions, ordered under inclusion, 
of a connected and simply connected 4-dimensional globally hyperbolic spacetime $\mM$ having non-compact Cauchy surfaces, 
and endowed with an action by spacetime symmetries when present. 
As illustrated in the introduction, the hypothesis of $\mM$ being
simply connected is motivated by the wish of simplifying the exposition, excluding Aharonov-Bohm type effects. 
Anyway we remark that, even if $\mM$ is simply connected, $K$ may be not directed. 
\smallskip

Let $K$ denote a non-empty partially ordered set with order relation $\subseteq$.
Elements of $K$ are denoted by Latin letters $o,a...$.  
In the following lines we give the list of properties that we assume for $K$: with the exception of 
\textbf{K4} and \textbf{K6} 
that are the crucial ones, they are natural requirements dictated by the idea that $K$ is the abstraction of a 
family of causally complete regions in a 4-dimensional spacetime, stable under the action of a group $\mP$ 
that is either the full group of spacetime symmetries or a subgroup thereof.
\begin{itemize}
\item[\textbf{K1}] for any $o\in K$ there are $\hat o, \tilde o\in K$ with $\hat o\subset o\subset \tilde o$.
\end{itemize}
In the previous line, the symbol $\subset$ stands for any relation implying $o \subseteq \tilde o$ and $o \neq \tilde o$,
and shall be defined concretely when we will consider examples of $K$.
%
%
The poset  $K$ is endowed  with a \emph{causal disjointness relation} i.e. a symmetric, non-reflexive binary relation $\perp$,
compatible with the order relation in the sense that
\begin{itemize}
 \item[\textbf{K2}] 
 $o\perp a \ , \ \hat o\subseteq o \ \Rightarrow \ \hat o\perp a$ 
\end{itemize}
The \emph{causal complement} of $o$ is the subset $o^\perp$ of $K$ defined as 
$o^\perp :=\{a\in K \ | \ a\perp o \}$. We assume that all the elements of $K$ are \emph{causally complete}
in the sense that
\begin{itemize}
\item[\textbf{K3}] 
$o^{\perp\perp} = 
\{\tilde o \in K \ | \ \tilde o\perp a \ , \ \forall a\in o^\perp\} = 
\{ a\in K \ | \ a \subseteq o \}$ 
\end{itemize}
Note that this implies that $o^\perp\ne\emptyset$. 
A key property is the following 
\begin{itemize}
\item [\textbf{K4}]
for any pair $o,a\in K$, if $o \perp a$ then $o^\perp\cap a^\perp\ne \emptyset$ otherwise 
there is $\tilde o\subset o$ s.t. ${\tilde o}^\perp\cap a^\perp\ne \emptyset$
\end{itemize}
We assume that $K$ is acted upon by a group $\mP$, \emph{the symmetry}, 
and that this action is compatible with the order and causal disjointness relations
\begin{itemize}
\item[\textbf{K5}] $a\subset o \ \Rightarrow \ \lambda a\subset \lambda o \ \ \text{ and } \ \ a\perp o \ \Rightarrow \ \lambda a\perp \lambda o$ 
\end{itemize}
Before to proceed with our list,
for the reader's convenience we introduce some properties of $K$  having a topological flavour, 
expressed in terms of the set of \emph{singular $n$-simplices}  $\Sigma_n(K)$ 
(see \cite{Ruz05} for details).
A $0$-simplex is just an element of $K$, that we usually write $a,o$.
A $1$-simplex is given by a triple $b$ of $0$-simplices $\partial_0b , \partial_1b \subseteq |b|$;
we call $|b|$ the \emph{support} and $\partial_0b , \partial_1b$ the \emph{faces}.
Given the 1-simplex $b$, the \emph{reverse} 
$\overline{b}$ is the 1-simplex having the same support as $b$ 
and such that $\partial_0\overline{b}=\partial_1b$,
$\partial_1\overline{b}=\partial_0b$. A 1-simplex $b$ is said to be \emph{degenerate} 
whenever $\partial_1b = \partial_0b$. In particular, we denote the degenerate 1-simplex
whose faces and support equal $o\in\Sigma_0(K)$ by $\sigma_0(o)$.
Inductively, for $n\geq 1$, an $n$-simplex $x$ is formed by $n+1$, $(n-1)$-simplices
$\partial_0x,\ldots, \partial_nx$ and by an element of the poset $|x|$, 
the \emph{support} of $x$, such that $|\partial_ix|\subseteq |x|$ for $i=0,\ldots,n$;
on the other hand, any $(n-1)$-simplex $x$ can be regarded as a degenerate $n$-simplex
$\sigma_i(x)$ obtained by repeating its $i$ vertex.
The symbols $\partial_i$ and $\sigma_i$ define face and degeneracy maps respectively. 
To be concise, we write the compositions 
$\partial_{ij} \doteq \partial_i\partial_j$, 
$\sigma_{ij} \doteq \sigma_i\sigma_j$. 
It can be proved that
\[
\partial_{ij} = \partial_{j(i+1)} \ , \qquad i\geq j
\]
A \emph{path} $p:a\to o$ is a finite  ordered set of 1-simplices $b_n*\cdots *b_1$
satisfying the relations $\partial_0b_{i-1}= \partial_1b_{i}$ 
for $i=2,\ldots,n$ and $\partial_1b_1=a, \partial_0b_n=o$.
The \emph{support} $|p|$ of a path $p$ is the collection of the supports of the 1-simplices by which it is composed.
The \emph{reverse} of $p$ 
is the path $\overline{p}:o\to a$ defined by 
$\overline{p}\doteq \overline{b}_1*\cdots *\overline{b}_n$. 
If $q$ is a path from $o$ to $\hat
o$, then we can define, in an obvious way, the composition 
$q*p:a\to \hat o$. A subset $S \subseteq K$ is said to be \emph{pathwise connected}
whenever for any pair $a$,$\tilde a$ of 0-simplices in $S$
there is a path from $a$ to $\tilde a$.
We are now ready to give our last properties.
\begin{itemize}
\item[\textbf{K6}] We assume that the $K$ is pathwise connected and that the causal complement $o^\perp$ 
of any element $o\in K$ is pathwise  connected.
\end{itemize}
An \emph{elementary deformation} 
of a path $p=b_n*\cdots*b_1$ consists in replacing a 1-simplex $\partial_1c$ of the path by
the pair $\partial_0c*\partial_2c$, where $c\in\Sigma_2(K)$, 
or conversely in replacing a consecutive pair
$\partial_0c*\partial_2c$ by a single 1-simplex $\partial_1c$.
Two paths with the same endpoints are \emph{homotopic} 
if they can be obtained from one another by a finite sequence of
elementary deformations. Homotopy defines an equivalence relation $\sim$ 
on the set of paths with the same endpoints and, $K$ being pathwise connected,  
this naturally leads to the notion of the fundamental group $\pi_1(K)$ of $K$. 
In all the cases of interest in the present paper, it turns out $\pi_1(\mM) \simeq \pi_1(K)$ \cite{Ruz05}.
The following hypothesis is formulated to simplify the discussion of the next section:
\begin{itemize}
\item[\textbf{K7}] We assume that $K$ is simply connected, i.e.\, $\pi_1(K)=1$. 
\end{itemize}
%
%
%
We finally observe that the action of the symmetry $\mP$ on $K$ extends to simplices by taking 
$\lambda x$ as the n-simplex whose faces are $\lambda\partial_i x$ and whose support is 
$|\lambda x|=\lambda|x|$. This induces an action on paths, which preserves the homotopy equivalence relation, defined by  
\[
\lambda p:= \lambda b_n*\cdots *\lambda b_1 \ , \qquad 
p=b_n*\cdots *b_1 \ .  
\]
Note that if $p:a\to o$ then $\lambda p:\lambda a\to \lambda o$.  

\subsection{Observable net and reference representation}
\label{rep}
We now add the further ingredients given by the observable net and its reference representation. 
Since we are not assuming that $K$ is upward directed, we have to take into account 
the notions of morphism and representation of a net, as in \cite{RV11}. 

\medskip 

Let $\mH$ be a Hilbert space. An \emph{observable net over $K$} is given by a correspondence 
\[
\mA_K:K\ni o \mapsto \mA(o)\subseteq \mB(\mH) \, ,
\]
assigning to $o \in K$ the von Neumann algebra $\mA(o)$.
We denote the $\mathrm{C}^*$-algebra generated by $\cup_{o\in K} \mA(o)$ by $\mA(K)$,
and assume the standard properties of isotony, causality,
and $\mP$-covariance
under a representation $U : \mP \to \mU(\mH)$. 
Finally we assume 
\begin{itemize}
\item \emph{factoriality},
\[
\mA(K)' \cap \mA(K)'' \, = \, \mathbb{C}1_{\mH} \, ;
\]
\item the \emph{Borchers property}: for any $o\subset a$ and for any projection $E\in\mA(o)$ there is an isometry $V\in\mA(a)$  such that  $VV^*=E$;
\item \emph{relative Haag duality}, 
\[
\mA(o) \, = \, \mA(K)'' \cap \mA(o^\perp)'
\ \ \ , \ \ \ 
o \in K
\, ,
\]
\end{itemize}
where, in our abstract setting, we may take as a definition $\mA(o^\perp)' = \bigcap_{a\perp o} \mA(a)^{\prime}$.
Note that factoriality is a weaker assumption with respect to the irreducibility property usually required. It is dictated by the hypothesis made by Buchholz and Roberts for their net defined on the light cone \cite{BR14}. 
In this connection, note that when the net is irreducible relative Haag duality becomes the usual Haag duality.

\subsection{Meaningful examples}

We now give some concrete examples where a set of indices $K$ and observables net over $K$ satisfying the properties introduced above arise.
The general context is that of a connected and simply connected 4-dimensional globally hyperbolic spacetime $\mM$,
with non-compact Cauchy surfaces and, possibly, symmetries (i.e. global isometries). 
$K$ can be any suitable family of open subsets of $\mM$, ordered under inclusion and with the causal disjointness relation induced by the causal structure of $\mM$.
To be precise:
\begin{itemize}
\item the proper inclusion $o\subset a$ amounts to the inclusion of the closure $cl(o)$ of $o$ in $a$;
\item the causal disjontness relation $o\perp a$ amounts to $J(cl(o))\cap cl(a)=\emptyset$ where $J(cl(o))$ denotes the causal set of $cl(o)$;
\item the action $\lambda o$ of a symmetry $\lambda$ on a region $o$ is nothing but the image $\lambda(o)$;
\item finally, the symmetry $\mP$ is the subgroup of symmetries of $\mM$ that leave invariant $K$.
\end{itemize} 
The defining properties of $K$ and the assumptions we made on $\mA_K$ are verified in the following cases.
\begin{enumerate}
\item \emph{The DHR sectors} \cite{DHR71,DHR74}. Here $\mM$ is the 4-dimensional Minkowski space, $\mP$ the Poincar\'e group, $K$ the set of double cones and $\mA_K$ is the observable net defined in the vacuum representation.

\item \emph{The DHR sectors in curved spacetimes} \cite{GLRV01,Ruz05,BR09}. Here $\mM$ is an arbitrary 4-dimensional globally hyperbolic spacetime with non-compact Cauchy surfaces, 
$K$ is the set of diamonds and  $\mA_K$ is  the observable net in a representation satisfying the microlocal spectrum condition.
\item \emph{The BF sectors} \cite{BF82}. Here $\mM$ the 4-dimensional Minkowski spacetime,  
$K$ the set of spacelike cones, $\mP$ the Poincar\'e group, and $\mA_K$ is defined in the vacuum representation. It is worth observing 
that being spacelike cones unbounded regions of $\mM$, one first consider the observable net over double cones defined in the vacuum representation. Then the von Neumann algebra associated with a spacelike 
cone $o$ is defined as the von Neumann algebra generated by the algebras  $\mA(\mO)$ associated with double cones $\mO$ such that $\mO\subset o$.
\item \emph{The BR charge classes (charges of electromagnetic type)} \cite{BR14}. Here $\mM$ is the forward light cone in the 4-dimensional Minkowski space and $K$ is the set of hypercones in $\mM$. Although $\mM$ has a semigroup of symmetries given by the semidirect product of non-negative time translations and the Lorentz group, $K$ is not stable under non-negative time translations. As a consequence $\mP$ corresponds to the Lorentz group. $\mA_K$ is defined by the observable net in the vacuum representation: since hypercones are unbounded, elements of $\mA_K$ are generated, similarly to the BF sectors, by the algebras of double cones that are located within each hypercone.
\end{enumerate}
We conclude by noting that violations of the properties \textbf{K1-7}, to the authors' knowledge, occur in low-dimensional spaces. 
For example, in 2-dimensional Minkowski space the causal complement of double cones is not connected, see \cite{FRS89}. 
In 3-dimensional Minkowski space, spacelike cones have connected complement but form a poset with non-trivial homotopy group, 
isomorphic to $\bZ$, ref.\ \cite{Mun09, Naa11}.
These violations and other topological restrictions like for instance that in these posets there are pairs of causally disjoint elements whose causal complements have non-connected intersection are the reasons for the appearance of 
anyonic sectors.

\section{Covariant cohomology and charged representations of the observable net}
\label{cocycles}

In the present section we introduce covariant 1-cocycles and, following the DHR analysis, 
we show that they define a symmetric tensor $\rC^*$-category
with conjugates, closed under direct sums and subobjects. 
This implies that charged quantum numbers are associated with the equivalence class of any irreducible covariant 1-cocycle. 
Using the hypothesis that $\mM$ is simply connected, we construct morphisms of the observable net $\mA_K$ into the global C*-algebra,
describing charges localized in regions of $K$. 
Finally, we illustrate how covariant 1-cocycles shift the localization regions of our localized morphisms, 
confirming their interpretation as charge transporters.

\subsection{Covariant 1-cocycles of the observable net}
\label{b}

We define and start to analyze the covariant cohomology of a given observable net $\mA_K$ fulfilling
the hypothesis of \S \ref{Kn} and \S \ref{rep}. 
The focus is on the  relation between our covariant cohomology and that 
introduced by J.E. Roberts, which provides an equivalent description of DHR superselection sectors
as mentioned in the Introduction.\smallskip

\begin{defn}
\label{b:1}
A \textbf{covariant $1$-cocycle} $X$ is a field 
\[
\mP\times \Sigma_1(K)\ni(\lambda, b)\mapsto X_b(\lambda) \in \mA(|b|)
\]
of unitary operators fulfilling the relation
\begin{equation}
\label{b:2}
X_{\partial_1c}(\sigma\lambda) 
\ = \ 
\alpha^{-1}_\lambda(X_{\lambda\partial_0c}(\sigma)) X_{\partial_2c}(\lambda)
\ \ \ , \ \ \ 
c \in \Sigma_2(K)
\, , \, 
\sigma , \lambda \in \mP
\, ,
\end{equation}
where $\alpha_\lambda := \mathrm{ad}_{U(\lambda)}$.
\end{defn}
The equality (\ref{b:2}) is a covariant extension of the 1-cocycle equation introduced by J.E. Roberts in \cite{Rob90,Rob04}. 
For, if $e \in \mP$ is the identity and $X_{b} := X_{b}(e)$, $b\in\Sigma_1(K)$, then 
\begin{equation}
\label{roberts-cocycle}
X_{\partial_0c} X_{\partial_2c}  \, = \, X_{\partial_1c} \ , \qquad c\in\Sigma_2(K) \, ,
\end{equation}
recovering the defining property of a Roberts 1-cocycle.

Now, it is worth recalling a key property of Roberts 1-cocycles and its consequences. 
Any Roberts 1-cocycle extends from 1-simplices to paths $p=b_n*\cdots *b_1$, by setting $X_p:= X_{b_n}\cdots X_{b_1}$,
and (\ref{roberts-cocycle}) implies that any Roberts 1-cocycle is \emph{homotopy invariant}, that is, 
$X_p=X_q$ for $p$ homotopic to $q$. Since $K$ is simply connected, by \textbf{K7} we find 
\[
X_p=X_q \, , \qquad \forall p,q:a\to o  \, ,
\]
and denoting an arbitrary path from $o$ to $a$ by $p_{o,a}$ we conclude that:
\begin{itemize}
\item $X_{b}=1_{\mH}$ for any degenerate 1-simplex $b$\, i.e. $\partial_1b=\partial_0b$. 
\item $X_{b*\bar b}= X_b \, X_{\bar b}= 1_{\mH}$ and  $X^*_b = X_{\bar b}$, for any 1-simplex $b$, where $\bar b$ is the reverse of $b$. 
\item $\lambda p_{o,a}\simeq p_{\lambda o, \lambda a}$ and  
$X_{\lambda p_{o,a}}= X_{p_{\lambda o,\lambda a}}$ for any $\lambda\in\mP$.  
\end{itemize}
As a consequence of the above properties, for any $o\in \Sigma_0(K)$ we can define 
\begin{equation}
\label{b:5a}
X_o(\lambda):= X_{b}(\lambda) \ , \qquad \partial_0b=o=\partial_1b, \  \lambda\in\mP \ . 
\end{equation}
In fact, if $b$ and $\tilde b$ are degenerate 1-simplices s.t. 
$\partial_0b=\partial_1b=\partial_0\tilde b= \partial_1\tilde b$ and $|b|\subseteq |\tilde b|$, then 
we can define a degenerate 2-simplex $c$ as $\partial_1 c=\partial_2 c=\tilde b$, $\partial_1c=b$ and 
$|c|=|\tilde b|$; by \eqref{b:2} and (\ref{roberts-cocycle}) we have 
\[
 X_b(\lambda) X_{\tilde b}= X_{\partial_0c}(\lambda) X_{\partial_2 c}= X_{\partial_1 c}= X_{\tilde b}(\lambda) \, ,
\]
thus $X_b(\lambda)$ is independent of the support of the degenerate 1-simplex and (\ref{b:5a}) is well-defined. 
We conclude this section by showing an identity that will be extensively used in the sequel. 
\begin{lem}
\label{b:6}
Given a covariant 1-cocycle $X$, for any path $p:o\to \tilde o$ and for any $\lambda\in\mP$ we have 
\[
X_{\lambda p} \,U(\lambda)\, X_o(\lambda)= U(\lambda)\,X_{\tilde o}(\lambda)\,  X_{p} \, .
\]
\end{lem}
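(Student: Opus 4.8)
The plan is to reduce the statement to the case in which $p$ is a single $1$-simplex and then to propagate the identity along $p$ by induction on its length. The engine for the single-simplex case is the covariant cocycle relation \eqref{b:2}, specialized to the two degenerate $2$-simplices $\sigma_0(b)$ and $\sigma_1(b)$ attached to a $1$-simplex $b$, which collapse \eqref{b:2} into two ``reduction identities''.

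First I would extract these two identities. Setting $\lambda=e$ in \eqref{b:2} gives $X_{\partial_1c}(\mu)=X_{\partial_0c}(\mu)\,X_{\partial_2c}$; applying this to $c=\sigma_1(b)$, whose faces are $\partial_0c=\sigma_0(\partial_0b)$ and $\partial_1c=\partial_2c=b$, and reading off the degenerate value through \eqref{b:5a}, yields for every $\mu\in\mP$
\[
X_b(\mu)\;=\;X_{\partial_0b}(\mu)\,X_b .
\]
Dually, setting $\sigma=e$ in \eqref{b:2} gives $X_{\partial_1c}(\lambda)=\alpha^{-1}_\lambda(X_{\lambda\partial_0c})\,X_{\partial_2c}(\lambda)$; applying this to $c=\sigma_0(b)$, whose faces are $\partial_0c=\partial_1c=b$ and $\partial_2c=\sigma_0(\partial_1b)$, and again using \eqref{b:5a}, yields
\[
X_b(\lambda)\;=\;\alpha^{-1}_\lambda(X_{\lambda b})\,X_{\partial_1b}(\lambda).
\]

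For a single $1$-simplex $b:o\to\tilde o$ (so $\partial_1b=o$, $\partial_0b=\tilde o$) I would equate these two expressions for $X_b(\lambda)$, namely $X_{\tilde o}(\lambda)\,X_b=\alpha^{-1}_\lambda(X_{\lambda b})\,X_o(\lambda)$, substitute $\alpha^{-1}_\lambda(X_{\lambda b})=U(\lambda)^*\,X_{\lambda b}\,U(\lambda)$, and left-multiply by $U(\lambda)$, obtaining
\[
U(\lambda)\,X_{\tilde o}(\lambda)\,X_b\;=\;X_{\lambda b}\,U(\lambda)\,X_o(\lambda),
\]
which is exactly the asserted identity for $p=b$. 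I would then induct on the length of $p$: writing $p=q*b_1$ with $b_1:o\to o_1$ and $q:o_1\to\tilde o$, I use $X_{\lambda p}=X_{\lambda q}\,X_{\lambda b_1}$ and $X_p=X_q\,X_{b_1}$ (multiplicativity along the transformed and original paths, together with $\lambda(q*b_1)=(\lambda q)*(\lambda b_1)$). Applying the base case to $b_1$ and then the inductive hypothesis to $q$ telescopes the intermediate value $X_{o_1}(\lambda)$ and delivers the claim.

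The main obstacle is purely the simplicial bookkeeping: correctly identifying the faces of the degenerate $2$-simplices $\sigma_0(b)$ and $\sigma_1(b)$ via the simplicial identities so that \eqref{b:2} reduces precisely to the two displayed identities, and confirming that the degenerate values are read off consistently through the well-definedness \eqref{b:5a} of $X_o(\lambda)$. Once these reductions are secured, both the single-simplex computation and the inductive step are immediate algebra in the unitaries $X_b(\lambda)$ and $U(\lambda)$.
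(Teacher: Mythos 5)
Your proposal is correct and follows essentially the same route as the paper: the paper's two auxiliary $2$-simplices $c$ and $\tilde c$ are precisely the degenerate simplices $\sigma_0(b)$ and $\sigma_1(b)$ you use, yielding the same two reduction identities $X_b(\lambda)=\alpha^{-1}_\lambda(X_{\lambda b})\,X_{\partial_1b}(\lambda)$ and $X_b(\lambda)=X_{\partial_0b}(\lambda)\,X_b$, after which both arguments conclude by iterating along the path. Your face bookkeeping is accurate (indeed cleaner than the paper's statement of $c$, which contains a typo), so nothing is missing.
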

\begin{proof}
First assume that $p$ is a 1-simplex $b$. Consider the 2-simplex $c$ defined by 
\[
\partial_0c=b \ \ , \ \ \partial_0c=(\partial_1b,\partial_1b; |b|) \ \ , \ \ 
\partial_1c=b \ \ , \ \ |c|=|b| \ .
\]
Then by \eqref{b:1} we have
\begin{align*}
X_{\lambda b} U(\lambda) X_{\partial_1b}(\lambda) & = X_{\lambda \partial_0c} U(\lambda) X_{\partial_2 c}(\lambda)= U(\lambda)\, \alpha^{-1}_\lambda(X_{\lambda \partial_0c}) X_{\partial_2 c}(\lambda)
=  U(\lambda)\, X_{\partial_1 c}(\lambda) \\
& =  U(\lambda)\, X_{b}(\lambda)
\end{align*}
Considering the 2-simplex  $\tilde{c}$ defined by 
\[
\partial_1\tilde{c}=b \ \ , \ \ \partial_0\tilde c=(\partial_0b,\partial_0b;\partial_0b) \ \ , \ \ \partial_2\tilde c=b \ \ , \ \  |\tilde c|=|b| 
\]
and applying again \eqref{b:1} to the previous identity we get 
\begin{align*}
X_{\lambda b} U(\lambda) X_{\partial_1b}(\lambda) & =  
U(\lambda)\, X_{b}(\lambda)=
U(\lambda)\, X_{\partial_1 \tilde c}(\lambda) =
U(\lambda)\, X_{\partial_0\tilde c}(\lambda)\, X_{\partial_2\tilde c}\\
& =
U(\lambda)\, X_{\partial_0b}(\lambda)\, X_{b} \, .
\end{align*}
Thus if $p:o\to \tilde o$ then the proof follows by iterating the same reasoning to all the 
1-simplices of the path. 
\end{proof}

\subsection{The category of covariant 1-cocycles}

Our final aim is to prove that covariant 1-cocycles are nothing but covariant charge transporters 
of the observable net. To this end we must exhibit, analogously to the DHR analysis, 
a charge structure encoded by a symmetric tensor $\rC^*$-category with conjugates, 
associated with the analogues of DHR and BF endomorphisms. 
In the present section we make a first step towards this direction by showing that 
covariant 1-cocycles form a $\rC^*$-category closed under direct sums and subobjects.

\begin{defn}
\label{def.b:6}
Given covariant 1-cocycles $X$ and $Y$, an \textbf{intertwiner} from $X$ to $Y$
is a field $t:\Sigma_0(K)\ni a\to t_a\in\mA(a)$ satisfying the relation 
\begin{equation}
\label{b:7}
\alpha^{-1}_\lambda (t_{\lambda\partial_0b}) \, X_{b}(\lambda) = Y_{b}(\lambda) \, t_{\partial_1b} \ , \qquad b\in\Sigma_1(K) \ . 
\end{equation}
We denote the set of the intertwiners from $X$ to $Y$ by $(X,Y)$.
\end{defn}
The category of \emph{covariant 1-cocycles} $Z^1_c(\mA_K)$ is the category with objects 
covariant 1-cocycles and arrow the above-defined intertwiners, equipped with 
the composition law  $(X,Y)\times(Z,X) \ni t,s\mapsto (t\cdot s)\in (Z,Y)$,
\[ 
(t\cdot s)_a := t_a\,s_a \ , \qquad a\in\Sigma_0(K) \, .
\]
The identity arrow of a covariant 1-cocycle $X$ is defined by  $(1_X)_a:= 1_{\mH}$ for any $0$-simplex $a$. 
The norm and the involution of $\mB(\mH)$ endow $Z^1_c(\mA_K)$ of a structure of $\rC^*$-category. 
In particular, the adjoint of $t\in (X,Y)$ is defined by 
\[
(t^*)_a:= t^*_a \ , \qquad  a\in\Sigma_0(K) \, . 
\]
To check that $t^*\in (Y,X)$, we observe that by \eqref{b:2} we have 
\[
\alpha^{-1}_\lambda(X_{\lambda \bar b}) X_b(\lambda)= X_{\partial_1b}(\lambda)\iff
 X^*_b(\lambda)= X^*_{\partial_1b}(\lambda) \alpha^{-1}_\lambda(X_{\lambda \bar b}) \, ,
\]
where $\bar b$ is the reverse of $b$. So, using this identity we have
\begin{align*}
\Big(\alpha^{-1}_\lambda(t^*_{\lambda \partial_0b}) Y_{b}(\lambda)\Big)^* & = 
Y^*_{b}(\lambda) \alpha^{-1}_\lambda(t_{\lambda \partial_0b}) = 
Y^*_{\partial_1b}(\lambda) \alpha^{-1}_{\lambda}(Y_{\lambda \bar{b}}) \alpha^{-1}_\lambda(t_{\lambda \partial_0b})\\
& = Y^*_{\partial_1b}(\lambda) \alpha^{-1}_{\lambda}\big(Y_{\lambda \bar{b}}t_{\lambda \partial_0b}\big) =
Y^*_{\partial_1b}(\lambda) \alpha^{-1}_{\lambda}\big(t_{\lambda \partial_1b}X_{\lambda \bar{b}}\big)\\
& = Y^*_{\partial_1b}(\lambda) \alpha^{-1}_{\lambda}\big(t_{\lambda \partial_1b}\big)  \,\alpha^{-1}_{\lambda}\big(X_{\lambda \bar{b}}\big)= t_{\lambda \partial_1b} X^*_{\partial_1b}(\lambda) \,\alpha^{-1}_{\lambda}\big(X_{\lambda \bar{b}}\big)\\
& = t_{\lambda \partial_1b} X^*_b(\lambda) \ .
\end{align*}
Passing to the involution, we get 
$\alpha^{-1}_\lambda(t^*_{\lambda \partial_0b}) Y_{b}(\lambda) = X_b(\lambda)t_{\lambda \partial_1b}$,
showing that $t^*\in(Y,X)$. \smallskip

We recall some notions on $\rC^*$-categories. A \emph{projection} is an arrow $e\in(X,X)$ such that 
$e^*\cdot e=e$; an \emph{isometry} is an arrow $v\in(X,Y)$ such that $v^*\cdot v= 1_X$,
and if $v\cdot v^*=1_Y$ then $v$ is called a \emph{unitary}. Two objects $X,Y$ are said to be \emph{equivalent} 
whenever there is a unitary arrow $u\in(X,Y)$. 
Finally, an object $X$ is said to be \emph{irreducible} if $(X,X)=\bC\cdot 1_X$.
A $\rC^*$-category is closed under \emph{subobjects} if for any projection $e\in(X,X)$ there are an object $Y$ and an isometry 
$v\in(X,Y)$ s.t. $w\cdot w^*=e$, and closed under \emph{(finite) direct sums} if for any pair
of objects $X,Y$ exists an object $Z$ and a pair of isometries $w\in(X,Z)$ and $v\in(Y,Z)$ s.t. 
$w\cdot w^*+v\cdot v^*= 1_Z$.\smallskip

An obvious example of irreducible object of $Z^1_c(\mA_K)$ is the \emph{identity} $I$ defined by 
$I_b(\lambda)_b:= 1_{\mH}$ for any 1-simplex $b$. 
If $t\in(I,I)$ then $t_{\partial_0b}= t_{\partial_0b}I_b= I_bt_{\partial_1b}= t_{\partial_1b}$ for any 1-simplex $b$.
Since $K$ is pathwise connected, this implies that $a\mapsto t_a$ is a constant field, $t_a=\tau$ for any $0$-simplex $a$,
and causality implies $\tau\in \mA(K)\cap \mA(K)''=\mathbb{C}1_{\mH}$. Clearly,
any covariant 1-cocycle $X$ equivalent to $I$ is irreducible.
We conclude this section with the following 
\begin{prop}
$Z^1_c(\mA_K)$ is a $\mathrm{C}^*$-category  closed under direct sums and subobjects. 
\end{prop}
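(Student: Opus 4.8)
The $\rC^*$-category structure is in fact almost entirely in place: the computation displayed above shows that $(X,Y)$ is a linear space of fields $a\mapsto t_a\in\mA(a)$ closed under the involution $t\mapsto t^*$, and taking $\lambda=e$ in \eqref{b:7} together with the unitarity of the $X_b,Y_b$ gives $\|t_{\partial_0b}\|=\|t_{\partial_1b}\|$ for every $1$-simplex $b$; since $K$ is pathwise connected by \textbf{K6}, the map $a\mapsto\|t_a\|$ is constant, so $\|t\|:=\|t_a\|$ is a well-defined norm making each $(X,Y)$ a Banach space on which the $\rC^*$-identity holds pointwise. The plan is therefore to concentrate on the two closure properties, realising the required cocycles by explicit charge-transport formulas and verifying \eqref{b:2} and \eqref{b:7} by direct manipulation.

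For direct sums, given objects $X,Y$ I would first produce, for each $0$-simplex $a$, a pair of isometries inside $\mA(a)$ with orthogonal ranges summing to $1_\mH$. Using \textbf{K1} I pick $\hat a\subset a$ and a projection $E\in\mA(\hat a)$ with $0\neq E\neq 1_\mH$ (the local algebras being nontrivial), and apply the Borchers property to $E$ and to $1_\mH-E$ to obtain $w_a,v_a\in\mA(a)$ with $w_aw_a^*=E$, $v_av_a^*=1_\mH-E$, hence $w_a^*w_a=v_a^*v_a=1_\mH$ and $w_a^*v_a=0$. I then set
\[
Z_b(\lambda):=\alpha^{-1}_\lambda(w_{\lambda\partial_0b})\,X_b(\lambda)\,w_{\partial_1b}^*+\alpha^{-1}_\lambda(v_{\lambda\partial_0b})\,Y_b(\lambda)\,v_{\partial_1b}^*\,,
\]
which lies in $\mA(|b|)$ since $w_{\partial_1b},v_{\partial_1b}\in\mA(|b|)$ and $\alpha^{-1}_\lambda(w_{\lambda\partial_0b}),\alpha^{-1}_\lambda(v_{\lambda\partial_0b})\in\mA(\partial_0b)\subseteq\mA(|b|)$ by covariance. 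Because the two summands have orthogonal ranges and $w_a^*v_a=0$, all mixed products vanish, so $Z_b(\lambda)$ is unitary, and \eqref{b:2} for $Z$ collapses, after cancelling factors $\alpha^{-1}_\lambda(w^*w)=\alpha^{-1}_\lambda(v^*v)=1_\mH$ at the intermediate $0$-simplex $\partial_1\partial_0c=\partial_0\partial_2c$, to \eqref{b:2} for $X$ and $Y$ separately. By construction $w\in(X,Z)$ and $v\in(Y,Z)$ are isometries with $w\cdot w^*+v\cdot v^*=1_Z$, so $Z$ is the direct sum.

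For subobjects I start from a projection $e\in(X,X)$, i.e.\ self-adjoint idempotents $e_a\in\mA(a)$ obeying \eqref{b:7} with $X=Y$; I may assume $e\neq 0$, in which case \eqref{b:7} at $\lambda=e$ shows the $e_a$ are unitarily conjugate along paths, so $e_a\neq 0$ for all $a$ by \textbf{K6}. The delicate point is to write $e_a=v_av_a^*$ with an isometry $v_a$ localised \emph{in} $\mA(a)$, whereas the Borchers property only splits projections that already lie in a strictly smaller algebra. I would bridge this gap by transport: choosing $\hat a\subset a$ by \textbf{K1} and the $1$-simplex $b_a:=(a,\hat a;a)$, the relation \eqref{b:7} at $\lambda=e$ reads $e_a=X_{b_a}\,e_{\hat a}\,X_{b_a}^*$ with the Roberts cocycle $X_{b_a}=X_{b_a}(e)$ of \eqref{roberts-cocycle} and $e_{\hat a}\in\mA(\hat a)$; the Borchers property then yields $u_a\in\mA(a)$ with $u_au_a^*=e_{\hat a}$, $u_a^*u_a=1_\mH$, and I set $v_a:=X_{b_a}u_a\in\mA(a)$, so that $v_av_a^*=e_a$ and $v_a^*v_a=1_\mH$. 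Defining
\[
Y_b(\lambda):=\alpha^{-1}_\lambda(v_{\lambda\partial_0b})^*\,X_b(\lambda)\,v_{\partial_1b}\,,
\]
a short computation using $X_b(\lambda)$ unitary and the self-intertwining of $e$ (to commute $\alpha^{-1}_\lambda(e_{\lambda\partial_0b})$ through $X_b(\lambda)$) shows that $Y_b(\lambda)\in\mA(|b|)$ is unitary and satisfies \eqref{b:2}, and that $v\in(Y,X)$ is an isometry with $v\cdot v^*=e$ — exactly closure under subobjects.

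The step I expect to be the real obstacle is precisely this localisation of the splitting isometries: everything else is bookkeeping with \eqref{b:2} and \eqref{b:7}, but producing $v_a$ inside the \emph{correct} algebra $\mA(a)$, rather than in the enlargement where the Borchers property naively lands, is what forces the interplay of \textbf{K1}, the Borchers property and the cocycle transport $e_a=X_{b_a}e_{\hat a}X_{b_a}^*$. I would isolate this transport identity as a preliminary lemma, after which both constructions go through uniformly.
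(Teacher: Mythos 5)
Your proposal is correct and follows essentially the same route as the paper: for subobjects you transport the projection into a properly contained region via the Roberts cocycle, apply the Borchers property there, and conjugate $X_b(\lambda)$ by the resulting localised isometries — exactly the paper's construction of $w_a = X_{b(a)}v_a$ and $Y_b(\lambda)=\alpha^{-1}_\lambda(w^*_{\lambda\partial_0 b})X_b(\lambda)w_{\partial_1 b}$. Your explicit direct-sum formula (which the paper leaves as "similar") is the standard one and checks out, modulo the implicit assumption that the local algebras admit a nontrivial projection.
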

\begin{proof}
We have already seen that $Z^1_c(\mA_K)$ is a $\rC^*$-category. Let us prove that $Z^1_c(\mA_K)$ is closed under subobjects. 
We adapt to the case of covariant 1-cocycles the proof given by Roberts in \cite{Rob90} for 1-cocycles. 
Let $e\in(X,X)$ be a projection. For any $0$-simplex $a$ we choose a $0$-simplex $k(a)$
such that $k(a)\subset a$. By the Borchers property there is an isometry 
$v_a\in \mA(a)$ any  $v_av_a^*=e_{k(a)}$. Now, let $b(a)$ be a 1-simplex with $\partial_1b(a)=k(a)$ and $\partial_0b(a)=|b(a)|=a$.
Define 
\[
w_a:= X_{b(a)}v_a \ , \qquad a\in\Sigma_0(K) \ , 
\]
and 
\[
 Y_b(\lambda) := \alpha^{-1}_{\lambda}(w^*_{\lambda\partial_0b})X_{b}(\lambda) w_{\partial_1b} \ , \qquad b\in\Sigma_1(K) \ . 
\]
Observe that $w_a$ is an isometry of $\mA(a)$. In fact $w^*_aw_a=v_a^* X^*_{b(a)}X_{b(a)}v_a=v_a^*v_a=1$ and   
\[
w_aw_a^*=  X_{b(a)}v_a v_a^* X^*_{b(a)}=  X_{b(a)}e_{k(a)}  X^*_{b(a)}\stackrel{\eqref{b:7}}{=} e_aX_{b(a)} X^*_{b(a)} = e_a\ .
\]
Finally 
\begin{align*}
\alpha^{-1}_\lambda (Y_{\lambda\partial_0c}(\sigma))    Y_{\partial_2c}(\lambda) & =
 \alpha^{-1}_\lambda (\alpha^{-1}_{\sigma}(w^*_{\sigma\lambda\partial_{00}c})\, X_{\lambda\partial_0c}(\sigma) w_{\lambda\partial_{10}c})
\alpha^{-1}_{\lambda}(w^*_{\lambda\partial_{02}c})X_{\partial_2 c}(\lambda) w_{\partial_{12}c}\\
& = \alpha^{-1}_\lambda (\alpha^{-1}_{\sigma}(w^*_{\sigma\lambda\partial_{00}c})\, X_{\lambda\partial_0c}(\sigma) w_{\lambda\partial_{10}c})
\alpha^{-1}_{\lambda}(w^*_{\lambda\partial_{10}c})X_{\partial_2 c}(\lambda) w_{\partial_{11}c}\\
& = \alpha^{-1}_{\sigma\lambda}(w^*_{\sigma\lambda\partial_{00}c})\, \alpha^{-1}_{\lambda}(X_{\lambda\partial_0c}(\sigma) e_{\lambda\partial_{10}c}) X_{\partial_2 c}(\lambda) w_{\partial_{11}c}\\
& \stackrel{\eqref{b:7}}{=} \alpha^{-1}_{\sigma\lambda}(w^*_{\sigma\lambda\partial_{00}c})\, \alpha^{-1}_{\lambda}(\alpha^{-1}_{\sigma}(e_{\sigma\lambda\partial_{00c}})X_{\lambda\partial_0c}(\sigma)) X_{\partial_2 c}(\lambda) w_{\partial_{11}c}\\
&  = \alpha^{-1}_{\sigma\lambda}(w^*_{\sigma\lambda\partial_{00}c})\, X_{\partial_1c}(\sigma\lambda) w_{\partial_{11}c}=  \alpha^{-1}_{\sigma\lambda}(w^*_{\sigma\lambda\partial_{01}c})\, X_{\partial_1c}(\sigma\lambda) w_{\partial_{11}c}\\ 
& = Y_{\partial_1c}(\sigma\lambda) \ .
\end{align*}
So $Y$ is a covariant 1-cocycle as well and it is a subobject of $X$ because $w\in (X,Y)$ is an isometry 
s.t. $w\cdot w^*=e$. In a similar way one can prove that $Z^1_c(\mA_K)$ is closed under direct sums.  
\end{proof}

\subsection{Covariant, localized and transportable endomorphisms}
\label{c}

In this section we show that $Z^1_c(\mA_K)$ is equivalent to a category of 
morphisms of $\mA_K$ with values in the global C*-algebra $\mA(K) \subseteq \mB(\mH)$.
The term \emph{morphism} is intended in the sense of \cite{RV11}, 
that is, we have a family $\varrho$ of *-morphisms
\begin{equation}
\label{endAK}
\varrho_a : \mA(a) \to \mA(K)  
\ \ {\mathrm{such \ that}} \ \ 
\varrho_{a'} \restriction \mA(a) = \varrho_a 
\ , \ 
\forall a \subseteq a' \, .
\end{equation}
Our morphisms, that we shall construct starting from $Z^1_c(\mA_K)$, result to be localized, transportable and covariant.
Moreover, in Minkowski spacetime they reduce to the DHR endomorphisms and to the BF morphisms,
picking as sets of indices the ones of double cones and spacelike cones respectively.
Thus we interpret our morphisms as quantum charges,
that result to be transported by covariant cocycles. 
From a mathematical point of view, these morphisms prove to be the key to introducing the tensor product on the category of covariant cocycles (see next section).

\medskip

Let $X$ be a covariant 1-cocycle. Given $o\in K$. For any $a\in K$ we take $\tilde a \perp a$ and define 
\begin{equation}
\label{c:1}
\rho\loc{o}_a(A) \, := \, X_{p_{o,\tilde a}} A X_{p_{\tilde a,o}} \in \mA(K) \ , \qquad A\in\mA(a) \ . 
\end{equation}
This definition is well posed. For, if $\hat a$ another element of $K$ which is causally disjoint 
from $a$, since the causal complement of
$a$ is pathwise connected, there is a path $p_{\hat a,\tilde a}$  contained in 
the causal complement of $a$. So 
\[
X_{p_{o,\tilde a}} A X_{p_{\tilde a,o}} = 
X_{p_{o,\hat a}}X_{p_{\hat a,\tilde a}} A X_{p_{\tilde a,\hat a}} X_{p_{\hat a,\tilde a}}=
X_{p_{o,\hat a}} A  X_{p_{\hat a,\tilde a}}
\]
where we used homotopy invariance and the fact that  $X_{p_{\hat a,\tilde a}}$ commutes with 
$\mA(a)$ because $p_{\hat a,\tilde a}$ is in the causal complement of $a$. 
Moreover, picking $a' \supseteq a$ and $\tilde{a} \perp a'$ we have $\tilde{a} \perp a$, 
implying
$\rho\loc{o}_{a'}(A) =  \rho\loc{o}_a(A)$, 
$A \in \mA(a)$.
Thus the family $\rho\loc{o} := \{\rho\loc{o}_a\}_{a\in K}$ fulfils (\ref{endAK}),
and defines a morphism
\begin{equation}
\label{rholoc}
\rho\loc{o} : \mA_K \to \mA(K) \, .
\end{equation}
We note that $\rho\loc{o}$ can be extended to the algebra generated by a finite number of regions $a_1,\ldots,a_n$
provided that there exists $\tilde a\perp  a_1,a_2, \ldots,   a_n$:
\[
\rho\loc{o}_{a_1}(A_1)\cdots \rho\loc{o}_{a_n}(A_n)= 
X_{p_{o,\tilde a}} A_1 X_{p_{\tilde a,o}}\cdots X_{p_{o,\tilde a}} A_n X_{p_{\tilde a,o}}=
X_{p_{o,\tilde a}} A_1\cdots A_n X_{p_{\tilde a,o}} \, .
\]   
In this case we shall use the notation 
\begin{equation}
\rho\loc{o}_{a_1,a_2,\ldots, a_n}(A_1\cdots
A_n):= X_{p_{o,\tilde a}} A_1\cdots A_n X_{p_{\tilde a,o}} \ , \qquad\tilde a\perp  a_1,a_2, \ldots,   a_n \ . 
\end{equation}
Our aim is now to prove that (\ref{rholoc}) generalize DHR endomorphisms, 
in the sense that they are localized, transportable and, moreover, covariant.
\begin{lem}
\label{c:2}
The following assertions hold:
\begin{itemize}
\item[(i)] if $a\perp o$ then $\rho\loc{o}_a(A)=A$ for any $A\in\mA(a)$.
\item[(ii)] $\rho\loc{o}_{\tilde{o}}(\mA(\tilde o))\subseteq \mA(\tilde o)$ for any $o\subseteq \tilde o$.
\item[(iii)]  
$X_{p_{\hat o,o}}\rho\loc{o}= \rho\loc{\hat o} X_{p_{\hat o,o}}$ for any $\hat o$.
\item[(iv)] $\mathrm{ad}_{X_b(\lambda)}\circ \rho\loc{\partial_1b}_{a} = \alpha^{-1}_{\lambda}\circ \rho\loc{\lambda\partial_0b}_{\lambda a}\circ \alpha_\lambda$.
\end{itemize}
\end{lem}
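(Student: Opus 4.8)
The plan is to test each identity on a generic $A \in \mA(a)$, using throughout that the definition \eqref{c:1} is independent of the auxiliary region $\tilde a \perp a$ and of the chosen path, together with homotopy invariance and the concatenation rules $X_{p_{x,y}} X_{p_{y,z}} = X_{p_{x,z}}$ and $X_{p_{y,x}} = X_{p_{x,y}}^*$. For (i), since $a \perp o$ I would take $\tilde a = o$ in \eqref{c:1}; then $p_{o,\tilde a}$ is a loop at $o$, so $X_{p_{o,o}} = 1_\mH$ by \textbf{K7}, giving $\rho\loc{o}_a(A) = A$. Assertion (iii) is a bare path computation: evaluating both sides on $A \in \mA(a)$ with a common $\tilde a \perp a$, concatenation gives $X_{p_{\hat o, o}} X_{p_{o,\tilde a}} = X_{p_{\hat o,\tilde a}}$ on the left and $X_{p_{\tilde a,\hat o}} X_{p_{\hat o, o}} = X_{p_{\tilde a, o}}$ on the right, so both sides reduce to $X_{p_{\hat o,\tilde a}} A X_{p_{\tilde a, o}}$.

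For (ii) the tool is relative Haag duality $\mA(\tilde o) = \mA(K)'' \cap \mA(\tilde o^\perp)'$. Since $\rho\loc{o}_{\tilde o}(A)$ is a product of elements of $\mA(K)$, it lies in $\mA(K)''$; it remains to show it commutes with every $B \in \mA(b)$, $b \perp \tilde o$. Here I exploit the freedom in \eqref{c:1}: by \textbf{K4} applied to $b \perp \tilde o$ the intersection $b^\perp \cap \tilde o^\perp$ is nonempty, so I choose $\tilde a$ there; since $o \subseteq \tilde o$ forces $o \perp b$ by \textbf{K2}, both $o$ and $\tilde a$ sit in the pathwise connected set $b^\perp$ (\textbf{K6}), so I may route $p_{o,\tilde a}$ entirely inside $b^\perp$. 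Then $X_{p_{o,\tilde a}}$ and $X_{p_{\tilde a, o}}$ commute with $B$ by causality, while $[A,B] = 0$ because $b \perp \tilde o$, whence $B$ commutes with $\rho\loc{o}_{\tilde o}(A)$. As $b$ is arbitrary this places $\rho\loc{o}_{\tilde o}(A)$ in $\mA(\tilde o^\perp)'$, and duality concludes.

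The substantive step is (iv), where I would invoke \lemref{b:6}. Evaluating the right-hand side $\alpha^{-1}_\lambda\big(\rho\loc{\lambda\partial_0 b}_{\lambda a}(\alpha_\lambda(A))\big)$, I pick the auxiliary region of $\rho\loc{\lambda\partial_0 b}_{\lambda a}$ to be $\lambda \tilde a$, legitimate by \textbf{K5}, so that after distributing $\alpha^{-1}_\lambda$ over the product the middle factor $\alpha^{-1}_\lambda(\alpha_\lambda(A))$ returns $A$; the outer factors become $\alpha^{-1}_\lambda(X_{p_{\lambda\partial_0 b, \lambda\tilde a}}) = \alpha^{-1}_\lambda(X_{\lambda p_{\partial_0 b, \tilde a}})$, using compatibility of the $\mP$-action with homotopy. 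Now \lemref{b:6} applied to $p_{\partial_0 b, \tilde a}$ turns $\alpha^{-1}_\lambda(X_{\lambda p_{\partial_0 b, \tilde a}})$ into $X_{\partial_0 b}(\lambda)\, X_{p_{\partial_0 b, \tilde a}}\, X_{\tilde a}(\lambda)^*$; the vertex factor $X_{\tilde a}(\lambda) \in \mA(\tilde a)$ disappears since it commutes with $A \in \mA(a)$ as $\tilde a \perp a$. Concatenating along $b$ via $X_{p_{\partial_0 b, \tilde a}} = X_b\, X_{p_{\partial_1 b, \tilde a}}$ and using the relation $X_b(\lambda) = X_{\partial_0 b}(\lambda)\, X_b$ established within the proof of \lemref{b:6}, the right-hand side collapses to $X_b(\lambda)\, \rho\loc{\partial_1 b}_a(A)\, X_b(\lambda)^*$, which is the left-hand side. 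I expect the main obstacle to be precisely this bookkeeping in (iv): transporting the auxiliary region by $\lambda$, tracking the two vertex cocycles $X_{\partial_0 b}(\lambda)$ and $X_{\tilde a}(\lambda)$ produced by \lemref{b:6}, and eliminating the latter through causality.
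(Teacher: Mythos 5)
Your proposal is correct and follows essentially the same route as the paper: path-independence and homotopy invariance for (i) and (iii), relative Haag duality applied to commutation with all $\mA(b)$, $b\perp\tilde o$, for (ii), and Lemma~\ref{b:6} together with the identity $X_b(\lambda)=X_{\partial_0 b}(\lambda)X_b$ and the causal elimination of the vertex term $X_{\tilde a}(\lambda)$ for (iv). The only cosmetic differences are that in (i) you take $\tilde a=o$ and use \textbf{K7} where the paper invokes \textbf{K4} and routes a path inside $a^\perp$, and that in (iv) you run the paper's computation from right to left.
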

\begin{proof}
$(i)$ If $a\perp o$ because of $\textbf{K4}$, there is $\tilde a\perp a, o $. Since the causal complement of $a$ is pathwise connected, we may take a path $p_{o,\tilde a}$ laying  in the causal complement of 
$a$. So $\rho\loc{o}_a(A)= X_{p_{o,\tilde a}} A X_{p_{\tilde a,o}} =A$ for any $A\in\mA(a)$. 

$(ii)$ Let $a\perp \tilde o$ and $A\in\mA(a)$. By $(i)$ we have $\rho\loc{o}_a(A)=A$. 
Now, by taking $\tilde a\perp    \tilde{o}, a$  (here we are using again property $\textbf{K4}$ of the poset $K$)
we have that 
\begin{align*}
\rho\loc{o}_{\tilde o}(B)A & = \rho\loc{o}_{\tilde o}(B)\rho\loc{o}_a(A)=
X_{p_{o,\tilde a}} B  X_{p_{\tilde a,o}} X_{p_{o,\tilde a}} A  X_{p_{\tilde a,o}} \\
& =
X_{p_{o,\tilde a}} B A  X_{p_{\tilde a,o}}=
X_{p_{o,\tilde a}} A  X_{p_{\tilde a,o}} X_{p_{o,\tilde a}} B  X_{p_{\tilde a,o}}=
\rho\loc{o}_a(A)\rho\loc{o}_{\tilde o}(B)\\
& = A\rho\loc{o}_{\tilde{o}}(B)
\end{align*}
Since this holds for any $a\perp \tilde o$, the proof follows by relative Haag duality.\smallskip 
 
$(iii)$ Given $A\in\mA(a)$, by homotopy invariance 
\[
X_{p_{\hat o,o}}\rho\loc{o}_a(A) = X_{p_{\hat o,o}}X_{p_{o,\tilde a}} A X_{p_{\tilde a,o}}=
X_{p_{\hat o,\tilde a}} A X_{p_{\tilde a,\hat o}} X_{p_{\hat o,o}}=
\rho\loc{o}_a(A) X_{p_{\hat o,o}} \ . 
\]

$(iv)$ Take $A\in\mA(a)$ and $\tilde a\perp a$. Then 
\begin{align*}
X_b(\lambda)\rho\loc{\partial_1b}_{a}(A) & =
X_b(\lambda) X_{p_{\partial_1b,\tilde a}}A  X_{p_{\tilde a,\partial_1b}}\\
& = 
U^{-1}(\lambda)X_{\lambda b} U(\lambda) X_{\partial_1b}(\lambda) X_{p_{\partial_1b,\tilde a}}A  X_{p_{\tilde a,\partial_1b}}\\
& \stackrel{\eqref{b:6}}{=} 
U^{-1}(\lambda)X_{\lambda b} X_{p_{\lambda \partial_1b,\lambda\tilde a}} U(\lambda) 
X_{\tilde a}(\lambda) A X_{p_{\tilde a,\partial_1b}}\\
&= U^{-1}(\lambda)X_{p_{\lambda \partial_0b,\lambda\tilde a}} U(\lambda) 
 A X_{\tilde a}(\lambda) X_{p_{\tilde a,\partial_1b}}\\
&=  U^{-1}(\lambda)\Big(\rho\loc{\lambda\partial_0b}_{\lambda a} \big(U(\lambda)A U^{-1}(\lambda)\big)\Big)U(\lambda)\, X_b(\lambda) \ , 
\end{align*}
because $X_{\tilde{a}}\in\mA(\tilde a)$ and commutes with $A$.
\end{proof}

\medskip 

\begin{rem}
\label{rem-transport}
In particular, the previous point $(iii)$ shows that the unitaries 
$X_{p_{\hat o,o}}$ intertwine $\rho\loc{o}$ and $\rho\loc{\hat o}$.
This is interpreted as the fact that $X$ transports along the path $p_{\hat o,o}$
the charge defined by $\rho$.
\end{rem}

\medskip 

What remains to be proved is that $\rho:= \{ \rho\loc{o} \}$ is covariant. To this end we define 
\begin{equation}
\label{c:3}
U^\rho_o(\lambda):= X_{p_{o,\lambda o}}U(\lambda) X_o(\lambda) \ , \qquad o\in K \ , \ \lambda\in\mP \ .
\end{equation}
We have the following 
\begin{lem}
\label{c:4}
Let $X$ be a covariant 1-cocycle, and $\rho$, $U^\rho$ be defined as above. Then the following assertions hold:
\begin{itemize}
\item[(i)] $U^\rho_o$ is a unitary representation of $\mP$ for any $o\in K$;
\item[(ii)] $U^\rho_o(\lambda)\rho\loc{o}_a(A)=\rho\loc{o}_{\lambda a}(\alpha_\lambda(A))U^\rho_o(\lambda)$ for any $A\in\mA(a)$ and $\lambda\in\mP$.
\end{itemize}
\end{lem}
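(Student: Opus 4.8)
The plan is to handle the two assertions separately, after first extracting a group-cocycle identity hidden in \eqref{b:2}. Applying \eqref{b:2} to the constant $2$-simplex at $o$ (all faces equal to the degenerate $1$-simplex $\sigma_0(o)$) and recalling \eqref{b:5a}, I get
\[
X_o(\sigma\lambda)=\alpha^{-1}_\lambda\big(X_{\lambda o}(\sigma)\big)\,X_o(\lambda)\ ,\qquad \sigma,\lambda\in\mP\ ,
\]
the cocycle relation for the field $\lambda\mapsto X_o(\lambda)$. For (i), I would rewrite \eqref{c:3} as $U^\rho_o(\lambda)=V_o(\lambda)\,U(\lambda)$ with $V_o(\lambda):=X_{p_{o,\lambda o}}\,\alpha_\lambda(X_o(\lambda))$, using $U(\lambda)X_o(\lambda)=\alpha_\lambda(X_o(\lambda))U(\lambda)$. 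Since $U$ is already a representation and $\alpha_\sigma=\mathrm{ad}_{U(\sigma)}$, one has $U^\rho_o(\sigma)U^\rho_o(\lambda)=V_o(\sigma)\,\alpha_\sigma(V_o(\lambda))\,U(\sigma\lambda)$, so the representation property of $U^\rho_o$ is equivalent to the $\alpha$-twisted $1$-cocycle identity $V_o(\sigma\lambda)=V_o(\sigma)\,\alpha_\sigma(V_o(\lambda))$; the normalisation $U^\rho_o(e)=1$ is immediate because $X$ is trivial on degenerate data.

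To prove the twisted identity I expand
\[
V_o(\sigma)\,\alpha_\sigma(V_o(\lambda))=X_{p_{o,\sigma o}}\,\alpha_\sigma(X_o(\sigma))\,\alpha_\sigma(X_{p_{o,\lambda o}})\,\alpha_{\sigma\lambda}(X_o(\lambda))\ .
\]
The key step is \lemref{b:6} with the path $p_{o,\lambda o}$ and parameter $\sigma$: conjugating that identity by $U(\sigma)$ yields exactly $\alpha_\sigma(X_o(\sigma))\,\alpha_\sigma(X_{p_{o,\lambda o}})=X_{\sigma p_{o,\lambda o}}\,\alpha_\sigma(X_{\lambda o}(\sigma))$. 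Substituting this, then using $\mP$-covariance of paths together with homotopy invariance ($K$ being simply connected) so that $X_{p_{o,\sigma o}}\,X_{\sigma p_{o,\lambda o}}=X_{p_{o,\sigma\lambda o}}$, and finally the displayed cocycle identity (with $\alpha_\sigma\alpha_\lambda=\alpha_{\sigma\lambda}$, which turns $\alpha_\sigma(X_{\lambda o}(\sigma))\,\alpha_{\sigma\lambda}(X_o(\lambda))$ into $\alpha_{\sigma\lambda}(X_o(\sigma\lambda))$), the right-hand side collapses precisely to $X_{p_{o,\sigma\lambda o}}\,\alpha_{\sigma\lambda}(X_o(\sigma\lambda))=V_o(\sigma\lambda)$, which proves (i).

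For (ii) I would compute the adjoint action of $U^\rho_o(\lambda)$ on $\rho\loc{o}_a$ by peeling off the three factors of \eqref{c:3}. First, \lemref{c:2}(iv) applied to the degenerate $1$-simplex at $o$ gives $\mathrm{ad}_{X_o(\lambda)}\circ\rho\loc{o}_a=\alpha^{-1}_\lambda\circ\rho\loc{\lambda o}_{\lambda a}\circ\alpha_\lambda$; composing with $\mathrm{ad}_{U(\lambda)}=\alpha_\lambda$ absorbs the $\alpha^{-1}_\lambda$ and leaves $\rho\loc{\lambda o}_{\lambda a}\circ\alpha_\lambda$; finally \lemref{c:2}(iii), which says that $X_{p_{o,\lambda o}}$ intertwines $\rho\loc{\lambda o}$ into $\rho\loc{o}$, converts this into $\rho\loc{o}_{\lambda a}\circ\alpha_\lambda$. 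Hence $\mathrm{ad}_{U^\rho_o(\lambda)}\circ\rho\loc{o}_a=\rho\loc{o}_{\lambda a}\circ\alpha_\lambda$, which is the assertion after moving $U^\rho_o(\lambda)$ to the right.

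The one genuinely delicate point is the bookkeeping in (i): one must keep track of the orientation of the transporters (here $X_{p_{o,\lambda o}}$ carries a charge from $\lambda o$ back to $o$), invoke simple connectivity to treat each $X_{p}$ as a function of its endpoints alone, and match the two localised covariant values $X_o(\sigma)$ and $X_{\lambda o}(\sigma)$ through the conjugated form of \lemref{b:6}. Once these are aligned the computation is forced, and no hypothesis on $\mA_K$ beyond those already exploited in \lemref{b:6} and \lemref{c:2} is required.
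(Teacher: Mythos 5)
Your proof is correct. For part (i) it is essentially the paper's argument in a different packaging: the paper multiplies $U^\rho_o(\lambda)U^\rho_o(\sigma)$ directly and uses \lemref{b:6} once to push $X_o(\lambda)$ past $X_{p_{o,\sigma o}}$, then the degenerate--$2$-simplex instance of \eqref{b:2} to merge $X_{\sigma o}(\lambda)$ and $X_o(\sigma)$ into $X_o(\lambda\sigma)$, and finally concatenates paths; your factorization $U^\rho_o(\lambda)=V_o(\lambda)U(\lambda)$ with the $\alpha$-twisted cocycle identity for $V_o$ performs exactly the same three steps (the conjugated form of \lemref{b:6}, the group-cocycle identity $X_o(\sigma\lambda)=\alpha^{-1}_\lambda(X_{\lambda o}(\sigma))X_o(\lambda)$, and homotopy invariance of path transporters), so nothing new is needed but the bookkeeping is made more transparent. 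For part (ii) you genuinely diverge: the paper redoes a direct computation on $\rho\loc{o}_a(A)=X_{p_{o,\hat a}}AX^*_{p_{o,\hat a}}$ invoking \lemref{b:6} again, whereas you obtain the statement purely by composing the adjoint actions of the three factors of \eqref{c:3} using \lemref{c:2}(iv) applied to the degenerate $1$-simplex at $o$ and then \lemref{c:2}(iii); this is shorter, reuses only already-established facts, and makes it clear that (ii) is a formal consequence of the covariance and transport properties of $\rho$ rather than of any fresh manipulation of paths. Both routes rely on the same hypotheses, and your argument is complete as stated.
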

\begin{proof}
$(i)$ 
\begin{align*}
 U^\rho_o(\lambda)U^\rho_o(\sigma)& = X_{p_{o,\lambda o}} U(\lambda) X_o(\lambda)X_{p_{o,\sigma o}} U(\sigma) X_o(\sigma)\\
 & \stackrel{Lem.\, \ref{b:6}}{=}
X_{p_{o,\lambda o}} X_{\lambda p_{o,\sigma o}} U(\lambda)X_{\sigma o}(\lambda)U(\sigma) X_o(\sigma)\\
& \stackrel{\eqref{b:2}}{=}
X_{p_{o,\lambda o}} X_{p_{\lambda o,\lambda \sigma o}} U(\lambda)U(\sigma) X_o(\lambda\sigma)
 = X_{p_{o,\lambda \sigma o}} U(\lambda\sigma) X_o(\lambda\sigma)\\
&= U^\rho_o(\lambda\sigma) \ .
\end{align*} 
$(ii)$ Given $A\in\mA(a)$, take $\hat a \perp a$ and by a direct computation we have that 
\begin{align*}
 U^\rho_o(\lambda)\rho\loc{o}_a(A)& = X_{p_{o,\lambda o}} U(\lambda) X_o(\lambda)X_{p_{o,\hat a}} AX^*_{p_{o,\hat a}}
 \\
 &  =
X_{p_{o,\lambda o}}  X_{\lambda p_{o, \hat a} } X_{\lambda  p_{ \hat a,o} }U(\lambda) X_o(\lambda)X_{p_{o,\hat a}} AX^*_{p_{o,\hat a}}\\
& \stackrel{Lem.\,\ref{b:6}}{=}
 X_{p_{o,\lambda o}}X_{\lambda p_{o, \hat a} }U(\lambda) X_{\hat a}(\lambda) X_{p_{ \hat a,o}}X_{p_{o,\hat a}} AX^*_{p_{o,\hat a}}\\ 
& = 
 X_{p_{o,\lambda o}}X_{\lambda p_{o,\hat a} }U(\lambda) A X_{\hat a}(\lambda)X^*_{p_{o,\hat a}} \\
 & = X_{p_{o,\lambda o}} X_{p_{\lambda o, \lambda \hat a}} \alpha_\lambda(A) U(\lambda)\,X_{\hat a}(\lambda) X^*_{p_{o,\hat a}}= X_{p_{o,\lambda \hat a}} \alpha_\lambda(A) U(\lambda)\, X_{\hat a}(\lambda)X^*_{p_{o,\hat a}}\\ 
& = \rho\loc{o}_{\lambda a}(\alpha_\lambda(A)) U^\rho_o(\lambda)   
\end{align*} 
completing the proof.
\end{proof}
The above result motivates the following definition. 
\begin{defn}
A \textbf{localized and transportable} morphism $\rho$ of $\mA_K$ 
is given by a collection $\rho\loc{o} : \mA_K \to \mA(K)$, ${o\in K}$, of morphisms such that  
\begin{itemize}
\item $\rho\loc{o}_a= \mathrm{id}_{\mA_a} $ for any $a\perp o$;
\end{itemize}
and for any pair $o,\tilde o\in K$ there are unitaries  $v^\rho_{o,\hat o}\in\mA$ such that 
\begin{itemize}
\item $v^\rho_{o,\hat o}\rho\loc{\hat o}= \rho\loc{o}\, v^\rho_{o,\hat o}$ and  $v^\rho_{\tilde o,o}v^\rho_{o,\hat o}=v^\rho_{\tilde o,\hat o}$. 
\end{itemize}
A localized and transportable morphism $\rho$ is \textbf{covariant} whenever 
for any $o\in K$ there is a unitary representation $U^\rho_o$ of $\mP$ such that 
\begin{itemize}
\item $\mathrm{ad}_{U^\rho_o(\lambda)}\circ \rho\loc{o}_a= \rho\loc{o}_{\lambda a }\circ \alpha_\lambda$,  for any $\lambda\in \mP$.
\end{itemize}
\end{defn}
Note that the unitaries $\{v^\rho_{o,\tilde o}\}$ are nothing but Roberts 1-cocycles. By the localization of  morphisms and relative Haag duality it follows that 
$v^\rho_{o,\tilde o}\in\mA(a)$ for any $a$ with $o,\tilde o\subseteq a$. Hence,  
Lemmas \ref{c:2} and \ref{c:4} say that any covariant 1-cocycle defines a localized, transportable and covariant morphism.

\begin{rem}
Localizable and transportable morphisms 
$\rho\loc{o} : \mA_K \to \mA(K) \subseteq \mB(\mH)$
are representations of $\mA_K$ on $\mH$ in the sense of \cite{RV11}.
When the set of indices is upward directed, $\rho\loc{o}$ defines a *-endomorphism 
of the global observable algebra $\mA(K)$, see \cite{Rob90,Rob04}. 
This holds in particular in Minkowski spacetime, 
for $K$ given by double cones (so that we recover DHR endomorphisms). 
We also note that we recover the BF-morphisms in the case of spacelike cones.
\end{rem}  
 
\medskip

We now extend at the level of $\rC^*$-categories
the correspondence between covariant 1-cocycles and localized, transportable and covariant morphisms. 
\begin{defn}
\label{c:6}
The set of the intertwiners $t\in(\rho,\gamma)$ between two localized, transportable and covariant morphisms 
$\rho$ and $\gamma$ is the set of fields $t:K\ni o\to t_o\in\mA(o)$ such that   
\begin{itemize}
\item $t_o\rho\loc{o}_a=\gamma\loc{o}_a t_o$  for any $o$ and $a$; 
\item $t_oU^\rho_o(\lambda)= U^\gamma_o(\lambda) t_o$ for any $o$ and $\lambda\in \mP$.
\end{itemize}
\end{defn}
Taking localized, transportable and covariant morphisms of $\mA_K$ as objects and the corresponding set 
of intertwiners, we get a category denoted by $\Delta_c(\mA_K)$. The involution and the norm of $\mB(\mH)$ endow this category  of a structure of a $\rC^*$-category. We call $\Delta_c(\mA_K)$
the category of \emph{localized, transportable and covariant morphisms of $\mA_K$}. 

Next step is to prove that we have a covariant functor from $Z^1_c(\mA_K)$ to $\Delta_c(\mA_K)$. The equations \eqref{c:1} and \eqref{c:3} defines the functor on the objects. To define our functor on the arrows, for any pair $X,Y$ of covariant 1-cocycles we consider
the corresponding morphisms $\rho,\gamma\in \Delta_c(\mA_K)$ and $t\in (X,Y)$.
Then, for any $o\in K$, by definition \eqref{c:1} we have 
\begin{align*}
t_o\, \rho\loc{o}_a(A) & =t_o \, X_{p_{o,\tilde a}} \, A \, X_{p_{\tilde a,o}}=
Y_{p_{o,\tilde a}} \, t_{\tilde a}\, A \, X_{p_{\tilde a,o}} =Y_{p_{o,\tilde a}} \, A\,  t_{\tilde a} \, X_{p_{\tilde a,o}}\\ 
& = Y_{p_{o,\tilde a}} \, A \, X_{p_{\tilde a,o}} \, t_o=\gamma\loc{o}_a(A) \, t_o \  
\end{align*}
and 
\begin{align*}
t_o U^\rho_o(\lambda) & = t_o\, X_{p_{o,\lambda o}}\, U(\lambda)\,  X_o(\lambda)=
Y_{p_{o,\lambda o}}\,  t_{\lambda o}\, U(\lambda)\,  X_o(\lambda)\\
& = 
Y_{p_{o,\lambda o}} \, U(\lambda)\, \alpha^{-1}_\lambda(t_{\lambda o})\, X_o(\lambda)= U^\gamma_o(\lambda)\,  t_o \ . 
\end{align*}
So $t \in (\rho,\gamma)$, and we have a covariant functor from $Z^1_c(\mA_K)$ to $\Delta_c(\mA_K)$. 
\begin{prop}
The $\rC^*$-categories $Z^1_c(\mA_K)$ and $\Delta_c(\mA_K)$ are equivalent.
\end{prop}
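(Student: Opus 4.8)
The plan is to establish the equivalence of $\rC^*$-categories by constructing a functor in the opposite direction and showing that the two functors are mutually quasi-inverse. We already have a covariant functor $F : Z^1_c(\mA_K) \to \Delta_c(\mA_K)$ sending a covariant 1-cocycle $X$ to the localized, transportable and covariant morphism $\rho$ defined by \eqref{c:1} and \eqref{c:3}, and an intertwiner $t \in (X,Y)$ to the same field $t$ viewed as an arrow in $(\rho,\gamma)$. To prove equivalence, I would exhibit a functor $G : \Delta_c(\mA_K) \to Z^1_c(\mA_K)$ and verify that $G \circ F$ and $F \circ G$ are naturally isomorphic to the respective identity functors; in fact I expect $G \circ F$ to be (essentially) the identity on the nose.

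First I would define $G$ on objects. Given a localized, transportable and covariant morphism $\rho$ with charge transporters $v^\rho_{o,\tilde o}$ and covariance representations $U^\rho_o$, the natural candidate is
\[
X^\rho_b(\lambda) := v^\rho_{\partial_0 b, \partial_1 b}\,\bigl(U^\rho_{\partial_1 b}(\lambda)\,U(\lambda)^{-1}\bigr) \ , \qquad b\in\Sigma_1(K),\ \lambda\in\mP \ ,
\]
reverse-engineering the construction \eqref{c:3}, which reads $U^\rho_o(\lambda) = X_{p_{o,\lambda o}}U(\lambda)X_o(\lambda)$, together with the identification of $v^\rho_{o,\tilde o}$ with the Roberts cocycle $X_{p_{o,\tilde o}}$ noted in the remark after the definition. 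One checks that each $X^\rho_b(\lambda)$ is unitary and lies in $\mA(|b|)$ using that $v^\rho_{\partial_0 b,\partial_1 b} \in \mA(|b|)$ (by localization and relative Haag duality) and that the covariance intertwiner lies in the appropriate algebra. The main verification here is that $X^\rho$ satisfies the covariant cocycle relation \eqref{b:2}: this should follow by combining the Roberts cocycle property $v^\rho_{\tilde o,o}v^\rho_{o,\hat o}=v^\rho_{\tilde o,\hat o}$ with the representation property of $U^\rho_o$ and the intertwining relation $\mathrm{ad}_{U^\rho_o(\lambda)}\circ\rho\loc{o}_a = \rho\loc{o}_{\lambda a}\circ\alpha_\lambda$, essentially running the computation in Lemma~\ref{c:4}(i) backwards. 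On arrows, $G$ sends $t\in(\rho,\gamma)$ to the field $o\mapsto t_o$, and the defining relations of Definition~\ref{c:6} translate into the intertwiner equation \eqref{b:7}.

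Next I would verify that $F$ and $G$ are inverse up to natural isomorphism. For $G\circ F$, starting from a covariant 1-cocycle $X$, the associated $\rho$ has transporters $v^\rho_{o,\tilde o}=X_{p_{o,\tilde o}}$ (by Remark~\ref{rem-transport} and Lemma~\ref{c:2}(iii)) and covariance representation $U^\rho_o$ given exactly by \eqref{c:3}; feeding these into the formula for $X^\rho$ recovers $X_b(\lambda)$ using homotopy invariance and Lemma~\ref{b:6}, so $G\circ F = \mathrm{id}$. For $F\circ G$ I would show that the morphism reconstructed from $X^\rho$ via \eqref{c:1}–\eqref{c:3} is naturally isomorphic to the original $\rho$; since a localized transportable morphism is determined by its transporters and covariance data up to the canonical identification, the component natural isomorphism can be taken to be the identity field, and naturality is immediate from functoriality. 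Finally I would confirm fullness and faithfulness on arrows: faithfulness is clear since both functors act as the identity on the underlying fields $t$, and fullness follows because the intertwiner conditions \eqref{b:7} and those of Definition~\ref{c:6} are equivalent under the object-level correspondence.

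The main obstacle I anticipate is checking that $X^\rho$ genuinely satisfies the covariant cocycle relation \eqref{b:2} rather than merely the two separate relations (the Roberts cocycle identity and the group representation law) that its two factors satisfy individually. The subtlety is that \eqref{b:2} couples the spatial transport with the symmetry action through the term $\alpha^{-1}_\lambda(X_{\lambda\partial_0 c}(\sigma))$, so I must track carefully how $v^\rho$ and $U^\rho$ interact under $\alpha_\lambda$ and under the homotopy $\lambda p_{o,a}\simeq p_{\lambda o,\lambda a}$; getting the bookkeeping of faces $\partial_{ij}c$ and the order of $\sigma,\lambda$ correct is where the real work lies, mirroring in reverse the computation already carried out for $U^\rho_o(\lambda)U^\rho_o(\sigma)=U^\rho_o(\lambda\sigma)$.
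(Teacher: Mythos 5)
Your overall strategy coincides with the paper's: one constructs an explicit quasi-inverse functor $\Delta_c(\mA_K)\to Z^1_c(\mA_K)$ out of the transporters $v^\rho$ and the covariance representations $U^\rho_o$, and checks that composing with the forward functor gives equivalent objects. You also correctly locate where the real work is. However, your candidate formula
$X^\rho_b(\lambda):= v^\rho_{\partial_0 b,\partial_1 b}\,U^\rho_{\partial_1 b}(\lambda)\,U(\lambda)^{-1}$
is not correct, and the failure is not a matter of index bookkeeping: the operator it defines is in general not an element of $\mA(|b|)$, so it is not even a candidate covariant $1$-cocycle. To see this, try the only available argument for localization, namely commutation with $\mA(o)$ for $o\perp|b|$ followed by relative Haag duality. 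Moving $A\in\mA(o)$ through $U(\lambda)^{-1}$ from the right produces $\alpha^{-1}_\lambda(A)\in\mA(\lambda^{-1}o)$, and to push this through $U^\rho_{\partial_1 b}(\lambda)$ via the covariance relation one would need $\lambda^{-1}o\perp\partial_1 b$, which has no reason to hold. The same defect shows up in your consistency check: when $\rho$ arises from a cocycle $X$, your formula evaluates to $X_b\,X_{p_{\partial_1 b,\lambda\partial_1 b}}\,\alpha_\lambda\big(X_{\partial_1 b}(\lambda)\big)$, which involves a charge transport between $\partial_1 b$ and $\lambda\partial_1 b$ and is not equal to $X_b(\lambda)=\alpha^{-1}_\lambda(X_{\lambda b})\,X_{\partial_1 b}(\lambda)$.

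The correct reverse-engineering of \eqref{c:3} puts $U(\lambda)^{-1}$ on the \emph{left} and requires an additional transporter attached to $\lambda\partial_0 b$: the paper fixes a pole $a\in K$ and sets
$X_b(\lambda):= U^{-1}(\lambda)\,v^{*}_{p_{\lambda\partial_0 b,a}}\,U^\rho_a(\lambda)\,v_{p_{a,\partial_1 b}}$.
With this ordering the commutant argument closes: pushing $A$ through from the right one successively obtains $\rho\loc{a}_o(A)$, then $\rho\loc{a}_{\lambda o}(\alpha_\lambda(A))$, then $\rho\loc{\lambda\partial_0 b}_{\lambda o}(\alpha_\lambda(A))=\alpha_\lambda(A)$ because $\lambda o\perp\lambda\partial_0 b$ by \textbf{K5}, and finally $U^{-1}(\lambda)$ converts $\alpha_\lambda(A)$ back to $A$. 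The pole also forces the functor on arrows to be $\phi(t)_o=v^{\rho}_{oa}\,t_a\,v^{\gamma}_{ao}$ rather than the identity field, and the composite with the forward functor is then only naturally isomorphic (not equal) to the identity, different poles giving equivalent objects. So the missing idea is precisely the pole construction together with the left placement of $U^{-1}(\lambda)$; without it the proposed $G$ does not land in $Z^1_c(\mA_K)$.
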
 
\begin{proof}
We define 
the functor from  $\Delta_c(\mA_K)$ to $Z^1_c(\mA_K)$. Let $\rho\in \Delta_c(\mA_K)$  and denote the corresponding  transporting unitaries and representation of $\mP$ by $v$ and $U$, respectively (we omit  the sup-script $\rho$ to simplify the notation). 
Fix an element  $a\in K$, the \emph{pole}, and define  
\[
X_b(\lambda):=   U^{-1}(\lambda)v^*_{p_{\lambda\partial_0b,a}}U_a(\lambda)v_{p_{a,\partial_1b}}
\]
Clearly while the definition depends on the pole $a$ it is independent of  
the choice of the paths joining the pole $a$ because $v_{a,o}$ are Roberts 1-cocycles. We prove that $X_b(\lambda)\in\mA(|b|)$: for any $o\perp |b|$ and $A\in\mA(o)$ we have 
\begin{align*}
X_b(\lambda)A &=U^{-1}_a(\lambda)v^*_{p_{\lambda\partial_0b,a}}U(\lambda)v_{p_{a,\partial_1b}} \rho\loc{\partial_1b}_o(A)
 = U^{-1}(\lambda)v^*_{p_{\lambda\partial_0b,a}}U_a(\lambda)\rho\loc{a}_o(A)v_{p_{a,\partial_1b}} \\
& = U^{-1}(\lambda)v^*_{p_{\lambda\partial_0b,a}}\rho\loc{a}_{\lambda o}(\alpha_\lambda(A))U_a(\lambda)v_{p_{a,\partial_1b}} \\
& = U^{-1}(\lambda)\rho\loc{\lambda\partial_0b}_{\lambda o}(\alpha_\lambda(A))v^*_{p_{\lambda\partial_0b,a}}U_a(\lambda)v_{p_{a,\partial_1b}} 
 = U^{-1}(\lambda)\alpha_\lambda(A)v^*_{p_{\lambda\partial_0b,a}}U_a(\lambda)v_{p_{a,\partial_1b}} \\
& = A X_b(\lambda)
\end{align*}
and the proof follows by relative Haag duality. For any 2-simplex $c$ we have 
\begin{align*}
\alpha^{-1}_\lambda (X_{\lambda\partial_0c}(\sigma))\,  X_{\partial_2c}(\lambda) & = 
U^{-1}(\lambda)U^{-1}(\sigma) v^*_{p_{\sigma\lambda\partial_{00}c,a}}U_a(\sigma)v_{p_{a,\lambda\partial_{10}c}}
v^*_{p_{\lambda\partial_{02},a}}U_a(\lambda)v_{p_{a,\partial_{12}c}}\\
& =U^{-1}(\sigma\lambda) v^*_{p_{\sigma\lambda\partial_{01}c,a}}U_a(\sigma)v_{p_{a,\lambda\partial_{02}c}}
v^*_{p_{\lambda\partial_{02},a}}U_a(\lambda)v_{p_{a,\partial_{11}c}}\\
& =U^{-1}(\sigma\lambda) v^*_{p_{\sigma\lambda\partial_{01}c,a}}U_a(\sigma)U_a(\lambda)v_{p_{a,\partial_{11}c}}\\
&= U^{-1}(\sigma\lambda) v^*_{p_{\sigma\lambda\partial_{01}c,a}} U_a(\sigma\lambda)v_{p_{a,\partial_{11}c}}
 = X_{\partial_1c}(\sigma\lambda) \ , 
\end{align*} 
and this proves that $X$ is a covariant 1-cocycle. 
Let now $t\in(\rho,\gamma)$ be an intertwiner. We define $\phi(t)_o:= {v^{\rho}_{oa}} \, t_a \, v^\gamma_{ao}$. Note that 
$\phi(t)_o\in\mA(o)$ since for any $\tilde a\perp o$ and $A\in\mA(\tilde a)$ we have
\[
\phi(t)_o \, A=  v^{\rho}_{oa}\, t_a\, v^\gamma_{ao}\, \gamma\loc{o}_{\tilde a}(A)=
v^{\rho}_{oa}\, t_a\,  \gamma\loc{a}_{\tilde a}(A)\, v^\gamma_{ao}= v^{\rho}_{oa}\, \rho\loc{a}_{\tilde a}(A)\, t_a\, v^\gamma_{ao}= 
A\, \phi(t)_o 
\]
and relative Haag duality makes the rest of the proof. Then   
\begin{align*}
\alpha^{-1}_\lambda(\phi(t)_{\lambda \partial_0b})X^\rho_{b}(\lambda) & = \alpha^{-1}_\lambda(\phi(t)_{\lambda \partial_0b})
U^{-1}(\lambda)v^\rho_{\lambda\partial_0b,a}U^\rho_a(\lambda)v^\rho_{a,\partial_1b}\\
& =
U^{-1}(\lambda) \phi(t)_{\lambda \partial_0b}v^\rho_{\lambda\partial_0b,a}U^\rho_a(\lambda)v^\rho_{p_{a,\partial_1b}}
 = U^{-1}(\lambda) v^\rho_{\lambda\partial_0b,a} t_a\, U^\rho_a(\lambda)v^\rho_{a,\partial_1b}\\
& =
U^{-1}(\lambda) v^\gamma_{\lambda\partial_0b,a} U^\gamma_a(\lambda) t_av^\rho_{a,\partial_1b}
 = U^{-1}(\lambda) v^\gamma_{\lambda\partial_0b,a} U^\gamma_a(\lambda) v^\gamma_{a,\partial_1b}\phi(t)_{\partial_1b}\\
& =
X^\gamma_b(\lambda) \phi(t)_{\partial_1b} \ . 
\end{align*}
It is easy to see that this defines a covariant functor. We get an equivalence because by choosing a different pole $\hat a$ we get equivalent objects.
\end{proof}
It is worth observing that the above equivalence maps irreducible covariant 1-cocycles into irreducible covariant morphisms. 
Here, irreducibility is intended in the sense of $\rC^*$-categories 
(whilst the global C*-algebra is assumed to be only factorial).

\subsection{Tensor structure}
\label{tens}
We now focus on $Z^1_c(\mA_K)$, and prove that it is a tensor $\rC^*$-category. 
To this end, given two covariant 1-cocycles $X$ and $Y$ we define 
\begin{equation}
\label{tens:1}
(X\otimes Y)_b(\lambda):= X_{b}(\lambda)\rho{\scriptstyle(\partial_1b)}_{|b|}(Y_{b}(\lambda)) \ , \qquad b\in\Sigma_1(K), \
\lambda\in \mP,
\end{equation}
and for any  $t\in(X,Y)$ and $s\in (Z,L)$ we set
\begin{equation}
\label{tens:2}
(t\otimes s)_a:= t_a\rho{\scriptstyle (a)}_{a} (s_a) \ , \qquad a\in\Sigma_0(K) \ , 
\end{equation} 
where $\rho$ is the morphism defined by $X$.

\begin{prop}
\label{tens:3}
The category $Z^1_c(\mA_K)$, with the tensor product defined by \eqref{tens:1}, \eqref{tens:2}, is a tensor $\rC^*$-category.
\end{prop}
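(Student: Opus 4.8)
The plan is to establish, in order, that the operations \eqref{tens:1} and \eqref{tens:2} produce again objects and arrows of $Z^1_c(\mA_K)$, that $\otimes$ is a bifunctor compatible with the $\rC^*$-structure, and that it is strictly associative with strict unit; since all the coherence data turn out to be identities, the pentagon and triangle axioms will hold automatically and the statement follows. The whole computation is governed by one dictionary: writing $\rho_X$ for the localized, transportable and covariant morphism attached to a cocycle $X$ by \eqref{c:1} and \eqref{c:3}, the formula \eqref{tens:1} is designed precisely so that the morphism attached to $X\otimes Y$ is the composite $\rho_X\circ\rho_Y$. I would keep this identity as the technical backbone and deduce associativity from it.

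First I would check that $X\otimes Y$ is a covariant $1$-cocycle. That $(X\otimes Y)_b(\lambda)\in\mA(|b|)$ follows from $X_b(\lambda)\in\mA(|b|)$ together with Lemma~\ref{c:2}(ii), which gives $\rho_X\loc{\partial_1b}_{|b|}\big(\mA(|b|)\big)\subseteq\mA(|b|)$ because $\partial_1b\subseteq|b|$; unitarity is immediate since $\rho_X\loc{\partial_1b}_{|b|}$ acts by conjugation with the unitary $X_{p_{\partial_1b,\tilde a}}$ and therefore carries the unitary $Y_b(\lambda)$ to a unitary. The substantial step is the covariant cocycle relation \eqref{b:2} for $X\otimes Y$. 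Here I would expand both sides using \eqref{b:2} for $X$ and for $Y$ separately, and then reconcile them with the transport identity Lemma~\ref{c:2}(iii) and, crucially, the covariance identity Lemma~\ref{c:2}(iv), which rewrites $\mathrm{ad}_{X_b(\lambda)}\circ\rho_X\loc{\partial_1b}$ as $\alpha^{-1}_\lambda\circ\rho_X\loc{\lambda\partial_0b}\circ\alpha_\lambda$; this is exactly the ingredient that produces both the $\alpha^{-1}_\lambda$ twist and the shift of localization region $\partial_1c\mapsto\lambda\partial_0c$ demanded by \eqref{b:2}, with Lemma~\ref{b:6} used to commute the implementers $U(\lambda)$ past cocycle factors.

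Next I would verify that $t\otimes s\in(X\otimes Z,\,Y\otimes L)$ for $t\in(X,Y)$ and $s\in(Z,L)$, again by combining the intertwiner relation \eqref{b:7} for $t$ and for $s$ with the localization $s_a\in\mA(a)$ and the definition of $\rho_X\loc{a}_a$. The same manipulation gives functoriality: $1_X\otimes 1_Z=1_{X\otimes Z}$ is immediate, while the interchange law $(t'\cdot t)\otimes(s'\cdot s)=(t'\otimes s')\cdot(t\otimes s)$ reduces, after cancelling $t'_a$ on the left, to the identity $\rho_Y\loc{a}_a(s'_a)\,t_a=t_a\,\rho_X\loc{a}_a(s'_a)$, which is precisely the intertwining property of $t$ read in the morphism picture; compatibility with the involution $t\mapsto t^*$ is clear since $\rho_X\loc{a}_a$ is a $*$-morphism. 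For the unit I would take the identity cocycle $I$: since $I_b(\lambda)=1_\mH$ one has $\rho_I=\mathrm{id}$, whence $I\otimes X=X=X\otimes I$ strictly, and likewise on arrows. Finally, strict associativity $(X\otimes Y)\otimes Z=X\otimes(Y\otimes Z)$ follows from the backbone identity $\rho_{X\otimes Y}=\rho_X\circ\rho_Y$: both triple products equal $X_b(\lambda)\,\rho_X\loc{\partial_1b}_{|b|}(Y_b(\lambda))\,\rho_X\loc{\partial_1b}_{|b|}\big(\rho_Y\loc{\partial_1b}_{|b|}(Z_b(\lambda))\big)$ once this identity is invoked.

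The hard part will be the covariant cocycle relation for $X\otimes Y$, since it is the unique place where the two cocycle identities, the covariance relation of Lemma~\ref{c:2}(iv) and the commutation Lemma~\ref{b:6} must be orchestrated simultaneously, and the bookkeeping of the composite face operators $\partial_{ij}c$ together with the repeated $\alpha^{-1}_\lambda$ twists is delicate. A closely related obstacle is the backbone identity $\rho_{X\otimes Y}=\rho_X\circ\rho_Y$: because \eqref{tens:1} is not symmetric in $X$ and $Y$, I must confirm that it is genuinely $\rho_X$ acting on the $Y$-data (and not the reverse), extending \eqref{tens:1} multiplicatively along paths and checking the extension against \eqref{c:1}; getting this order right is what legitimizes the strict associativity argument.
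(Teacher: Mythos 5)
Your proposal follows essentially the same route as the paper: the two substantive verifications --- the covariant cocycle identity for $X\otimes Y$ and the intertwiner property of $t\otimes s$ --- are carried out there exactly as you describe, with Lemma~\ref{c:2}(iv) as the key tool, while the remaining tensor-category axioms (unit, associativity, interchange) are dismissed in the paper as standard calculations. Your ``backbone identity'' $\rho_{X\otimes Y}=\rho_X\circ\rho_Y$ is a sound organizing principle for those omitted axioms, provided you read it only on local algebras $\mA(a)$ (where it makes sense thanks to Lemma~\ref{c:2}(ii)), since the localized morphisms are not endomorphisms of the global algebra when $K$ is not directed and the composite is not literally defined on all of $\mA(K)$.
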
 
\begin{proof}
We use the same notation as definitions \eqref{tens:1} and \eqref{tens:2}. 
We note that $X\otimes Y$ is covariant 1-cocycle 
\begin{align*}
(X\otimes Y)_{\partial_1c}(\sigma\lambda) & = 
X_{\partial_1c}(\sigma\lambda)\rho\loc{\partial_{11}c}_{|\partial_1c|}(Y_{\partial_1c}(\sigma\lambda)) 
 = X_{\partial_1c}(\sigma\lambda)\rho\loc{\partial_{12}c}_{|c|}(Y_{\partial_1c}(\sigma\lambda))\\
& =\alpha^{-1}_\lambda\big(X_{\lambda\partial_0c}(\sigma)\big) X_{\partial_2c}(\lambda) \, 
\rho\loc{\partial_{12}c}_{|c|}\Big(\alpha^{-1}_\lambda \big(Y_{\lambda \partial_0c}(\sigma)\big)Y_{\partial_2c}(\lambda)\Big)\\
& =\alpha^{-1}_\lambda\big(X_{\lambda\partial_0c}(\sigma)\big) \, X_{\partial_2c}(\lambda) \, 
\rho\loc{\partial_{12}c}_{|c|}\Big(\alpha^{-1}_\lambda \big(Y_{\lambda \partial_0c}(\sigma)\big)\Big) \rho\loc{\partial_{12}c}_{|c|}\Big(Y_{\partial_2c}(\lambda)\Big)\\
& \stackrel{Lem.\,\ref{c:2}iv}{=} \alpha^{-1}_\lambda\Big(X_{\lambda\partial_0c}(\sigma)\rho\loc{\lambda \partial_{02}c}_{\lambda |c|}\big(Y_{\lambda \partial_0c}(\sigma)\big) \Big) X_{\partial_2c}(\lambda) \, \rho\loc{\partial_{12}c}_{|c|}\big(Y_{\partial_2c}(\lambda)\big)\\
& = \alpha^{-1}_\lambda\Big(X_{\lambda\partial_0c}(\sigma)\rho\loc{\lambda \partial_{10}c}_{\lambda |c|}\big(Y_{\lambda \partial_0c}(\sigma)\big) \Big) X_{\partial_2c}(\lambda) \, \rho\loc{\partial_{12}c}_{|c|}\big(Y_{\partial_2c}(\lambda)\big)\\
& = \alpha^{-1}_\lambda\Big((X\otimes Y)_{\lambda \partial_0c}(\sigma)\Big)\, 
(X\otimes Y)_{\partial_2c}(\lambda) \ . 
\end{align*}
As for intertwiners, 
\begin{align*}
\alpha^{-1}_\lambda ((t\otimes s)_{\lambda\partial_{0b}}) (X\otimes Z)(\lambda)_b& = 
\alpha^{-1}_\lambda ( t_{\lambda\partial_0b}\, \rho\loc{\lambda\partial_0b}_{|\lambda b|} (s_{\lambda \partial_0b})) \, X_{b}(\lambda)\, \rho\loc{\partial_1b}_{|b|}(Z_{b}(\lambda)) \\
& \stackrel{Lem.\,\ref{c:2}iv}{=} 
\alpha^{-1}_\lambda ( t_{\lambda\partial_0b}) \, X_{b}(\lambda) \, \rho\loc{\partial_1b}_{|b|} \big(\alpha^{-1}_\lambda(s_{\lambda \partial_0b})\big) \rho\loc{\partial_1b}_{|b|}\big(Z_{b}(\lambda)\big) \\
& = 
\alpha^{-1}_\lambda ( t_{\lambda\partial_0b}) \, X_{b}(\lambda)\,  \rho\loc{\partial_1b}_{|b|} \big(\alpha^{-1}_\lambda(s_{\lambda \partial_0b})Z_{b}(\lambda)\big) \\
& = 
Z_{b}(\lambda) \, t_{\partial_1b} \, \rho\loc{\partial_1b}_{|b|} \big(L_{b}(\lambda) s_{\partial_1b}\big) \\
& = 
Z_{b}(\lambda) \gamma{\scriptstyle (\partial_1b)}_{|b|} (L_{b}(\lambda)) \,
t_{\partial_1b} \rho{\scriptstyle (\partial_1b)}_{|b|} (s_{\partial_1b})) \\
& = 
(Z\otimes L)_{b}(\lambda)) \,
(t\otimes s)_{\partial_1b} 
\end{align*}
where $\gamma$ is the localized and transportable morphism associated with $Z$. This proves that 
$t\otimes s\in (X\otimes Z,Y\otimes L)$. We omit the proof of the rest of the properties 
of tensor product because follows by standard calculations. 
\end{proof}

\subsection{Permutation symmetry}
\label{sim}

We now prove that the category of covariant 1-cocycles is a symmetric tensor $\rC^*$-category, in other words
we prove the existence of a permutation symmetry. To this end, the first step is to extend the tensor product.
Given two 1-simplex $b_1$ and $b_2$ we define 
\begin{equation}
(X\times Y)_{b_1,b_2}(\lambda) \, := \, X_{b_1}(\lambda)\rho\loc{\partial_1b_1}_{|b_2|}(Y_{b_2}(\lambda)) \, .
\end{equation}
This definition, in the case of Roberts cocycles, admits an extension to pairs of paths.
Given $p,q$ having the same length, $p=b_n*\cdots *b_1$ and $q=d_n*\cdots*d_1$ we define 
\begin{equation}
(X\times Y)_{p,q} \, := \, (X\times Y)_{b_n,d_n}\cdots (X\times Y)_{b_1,d_1} \, .
\end{equation}
We observe that if the paths had different length we may modify the shorter one by inserting 
degenerate 1-simplices. Since a Roberts cocycle equals the identity on degenerate 1-simplices one can easily prove that the above definition is independent 
of where these degenerate 1-simplices are inserted. So we can always assume that in the above definition the paths have the same length.  
We now prove three lemmas necessary to the proof of the existence of the permutation symmetry.
\begin{lem}
\label{sim:1}
Given two covariant 1-cocycles $X$ and $Y$ and two paths $p$ and $q$. For any  pair of paths $\tilde p$, $\tilde q$ homotopy equivalent to $p$ and $q$ respectively, the following relation holds 
\[
(X\times Y)_{p,q}= (X\times Y)_{\tilde p,\tilde q} \ .
\] 
\end{lem}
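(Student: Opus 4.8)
The plan is to reduce the homotopy invariance of the product cocycle $(X\times Y)_{p,q}$ to the homotopy invariance already established for individual Roberts 1-cocycles, which holds because $K$ is simply connected. Since homotopy equivalence is generated by elementary deformations, and since any homotopy of the pair $(p,q)$ can be decomposed into elementary deformations performed on one of the two paths at a time (padding with degenerate 1-simplices so the two paths always have equal length), it suffices to prove the identity in the single basic case where $\tilde p$ is obtained from $p$ by one elementary deformation while $q=\tilde q$ is left fixed, together with the symmetric case where $q$ is deformed and $p$ is fixed. First I would set up this reduction explicitly, noting that inserting degenerate 1-simplices does not change either side because a Roberts cocycle is the identity on degenerate 1-simplices (as recorded in the bulleted consequences following Definition \ref{b:1}).

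For the core computation, I would take a 2-simplex $c\in\Sigma_2(K)$ witnessing the elementary deformation on the first path, so that a consecutive factor indexed by $\partial_1 c$ is replaced by the pair indexed by $\partial_0 c * \partial_2 c$, while the corresponding stretch of $q$ is padded so that the factor of $Y$ sitting alongside is a single $d$ on the $\partial_1 c$ side and a degenerate 1-simplex together with $d$ on the $\partial_0c*\partial_2c$ side. The key local identity to verify is
\[
(X\times Y)_{\partial_0 c, \sigma_0(\partial_0 d)} \, (X\times Y)_{\partial_2 c, d} \ = \ (X\times Y)_{\partial_1 c, d} \, ,
\]
where the left factor carries a degenerate simplex in the second slot, so that $Y$ contributes trivially there. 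Expanding both sides via the definition $(X\times Y)_{b_1,b_2}=X_{b_1}\,\rho\loc{\partial_1 b_1}_{|b_2|}(Y_{b_2})$ and using the Roberts cocycle relation \eqref{roberts-cocycle} for $X$ (namely $X_{\partial_0 c}X_{\partial_2 c}=X_{\partial_1 c}$), the two $X$-factors collapse correctly; the remaining task is to match the $\rho$-dressed $Y$-factors. Here I would invoke Lemma \ref{c:2}(iii), which says $X_{p_{\hat o,o}}\rho\loc{o}=\rho\loc{\hat o}X_{p_{\hat o,o}}$, to move the localization index of $\rho$ along the edge of $c$ from $\partial_1\partial_0 c=\partial_{10}c$ back to $\partial_1\partial_1 c=\partial_{11}c$, exploiting the simplicial identity $\partial_{10}c=\partial_{11}c$ (a case of $\partial_{ij}=\partial_{j(i+1)}$) so that the two intermediate localization regions agree and the $X$-conjugation on the $Y$-factor telescopes.

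The symmetric case, where $q$ is deformed and $p$ is held fixed, runs the same way but uses the Roberts relation for $Y$ inside the $\rho$-image; here one must also use that $\rho\loc{\cdot}$ is a $*$-homomorphism on each local algebra, so that $\rho\loc{o}_{|b|}(Y_{\partial_0 c'})\rho\loc{o}_{|b|}(Y_{\partial_2 c'})=\rho\loc{o}_{|b|}(Y_{\partial_1 c'})$, together with the well-definedness of $\rho$ on products over several regions established around equation \eqref{c:1}. I expect the main obstacle to be bookkeeping rather than conceptual: one has to be scrupulous that all the localization superscripts and supports of the $\rho$-factors genuinely match after the deformation, since $(X\times Y)_{p,q}$ conjugates each $Y$-factor by the partial $X$-product accumulated up to that point, and a careless index can break the telescoping. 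Once the single-deformation identity is secured in both slots, concatenating elementary deformations yields the full statement for arbitrary homotopic pairs $\tilde p\sim p$, $\tilde q\sim q$.
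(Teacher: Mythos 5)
Your proposal follows essentially the same route as the paper: reduce to a single elementary deformation, pad the undeformed path with degenerate $1$-simplices, and verify the resulting local identity using the Roberts cocycle relation together with the fact that each $\rho\loc{o}_a$ is a homomorphism and that $X$, $Y$ are trivial on degenerate simplices; your second case (deformation in the $q$-slot) is precisely the computation the paper writes out. One detail to correct in your first case: the simplicial identity you invoke, $\partial_{10}c=\partial_{11}c$, is false --- from $\partial_{ij}=\partial_{j(i+1)}$ one gets $\partial_{10}c=\partial_{02}c$ (the intermediate vertex of the deformed path) and $\partial_{11}c=\partial_{12}c$ (its initial vertex), which are distinct in general. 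With the padding you actually chose, pairing $\sigma_0(\partial_0 d)$ with $\partial_0 c$ and $d$ with $\partial_2 c$, the localization index that appears is $\partial_1\partial_2 c=\partial_{12}c$, which equals $\partial_{11}c$ directly, so the $\rho$-dressed factors already match and no appeal to Lemma~\ref{c:2}(iii) is needed; that lemma would only enter had you padded $q$ at the other end, and then with the indices $\partial_{12}c$ and $\partial_{10}c$ rather than the ones you wrote.
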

\begin{proof}
As homotopy equivalent paths are obtained one each other by a finite number of elementary deformations (see Section \ref{Kn}),
it is enough to prove the assertion for an elementary deformation. 
Let $p=b$ and $q=\partial_1c$ is the 1-face of a 2-simplex $c$. 
So $\partial_1c=q\sim \partial_0c*\partial_2c$.  
Then 
\begin{align*}
(X\times Y)_{p,q} & = X_{b}\rho\loc{\partial_1b}_{|c|}(Y_{\partial_1c}) \\
 & = X_{b}\rho\loc{\partial_1b}_{|c|}(Y_{\partial_0c}Y_{\partial_2c})
  =
  X_{b}\rho\loc{\partial_1b}_{|c|}(Y_{\partial_0c})\, \rho\loc{\partial_1b}_{|c|}(Y_{\partial_2c}) \\
  & = X_{b}\rho\loc{\partial_1b}_{|c|}(Y_{\partial_0c})  \, X_{\sigma_0(\partial_1c)}\rho\loc{\partial_1b}_{|c|}(Y_{\partial_2c})\\
  & = (X\times Y)_{b,\partial_0 c}\, (X\times Y)_{\sigma_0(\partial_1b), \partial_2c} \\
  & = (X\times Y)_{b*\sigma(\partial_1b), \partial_0c*\partial_2c}=
   (X\times Y)_{b, \partial_0c*\partial_2c}
\end{align*}
where we used the fact that $X_{\sigma_0(\partial_1c)}=1$. 
\end{proof}
\begin{lem}
\label{sim:2}
Let $p,q$ be two paths such that $\partial_1p\perp\partial_1q$ and $\partial_0p\perp \partial_0q$. Then 
\[
(X\times Y)_{p,q}= (Y\times X)_{q,p} 
\]
\end{lem}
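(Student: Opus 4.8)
The plan is to use the homotopy invariance just established in Lemma~\ref{sim:1} to replace $p$ and $q$ by more convenient homotopic representatives, and then to exploit causality together with the localization of the morphisms. Concretely, I would first reduce to the situation in which the two paths are \emph{mutually causally disjoint}: writing $p=b_n*\cdots*b_1$ and $q=d_n*\cdots*d_1$ (possibly after padding the shorter one with degenerate $1$-simplices, as in the definition of $(X\times Y)_{p,q}$), I seek homotopic representatives with the four faces $\partial_1p,\partial_0p,\partial_1q,\partial_0q$ unchanged but with $|b_i|\perp|d_j|$ for all $i,j$. Since both $(X\times Y)_{p,q}$ and $(Y\times X)_{q,p}$ are invariant under homotopy by Lemma~\ref{sim:1}, it then suffices to prove the identity for such representatives.

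Once the supports are mutually causally disjoint, the computation is immediate. For each index $i$ we have $\partial_1b_i\subseteq|b_i|$, hence $\partial_1b_i\perp|d_i|$ by \textbf{K2}, so Lemma~\ref{c:2}(i) gives $\rho\loc{\partial_1b_i}_{|d_i|}(Y_{d_i})=Y_{d_i}$ and therefore $(X\times Y)_{b_i,d_i}=X_{b_i}Y_{d_i}$. Thus
\[
(X\times Y)_{p,q}=X_{b_n}Y_{d_n}\cdots X_{b_1}Y_{d_1}.
\]
Since $X_{b_i}\in\mA(|b_i|)$, $Y_{d_j}\in\mA(|d_j|)$ and $|b_i|\perp|d_j|$ for all $i,j$, causality lets me move all the $X$'s to the left and all the $Y$'s to the right, giving $X_pY_q$; the same computation with the roles of $X,Y$ and $p,q$ exchanged yields $(Y\times X)_{q,p}=Y_qX_p$, and a final appeal to causality (the factors of $X_p$ and $Y_q$ come from mutually causally disjoint local algebras) gives $X_pY_q=Y_qX_p$, which is the claim.

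The main obstacle is the geometric reduction of the first paragraph: producing, out of the mere spacelikeness of the endpoints $\partial_1p\perp\partial_1q$ and $\partial_0p\perp\partial_0q$, a pair of homotopic paths whose \emph{entire} supports are mutually causally disjoint. This is precisely where the structural assumptions on the poset must enter. I would build such representatives inductively, using \textbf{K4} to produce, at each stage, an element causally disjoint from the regions already visited by the two paths, and using the pathwise connectedness of the causal complements granted by \textbf{K6} to join consecutive choices by paths lying in the appropriate causal complement, so that the deformation does not spoil the commutation relations. The delicate point I expect to struggle with is carrying out this separation simultaneously for both paths while keeping the four faces fixed; this should require interleaving the transport of $q$ into the successive causal complements of the pieces of $p$, in the spirit of the charge-addition mechanism that \textbf{K4} is designed to supply.
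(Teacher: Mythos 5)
There is a genuine gap in the geometric reduction. You ask for homotopic representatives with $|b_i|\perp|d_j|$ for \emph{all} $i,j$, i.e.\ with the entire support of $\tilde p$ causally disjoint from the entire support of $\tilde q$. This is impossible in general: the hypotheses only give $\partial_1p\perp\partial_1q$ and $\partial_0p\perp\partial_0q$, and say nothing about the cross pairs $\partial_1p,\partial_0q$ and $\partial_0p,\partial_1q$. Since any representative with the same faces satisfies $\partial_1p\subseteq|b_1|$ and $\partial_0q\subseteq|d_n|$, whenever $\partial_1p\not\perp\partial_0q$ (e.g.\ in Minkowski space with $\partial_0q$ timelike to $\partial_1p$, a configuration fully compatible with the hypotheses) one cannot have $|b_1|\perp|d_n|$, by \textbf{K2}. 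Consequently the step where you ``move all the $X$'s to the left and all the $Y$'s to the right'' to obtain $X_pY_q=Y_qX_p$ is not available: it needs exactly those cross commutations that fail.

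The repair is to demand only the \emph{diagonal} disjointness $b_k\perp d_k$ for each $k$, which is what the paper (following Roberts, Lemma 23.6 of \cite{Rob90}) actually constructs by interleaving: using \textbf{K4} one path is advanced one step into the causal complement of the region currently occupied by the other (padding with degenerate $1$-simplices), and \textbf{K6} supplies the connecting paths. This weaker separation suffices because the algebraic manipulation never requires separating $X_p$ from $Y_q$ globally: one has $(X\times Y)_{b_k,d_k}=X_{b_k}Y_{d_k}$ by Lemma~\ref{c:2}(i) (your argument for this step is correct and uses only $b_k\perp d_k$), then one swaps each adjacent pair $X_{b_k}Y_{d_k}=Y_{d_k}X_{b_k}$ \emph{in place}, and finally reabsorbs $X_{b_k}$ as $\gamma\loc{\partial_1d_k}_{|b_k|}(X_{b_k})$ to recognize $(Y\times X)_{\tilde q,\tilde p}$. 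Your overall skeleton (homotopy invariance from Lemma~\ref{sim:1}, causal separation, localization of the morphisms) is the right one, but as written the proof rests on a separation that cannot be achieved and on a commutation pattern that demands it.
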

\begin{proof}
First of all we prove that we can find to paths $\tilde p=b_n*\cdots* b_1$ and $\tilde q=d_n*\cdots* d_1$ such that
\[
\partial_i \tilde p=\partial_ip \ , \ \partial_i \tilde q=\partial_iq \ \ for \ i=0,1 \ \ \text{and} \ \ 
b_k\perp d_k \ , \ k=1,\ldots, n \ . 
\]
In fact  by \textbf{K4}, there is $o\subset \partial_1 p$ and $a\in K$ 
such that $a\perp o,\partial_0p$. Then we define  $b_1:=(o,\partial_1p;\partial_1p)$ and as $d_1$ the degenerate 1-simplex $\sigma_0(\partial_1q)$.
Since $|b_1|=\partial_1p\perp\partial_1q=|d_1|$ and the causal complement of the elements of $K$ is pathwise connected \textbf{K6} the proof,  
from this point, is the same as the proof \cite{Rob90}[Lemma 23.6] given 
by Roberts.

As $\tilde p \sim p$ and $\tilde q\sim q$, by \textbf{K7}, we have 
\begin{align*}
(X\times Y)_{p,q} & \stackrel{Lem.\,\ref{sim:1}}{=} (X\times Y)_{\tilde p,\tilde q}
 = 
(X\times Y)_{b_n,d_n}\cdots (X\times Y)_{b_1,d_1} \\
& = 
X_{b_n}\rho\loc{\partial_1 b_n}_{|d_n|}(Y_{d_n})\cdots  X_{b_1}\rho\loc{\partial_1 b_1}_{|d_1|}(Y_{d_1})\\
&=
X_{b_n} Y_{d_n} \cdots  X_{b_1} Y_{d_1} =  Y_{d_n}X_{b_n} \cdots   Y_{d_1} X_{b_1} \\
& =
Y_{d_n}\gamma\loc{\partial_1d_n}_{|b_n|}(X_{b_n}) \cdots   Y_{d_1}\gamma\loc{\partial_1d_1}_{|b_1|}( X_{b_1}) \\
& =  
(Y\times X)_{\tilde q,\tilde p}  \stackrel{Lem.\,\ref{sim:1}}{=}
(Y\times X)_{q,p} 
\end{align*}
where $\gamma$ denotes the morphism associated with $Y$. 
\end{proof}
\begin{lem}
\label{sim:3}
Given $b,\tilde b$ and $b_1,b_2$ with $\partial_0b=\partial_1b_1$ and 
$\partial_0\tilde b=\partial_1b_2$, then  
\[
\alpha^{-1}_\lambda\Big(X_{\lambda b_1}\times Y_{\lambda b_2}\Big)\,  (X_b(\lambda) \times Y_{\tilde b}(\lambda))= 
(X_{b_1}(\lambda)\times Y_{b_2}(\lambda)) \,  (X_{b}\times Y_{\tilde b})
\, .
\] 
\end{lem}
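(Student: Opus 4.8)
The plan is to expand both members by the definition of the product $\times$ (reading $X_g\times Y_h$ as $(X\times Y)_{g,h}(e)$ and $X_g(\lambda)\times Y_h(\lambda)$ as $(X\times Y)_{g,h}(\lambda)$) and to reduce the whole identity to one elementary consequence of \eqref{b:2}, used once for $X$ and once for $Y$. The cornerstone is the following two-fold reading of \eqref{b:2}: for a covariant $1$-cocycle $W$ and composable $1$-simplices $g,h$ with $\partial_0h=\partial_1g$, realised as the faces $\partial_0c=g$, $\partial_2c=h$ of a $2$-simplex $c$, evaluating \eqref{b:2} at $\sigma=e$ gives $\alpha^{-1}_\lambda(W_{\lambda g})\,W_h(\lambda)=W_{\partial_1c}(\lambda)$, whereas evaluating it at $\lambda=e$ with $\sigma=\lambda$ gives $W_g(\lambda)\,W_h=W_{\partial_1c}(\lambda)$; hence
\[
\alpha^{-1}_\lambda(W_{\lambda g})\,W_h(\lambda)\;=\;W_g(\lambda)\,W_h .
\]
I will apply this to $W=X$ with $(g,h)=(b_1,b)$ and to $W=Y$ with $(g,h)=(b_2,\tilde b)$, the hypotheses $\partial_0b=\partial_1b_1$ and $\partial_0\tilde b=\partial_1b_2$ being exactly what makes the two auxiliary $2$-simplices composable.

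First I would write the left-hand side as
\[
\alpha^{-1}_\lambda\!\big(X_{\lambda b_1}\,\rho\loc{\lambda\partial_1b_1}_{\lambda|b_2|}(Y_{\lambda b_2})\big)\;X_b(\lambda)\,\rho\loc{\partial_1b}_{|\tilde b|}(Y_{\tilde b}(\lambda)),
\]
distribute $\alpha^{-1}_\lambda$, and push the factor $\alpha^{-1}_\lambda(\rho\loc{\lambda\partial_1b_1}_{\lambda|b_2|}(Y_{\lambda b_2}))$ to the right of $X_b(\lambda)$ by \lemref{c:2}(iv), in the form $\alpha^{-1}_\lambda(\rho\loc{\lambda\partial_0b}_{\lambda a}(B))\,X_b(\lambda)=X_b(\lambda)\,\rho\loc{\partial_1b}_a(\alpha^{-1}_\lambda(B))$ with $a=|b_2|$, $B=Y_{\lambda b_2}$; this is legitimate precisely because $\partial_1b_1=\partial_0b$. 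The left-hand side becomes
\[
\big(\alpha^{-1}_\lambda(X_{\lambda b_1})\,X_b(\lambda)\big)\,\rho\loc{\partial_1b}_{|b_2|}\!\big(\alpha^{-1}_\lambda(Y_{\lambda b_2})\big)\,\rho\loc{\partial_1b}_{|\tilde b|}(Y_{\tilde b}(\lambda)).
\]

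Next I would simplify the bracketed $X$-prefactor by the two-fold reading for $X$, turning it into $X_{b_1}(\lambda)\,X_b$. For the two $\rho\loc{\partial_1b}$-factors I would use that $\rho\loc{\partial_1b}$ is a $*$-morphism: since $|b_2|$ and $|\tilde b|$ sit in the support $|\tilde c|$ of the $Y$-auxiliary $2$-simplex, the compatibility \eqref{endAK} fuses them into $\rho\loc{\partial_1b}_{|\tilde c|}\big(\alpha^{-1}_\lambda(Y_{\lambda b_2})\,Y_{\tilde b}(\lambda)\big)$, and the two-fold reading for $Y$ rewrites the argument as $Y_{b_2}(\lambda)\,Y_{\tilde b}$; splitting the morphism again gives
\[
X_{b_1}(\lambda)\,X_b\,\rho\loc{\partial_1b}_{|b_2|}(Y_{b_2}(\lambda))\,\rho\loc{\partial_1b}_{|\tilde b|}(Y_{\tilde b}).
\]
It then remains to expand the right-hand side and slide $\rho\loc{\partial_1b_1}_{|b_2|}(Y_{b_2}(\lambda))$ past $X_b$ through the transport identity $\rho\loc{\partial_0b}_a(A)\,X_b=X_b\,\rho\loc{\partial_1b}_a(A)$ (this is \lemref{c:2}(iv) at $\lambda=e$, using $\partial_1b_1=\partial_0b$), which produces exactly the same expression.

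The distribution of $\alpha^{-1}_\lambda$ and the morphism bookkeeping are routine; the one delicate point is the construction of the two auxiliary $2$-simplices $c$ ($\partial_0c=b_1$, $\partial_2c=b$) and $\tilde c$ ($\partial_0\tilde c=b_2$, $\partial_2\tilde c=\tilde b$), which is what both the two-fold reading of \eqref{b:2} and the fusion of the $\rho\loc{\partial_1b}$-factors require. When the supports are nested these are immediate (take $|c|$ the larger of $|b|,|b_1|$, $|\tilde c|$ the larger of $|\tilde b|,|b_2|$, and the evident composite $1$-faces), and this is the situation in which the lemma is used in the permutation-symmetry argument of \secref{sim}, where the simplices come from paths cut into pieces with compatible supports; I therefore expect this geometric step, rather than the algebra, to be the only thing needing care.
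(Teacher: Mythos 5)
Your algebraic skeleton coincides with the paper's: expand both sides via the definition of $\times$, slide the $\rho$-factor through $X_b(\lambda)$ with Lemma~\ref{c:2}(iv), reduce everything to the identity $\alpha^{-1}_\lambda(W_{\lambda g})\,W_h(\lambda)=W_g(\lambda)\,W_h$ applied once to $X$ and once to $Y$, and recombine; that identity is indeed what drives the paper's computation. The gap is in how you obtain it, and in the ``fusion'' of the two $\rho\loc{\partial_1b}$-factors. Both steps presuppose a $2$-simplex $c$ with $\partial_0c=g$, $\partial_2c=h$, hence an element $|c|\in K$ containing both $|g|$ and $|h|$ together with a composite $1$-face supported in it. The composability condition $\partial_0h=\partial_1g$ only provides a $0$-simplex contained in \emph{both} supports, not an element of $K$ containing both; since $K$ is not assumed upward directed (spacelike cones, hypercones), such a $|c|$ need not exist. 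Your fallback ``when the supports are nested'' covers neither the lemma as stated nor its actual use in \S\ref{sim}, where $b$ is an arbitrary $1$-simplex and $b_1$ is the first leg of an independently chosen path issuing from $\partial_0b$.

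The repair is the degenerate-simplex bookkeeping the paper uses, and it costs nothing. The degenerate $2$-simplex supported in $|h|$ gives $W_h(\lambda)=W_{\partial_0h}(\lambda)\,W_h$, the one supported in $|g|$ gives $\alpha^{-1}_\lambda(W_{\lambda g})\,W_{\partial_1g}(\lambda)=W_g(\lambda)$, and since $\partial_0h=\partial_1g$ while the vertex evaluation $W_o(\lambda)$ of \eqref{b:5a} is independent of the support of the degenerate $1$-simplex, these concatenate to $\alpha^{-1}_\lambda(W_{\lambda g})\,W_h(\lambda)=W_g(\lambda)\,W_h$ with no common upper bound required. Likewise, do not fuse $\rho\loc{\partial_1b}_{|b_2|}(\cdot)$ and $\rho\loc{\partial_1b}_{|\tilde b|}(\cdot)$ into one morphism on a common support: keep them separate and insert the cancelling pair $\rho\loc{\partial_1b}_{\partial_1b_2}\big(Y_{\partial_1b_2}(\lambda)^{-1}\big)\,\rho\loc{\partial_1b}_{\partial_0\tilde b}\big(Y_{\partial_0\tilde b}(\lambda)\big)=1$, legitimate precisely because $\partial_0\tilde b=\partial_1b_2$. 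With these two substitutions your computation becomes the paper's proof.
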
  
\begin{proof}
\begin{align*}
\alpha^{-1}_\lambda &\big(X_{\lambda b_1}\times  Y_{\lambda b_2}\big)\,  (X_b(\lambda)\times Y_{\tilde {b}}(\lambda)) =
\alpha^{-1}_\lambda\big(X_{\lambda b_1}\rho\loc{\lambda \partial_1b_1}_{|\lambda b_2|}(Y_{\lambda b_2})\big)\,  
X_{b}(\lambda)\, \rho\loc{\partial_1b}_{|\tilde b|}(Y_{\tilde b}(\lambda))\\
& = 
\alpha^{-1}_\lambda\big(X_{\lambda b_1}\big)\, \alpha^{-1}_\lambda\big(\rho\loc{\lambda\partial_1b_1}_{|\lambda b_2|}(Y_{\lambda b_2})\big)\, 
X_{b}(\lambda)\, \rho\loc{\partial_1b}_{|\tilde b|}(Y_{\tilde b}(\lambda))\\
& \stackrel{Lem.\, \ref{c:2} iv)}{=} 
\alpha^{-1}_\lambda\big(X_{\lambda b_1}\big)\, X_{b}(\lambda)\, 
\rho\loc{\partial_1b}_{|b_2|}\big(\alpha^{-1}_\lambda\big(Y_{\lambda b_2}\big)\big)\,  
\rho\loc{\partial_1b}_{|\tilde b|}(Y_{\tilde b}(\lambda)) \\
& =
\alpha^{-1}_\lambda\big(X_{\lambda b_1}\big) \, X_{\partial_0 b}(\lambda) \,X_{b}\,\rho\loc{\partial_1b}_{|b_2|}\big(\alpha^{-1}_\lambda\big(Y_{\lambda b_2}\big)\big)\, 
\rho\loc{\partial_1b}_{|\tilde b|}(Y_{\tilde b}(\lambda))\\
& =
X_{b_1}(\lambda)\,  X_{b}\, \rho\loc{\partial_1b}_{|b_2|}\big(\alpha^{-1}_\lambda\big(Y_{\lambda b_2}\big)\big)\,  
\rho\loc{\partial_1b}_{|\tilde b|}(Y_{\tilde b}(\lambda)) \, . 
\end{align*}
We now observe that $Y_{b_2}(\lambda)Y_{\sigma_0(\partial_1b_2)}^{-1}(\lambda)=\alpha^{-1}_\lambda\big(Y_{\lambda b_2})$ and $Y_{\tilde b}(\lambda)= Y_{\sigma(\partial_0\tilde b)}(\lambda)\, Y_{\tilde b}$.   
Inserting these identities in the last expression gives
\begin{align*}
\alpha^{-1}_\lambda &\big(X_{\lambda b_1}\times  Y_{\lambda b_2}\big)\,  (X_b(\lambda)\times Y_{\tilde {b}}(\lambda)) =\\
%
& = X_{b_1}(\lambda)\,  X_{b}\,  \rho\loc{\partial_1b}_{|b_2|}\big(Y_{b_2}(\lambda)Y_{\sigma_0(\partial_1b_2)}^{-1}(\lambda)\big)\, \rho\loc{\partial_1b}_{|\tilde b|}(Y_{\sigma(\partial_0\tilde b)}(\lambda)\, Y_{\tilde b}) \\
& = X_{b_1}(\lambda)\,  X_{b}\,  \rho\loc{\partial_1b}_{|b_2|}\big(Y_{b_2}(\lambda)\big)\,
\rho\loc{\partial_1b}_{\partial_1b_2}\big(Y_{\sigma_0(\partial_1b_2)}^{-1}(\lambda)\big)\, 
\rho\loc{\partial_1b}_{\partial_0\tilde b}(Y_{\sigma(\partial_0\tilde b)}(\lambda)) \, 
\rho\loc{\partial_1b}_{|\tilde b|}(Y_{\tilde b})\\
& = X_{b_1}(\lambda)\,  X_{b}\,  \rho\loc{\partial_1b}_{|b_2|}\big(Y_{b_2}(\lambda)\big)\,
\rho\loc{\partial_1b}_{|\tilde b|}(Y_{\tilde b})\\
& = X_{b_1}(\lambda)\, \rho\loc{\partial_0b}_{|b_2|}\big(Y_{b_2}(\lambda)\big) \,X_{b}\, 
\rho\loc{\partial_1b}_{|\tilde b|}(Y_{\tilde b})\\
& = X_{b_1}(\lambda)\times Y_{b_2}(\lambda)\,  X_{b}\times Y_{\tilde b} \ .
\end{align*}
where we used the identity
\[
\rho\loc{\partial_1b}_{\partial_1b_2}\big(Y_{\sigma_0(\partial_1b_2)}^{-1}(\lambda)\big)\, 
\rho\loc{\partial_1b}_{\partial_0\tilde b}(Y_{\sigma(\partial_0\tilde b)}(\lambda)) =1 \, 
\]
which holds because $\partial_0\tilde b=\partial_1b_2$.
\end{proof}
Note that the above identity is equivalent to 
\begin{equation}
\label{sim:3a}
(X_b(\lambda) \times Y_{\tilde b}(\lambda))\, (X_{b}\times Y_{\tilde b})^* = 
\alpha^{-1}_\lambda\Big(X_{\lambda b_1}\times Y_{\lambda b_2}\Big)^*\,(X_{b_1}(\lambda)\times Y_{b_2}(\lambda)) \ .
\end{equation}

We are ready to prove the existence of a permutation symmetry. Given $X,Y$ two covariant 1-cocycles define 
\begin{equation}
\label{sim:4}
\epsilon(X,Y)_a:= (Y\times X)^*_{q,p} \,  (X\times Y)_{p,q} \ , \qquad a\in\Sigma_0(K) \ ,
\end{equation}
where $p,q$ are two paths with $\partial_1p=\partial_1q=a$ and $\partial_0p\perp\partial_1q$. \smallskip

We first prove that this definition is independent of the choice  of paths. Let $p_1$ and $q_1$ be two paths satisfying the same properties as 
$p$ and $q$. Clearly $p\sim  p*\overline{p}_1*p_1$ and $q\sim  q*\overline{q}_1*q_1$. 
By Lemma \ref{sim:1} we have 
\[
(X\times Y)_{p,q}= (X\times Y)_{p*\overline{p}_1*p_1,q*\overline{q}_1*q_1}=
(X\times Y)_{p*\overline{p}_1,q*\overline{q}_1}\, (X\times Y)_{p_1,q_1}
\]
Since  $(X\times Y)_{p*\overline{p}_1,q*\overline{q}_1}= (Y\times X)_{q*\overline{q}_1, p*\overline{p}_1}$
because of Lemma \ref{sim:2}, we get  
\[
(Y\times X)^*_{q,p} (X\times Y)_{p,q}= (Y\times X)^*_{q_1,p_1} (X\times Y)_{p_1,q_1} \ ,
\]
showing the independence of the choice of paths. 
\begin{prop}
\label{sim:5}
Given $X,Y$ covariant 1-cocycles, the operators $\epsilon(X,Y)$ defined by \eqref{sim:4} yield a permutation symmetry for $Z^1_c(\mA_K)$.
\end{prop}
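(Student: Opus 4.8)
The plan is to verify that the family $\{\epsilon(X,Y)\}$ defined by \eqref{sim:4} satisfies the defining axioms of a permutation symmetry on the tensor $\rC^*$-category $Z^1_c(\mA_K)$: each $\epsilon(X,Y)$ is a unitary arrow in $(X\otimes Y,Y\otimes X)$; it is natural in both arguments; it obeys the two hexagon (coherence) identities relating it to the tensor product; and it is involutive, $\epsilon(Y,X)\cdot\epsilon(X,Y)=1_{X\otimes Y}$, so that we obtain a symmetry rather than a mere braiding. Since we have already shown that $\epsilon(X,Y)_a$ is independent of the choice of the paths $p,q$ (with $\partial_1p=\partial_1q=a$ and $\partial_0p\perp\partial_0q$), throughout the argument I am free to re-route these paths as convenient, and this freedom is exactly what drives the remaining steps.

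First I would settle unitarity and localization. Unitarity is immediate, since each factor $(X\times Y)_{p,q}$ and $(Y\times X)_{q,p}$ is a product of the unitaries $X_b(\lambda)$ and of their images under the $*$-morphisms $\rho\loc{o}$, hence unitary. For the localization $\epsilon(X,Y)_a\in\mA(a)$ I would use relative Haag duality: as $\epsilon(X,Y)_a\in\mA(K)''$, it suffices to show that it commutes with $\mA(c)$ for every $c\perp a$. Exploiting path-independence, I choose the supports $|p|,|q|$ and the targets $\partial_0p,\partial_0q$ inside the causal complement $c^\perp$; this is possible because $c^\perp$ is pathwise connected by \textbf{K6} and because \textbf{K4} provides the two mutually causally disjoint regions needed as targets. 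Then every $1$-simplex of $p$ and $q$ is causally disjoint from $c$, so all the cocycle values and their $\rho\loc{o}$-images commute with $\mA(c)$, and relative Haag duality yields $\epsilon(X,Y)_a\in\mA(a)$.

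The heart of the matter is the intertwining relation $\epsilon(X,Y)\in(X\otimes Y,Y\otimes X)$, i.e. \eqref{b:7} in the form
\[
\alpha^{-1}_\lambda(\epsilon(X,Y)_{\lambda\partial_0b})\,(X\otimes Y)_b(\lambda)=(Y\otimes X)_b(\lambda)\,\epsilon(X,Y)_{\partial_1b}.
\]
The key observation is that $(X\otimes Y)_b(\lambda)=(X\times Y)_{b,b}(\lambda)$ by \eqref{tens:1}. After inserting the definition \eqref{sim:4} and choosing for $\epsilon(X,Y)_{\partial_1b}$ paths $p,q$ issued from $\partial_1b$, and for $\epsilon(X,Y)_{\lambda\partial_0b}$ the paths $\lambda p*\lambda\bar b$ and $\lambda q*\lambda\bar b$ issued from $\lambda\partial_0b$ (legitimate by path-independence and \textbf{K5}), the identity collapses, via homotopy invariance (Lemma \ref{sim:1}), to the single-simplex relation \eqref{sim:3a} of Lemma \ref{sim:3}. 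The case $\lambda=e$ is the classical Roberts computation resting on Lemma \ref{sim:2}, whereas the genuinely new $\lambda$-dependence is supplied precisely by Lemma \ref{sim:3}. I expect this verification of the covariant intertwining relation to be the main obstacle, since it is here that the covariance of the cocycles must be threaded through the exchange, and it is the step that has no counterpart in Roberts' original analysis.

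Finally I would dispatch naturality, tensoriality and the involution. Naturality, $\epsilon(X',Y')\cdot(s\otimes t)=(t\otimes s)\cdot\epsilon(X,Y)$ for $s\in(X,X')$ and $t\in(Y,Y')$, follows by commuting $s$ and $t$ through the $\times$-products by means of their own intertwining relations \eqref{b:7}, after re-routing paths. The two hexagon identities, relating $\epsilon(X,Y\otimes Z)$ and $\epsilon(X\otimes Y,Z)$ to the appropriate composites, follow from the associativity of the $\times$-product along concatenated paths together with Lemma \ref{sim:1}, exactly as in the DHR/Roberts template. The involutive property is then immediate from path-independence: choosing for $\epsilon(Y,X)_a$ the paths $p,q$ with their roles interchanged gives $\epsilon(Y,X)_a=(X\times Y)^*_{p,q}\,(Y\times X)_{q,p}$, whence
\[
\epsilon(Y,X)_a\,\epsilon(X,Y)_a=(X\times Y)^*_{p,q}\,(Y\times X)_{q,p}\,(Y\times X)^*_{q,p}\,(X\times Y)_{p,q}=1,
\]
so that $\epsilon$ squares to the identity. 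These last verifications are routine once the localization and the covariant intertwining relation are in hand.
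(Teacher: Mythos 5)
Your proposal is correct and follows essentially the same route as the paper: reduce the covariant intertwining relation, via path-independence and homotopy invariance (Lemma \ref{sim:1}), to iterated applications of Lemma \ref{sim:3} and its reformulation \eqref{sim:3a}, with Lemma \ref{sim:2} handling the exchange at the spacelike-separated endpoints, and dispatch unitarity, naturality, the coherence identities and involutivity by the standard DHR/Roberts arguments. The only addition is your explicit verification that $\epsilon(X,Y)_a\in\mA(a)$ via relative Haag duality and re-routing of paths into $c^\perp$, which the paper leaves implicit but which is a correct and worthwhile supplement.
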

\begin{proof}
Given a 1-simplex $b$, according to the definition of intertwiner \ref{b:7},
and since $\eps$ does not depend on the choice of paths, for each $\lambda\in \mP$, 
we take two paths of the form $\lambda p, \lambda q$, where 
$p,q$ are paths paths satisfying the properties of the definition \eqref{sim:4} with respect to $\partial_0b$.  It is evident that $\lambda p,\lambda q$ satisfy 
the properties of the definition \eqref{sim:4} with respect to 
$\lambda \partial_0 b$. Then applying the Lemma \ref{sim:3} once, we have  
\begin{align*}
\alpha^{-1}_\lambda & (\epsilon(X,Y)_{\lambda \partial_0 b}) (X\otimes Y)_b(\lambda)  = 
\alpha^{-1}_\lambda ((Y\times X)^*_{\lambda q,\lambda p} \cdot (X\times Y)_{\lambda p,\lambda q}) (X_b(\lambda)\times Y_b(\lambda))\\
& = \alpha^{-1}_\lambda ((Y\times X)^*_{\lambda q,\lambda p}) \cdot 
\alpha^{-1}_\lambda ((X\times Y)_{\lambda p,\lambda q}) (X_b(\lambda)\times Y_b(\lambda))\\
& = \alpha^{-1}_\lambda ((Y\times X)^*_{\lambda q,\lambda p}) \cdot 
\alpha^{-1}_\lambda ((X\times Y)_{\lambda b_n,\lambda d_n}) \cdots \alpha^{-1}_\lambda ((X\times Y)_{\lambda b_1,\lambda d_1})(X_b(\lambda)\times Y_b(\lambda))\\
& \stackrel{Lem. \ref{sim:3} }=
\alpha^{-1}_\lambda ((Y\times X)^*_{\lambda q,\lambda p}) \cdot 
\alpha^{-1}_\lambda ((X\times Y)_{\lambda b_n,\lambda d_n}) \cdots 
(X_{b_1}(\lambda)\times Y_{d_1}(\lambda)) X_b\times Y_b \ . 
\end{align*}
Applying Lemma \ref{sim:3}  iteratively, gives 
\begin{align*}
\alpha^{-1}_\lambda  &(\epsilon(X,Y)_{\lambda \partial_0 b}) (X\otimes Y)_b(\lambda) 
=  \\
&= \alpha^{-1}_\lambda ((Y\times X)^*_{\lambda q,\lambda p}) \cdot
 X_{b_n}(\lambda)\times Y_{d_n}(\lambda)\cdot  
(X_{b_{n-1}}\times Y_{d_{n-1}})\cdots  X_b\times Y_b\\
& =
\alpha^{-1}_\lambda ((Y\times X)^*_{\lambda q,\lambda p}) \cdot 
 X_{\partial_0 b_n}(\lambda)\times Y_{\partial_0 d_n}(\lambda)\cdot  
   (X\times Y)_{p*b, q*b}\\
   & =
 \alpha^{-1}_\lambda ((Y\times X)^*_{\lambda q,\lambda p})   \cdot 
 Y_{\partial_0 d_n}(\lambda)\times X_{\partial_0 b_n}(\lambda)\cdot  
   (X \times Y)_{p*b, q*b} \ ,
\end{align*}
where  being $\partial_0 d_n \perp  \partial_0 b_n$ we used 
$X_{\partial_0 b_n}(\lambda)\times Y_{\partial_0 d_n}(\lambda)=
Y_{\partial_0 d_n}(\lambda)\times X_{\partial_0 b_n}(\lambda)$.
Now, by iteratively applying to the left-hand product the equation \eqref{sim:3a}, which is nothing but the Lemma \eqref{sim:3}, we arrive at
\begin{align*}
\alpha^{-1}_\lambda  (\epsilon(X,Y)_{\lambda \partial_0 b}) (X\otimes Y)_b(\lambda) 
&=   
 Y_{b}(\lambda)\times X_{b}(\lambda)\cdot (Y\times X)^*_{q*b,p*b}   \cdot  
   (X \times Y)_{p*b, q*b}\\
& = (Y\otimes X)_{b}(\lambda)\, \eps(X,Y)_{\partial_1b} \ , 
\end{align*}
where  
\[
\eps(X,Y)_{\partial_1b}=(Y\times X)^*_{q*b,p*b} \cdot (X \times Y)_{p*b, q*b}
\]
because $\epsilon$ does not depends on the choice of paths.  The proof that 
\[
 \epsilon(\tilde X,\tilde Y)_a \,(t\otimes s)_a = (s\otimes t)_a\, \epsilon (X,Y)_a \ , \qquad a\in\Sigma_0(K) 
\]  
for any $t\in(X,\tilde X)$ and $s\in(Y,\tilde Y)$ follows by a similar reasoning.  
We omit the rest of the properties of permutation symmetry since they follow by standard calculations.
\end{proof}

\subsection{Statistics and conjugation}
\label{con}
In this section we select the subcategory of covariant 1-cocycles having finite statistics and prove 
that any object of this category has conjugates in the sense of symmetric tensor $\mathrm{C}^*$-categories. 
In our context, the \emph{conjugate} of a covariant 1-cocycle $X$ 
is a covariant 1-cocycle $\bar X$ for which there exists a pair of arrows  $r\in (I,\bar X\otimes X)$ and 
$\bar r\in (I,X\otimes \bar X)$ that satisfy the \emph{conjugate equations}  
\begin{equation}
\label{ceq}
\bar r^*\otimes 1_X\cdot 1_X\otimes r= 1_X \ \ , \ \ r^*\otimes 1_{\bar X}\cdot 1_{\bar{X}}\otimes \bar{r}= 1_{\bar X} \, .
\end{equation}
The key result is that simple covariant 1-cocycles, i.e.\, those obeying Fermi or Bose statistics, 
have conjugates. This will allows us to identify the subcategory of objects having finite statistics 
and to prove the existence of conjugates.\smallskip

\begin{defn}
A covariant 1-cocycle $X$ is said to be \textbf{simple} whenever 
\begin{equation}
\label{con:0}
\epsilon(X,X)= \chi(X) \cdot 1_{X\otimes X} \ \ , \ \ \chi(X)\in\{1,-1\} \ . 
\end{equation}
\end{defn}
We shall see at the end of this section that simple covariant 1-cocycles obey either  Bose or Fermi statistics,
depending on whether the value of $\chi(X)$ is $1$ or $-1$.\smallskip

We now draw on a consequence of this relation. According to the definition of permutation symmetry \eqref{sim:4}, for any pair $o\perp \tilde o$  
we take a path $p_{o,\tilde o}:o\to\tilde o$ and as $q$ the degenerate 1-simplex $\sigma_0(o)=(o,o;o)$ and observe that 
\[
 \chi(X) = \rho\loc{o}_{|b_n|}(X^*_{b_n})\cdots \rho\loc{o}_{|b_1|}(X^*_{b_1}) \,  X_{p_{o,\tilde o}}  \iff 
  \rho\loc{o}_{|b_1|}(X_{b_1})\cdots \rho\loc{o}_{|b_n|}(X_{b_n}) = \chi(X)\, X_{p_{o,\tilde o}} \, .
\]
Changing the role, in this relation, of $o$ and $\tilde o$ and passing to the adjoint we arrive 
\begin{equation}
\label{con:1}
\rho\loc{o}_{|b_1|}(X_{b_1})\cdots \rho\loc{o}_{|b_n|}(X_{b_n}) = \chi(X) X_{p_{o,\tilde o}}=
\rho\loc{\tilde o}_{|b_1|}(X_{b_1})\cdots \rho\loc{\tilde o}_{|b_n|}(X_{b_n})
\end{equation}
Since by $\textbf{K4}$ there is $a\perp o,\tilde o$,  we can assume that $p_{o,\tilde o}$ is  in the causal complement of $a$. So \eqref{con:1} reduces to  
\begin{equation}
\label{con:2}
\rho\loc{o}_{|p_{o,\tilde o}|}(X_{p_{o,\tilde o}})= \chi(X) X_{p_{o,\tilde o}} = \rho\loc{\tilde o}_{|p_{o,\tilde o}|}(X_{p_{o,\tilde o}})
\end{equation}
Now our aim is to prove that any simple covariant 1-cocycle admits a conjugate covariant 1-cocycle. 
Here, because of the fact that morphisms associated to 1-cocycles are defined locally, we have 
to follows a different route with respect to the usual one (see \cite{DHR71,DHR74,Ruz05}).
We first prove that the Roberts 1-cocycle defined by a covariant cocycle  
has a conjugate. Then we prove that the morphism associated with this conjugate 1-cocycle 
inverts the morphism defined by the cocycle $X$.
This, finally, will allow us to define the conjugate of a covariant 1-cocycle.

\begin{lem}
Let $X$ be a simple covariant 1-cocycle. For any 1-simplex $b$  the definition 
\begin{equation}
\bar{X}_b:= X_{p_{a,\partial_0b}} X_{p_{\partial_1b, a}} \ , \qquad  a\in\Sigma_0(K) \ , \ a\perp |b|
\end{equation}
is independent of the choice of $a$ and defines a Roberts 1-cocycle.
\end{lem}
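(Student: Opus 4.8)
The plan is to verify, in order, that $\bar X_b$ is unitary, independent of the auxiliary point $a$, localized in $|b|$, and that it satisfies the Roberts 1-cocycle relation \eqref{roberts-cocycle}. Unitarity is immediate, since $\bar X_b$ is a product of the unitaries $X_{b_i}$ appearing in the two paths. For the independence of $a$, I would take a second point $a'\perp|b|$ and a path $\ell:a'\to a$, and use that the underlying Roberts 1-cocycle is homotopy invariant and that $K$ is simply connected (\textbf{K7}), so that $X$ evaluated on a path depends only on its endpoints. Both $X_{p_{a,\partial_0b}}X_{p_{\partial_1b,a}}$ and $X_{p_{a',\partial_0b}}X_{p_{\partial_1b,a'}}$ are then equal to the value of $X$ on a single path $\partial_1b\to\partial_0b$ (obtained by composing the two factors through $a$, resp. $a'$), hence they coincide. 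This is the routine part.

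The step I expect to be the main obstacle is the localization $\bar X_b\in\mA(|b|)$. By relative Haag duality it suffices to prove $\bar X_b\in\mA(K)''\cap\mA(d)'$ for every $d\perp|b|$; membership in $\mA(K)''$ is clear because $\bar X_b$ is a product of elements of $\mA(K)$. Fix such a $d$. Since $|b|\perp d$, property \textbf{K4} furnishes a point $a\perp|b|,d$; moreover $\partial_0b,\partial_1b\subseteq|b|$ give $\partial_0b,\partial_1b\perp d$ by \textbf{K2}, so $\partial_0b,\partial_1b,a$ all lie in $d^\perp$. As $d^\perp$ is pathwise connected (\textbf{K6}), I may choose the two paths defining $\bar X_b$ to run entirely inside $d^\perp$. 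Then every 1-simplex $b_i$ of these paths has $|b_i|\perp d$, so $X_{b_i}\in\mA(|b_i|)$ commutes with $\mA(d)$ by causality, whence $\bar X_b$ commutes with $\mA(d)$. By the independence already established, this is the one and the same operator $\bar X_b$ for each $d$, so $\bar X_b\in\bigcap_{d\perp|b|}\mA(d)'=\mA(|b|^\perp)'$, and relative Haag duality gives $\bar X_b\in\mA(|b|)$.

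Finally, for the 1-cocycle relation I would fix a single pole $a\perp|c|$, which exists since $|c|^\perp\neq\emptyset$ by \textbf{K3}, and which then satisfies $a\perp|\partial_ic|$ by \textbf{K2}. Using the simplicial identities $\partial_{00}c=\partial_{01}c$, $\partial_{10}c=\partial_{02}c$, $\partial_{11}c=\partial_{12}c$ (the cases of $\partial_{ij}=\partial_{j(i+1)}$ with $i\ge j$) and writing $w=\partial_{00}c$, $u=\partial_{10}c$, $v=\partial_{11}c$, the three faces expand, all with the same pole $a$, as $\bar X_{\partial_0c}=X_{p_{a,w}}X_{p_{u,a}}$, $\bar X_{\partial_2c}=X_{p_{a,u}}X_{p_{v,a}}$ and $\bar X_{\partial_1c}=X_{p_{a,w}}X_{p_{v,a}}$. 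Then
\[
\bar X_{\partial_0c}\,\bar X_{\partial_2c}=X_{p_{a,w}}\,\big(X_{p_{u,a}}X_{p_{a,u}}\big)\,X_{p_{v,a}}=X_{p_{a,w}}X_{p_{v,a}}=\bar X_{\partial_1c},
\]
where $X_{p_{u,a}}X_{p_{a,u}}=1_{\mH}$ because $p_{u,a}*p_{a,u}$ is a closed loop at $a$ and $X$ is trivial on loops in the simply connected poset $K$ (\textbf{K7}). Together with unitarity and localization, this shows that $\bar X$ is a Roberts 1-cocycle, completing the proof.
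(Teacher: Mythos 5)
Your verification of the cocycle identity \eqref{roberts-cocycle} with a single pole $a\perp|c|$, and your localization argument via relative Haag duality, both match the paper's proof. The genuine gap is in the independence-of-$a$ step, which is in fact the heart of the lemma. The product $X_{p_{a,\partial_0b}}X_{p_{\partial_1b,a}}$ does \emph{not} telescope into the value of $X$ on a single path from $\partial_1b$ to $\partial_0b$: with the composition rule used throughout the section, $X_{p_{x,y}}X_{p_{y,z}}=X_{p_{x,z}}$, the auxiliary region $a$ sits at the two \emph{outer} positions of the product while the inner indices $\partial_0b$ and $\partial_1b$ do not match, so no cancellation ``through $a$'' occurs. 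This is not a pedantic point about notation: if the product did reduce to $X$ of a single path $\partial_1b\to\partial_0b$, homotopy invariance and \textbf{K7} would give $\bar X_b=X_b$ for every $b$, hence $\bar\rho=\rho$, and the subsequent Lemma \ref{con:2a} would assert $\rho\loc{o}_{\tilde o}\circ\rho\loc{o}_{\tilde o}=\mathrm{id}$ --- the conjugate would be trivial and the construction absurd. Heuristically, in the field picture $X_{p_{x,y}}\sim\psi^*_x\psi_y$ one has $\bar X_b\sim\psi^*_a\,\psi_{\partial_0b}\psi^*_{\partial_1b}\,\psi_a$, the transporter of the \emph{anti}charge, which is a genuinely different operator from $X_b\sim\psi^*_{\partial_0b}\psi_{\partial_1b}$.

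A clear symptom is that your argument never uses the hypothesis that $X$ is simple, whereas this is precisely where it enters. The paper compares two admissible choices $a\perp\tilde a$ by writing $X_{p_{a,\partial_0b}}X_{p_{\partial_1b,a}}=X_{p_{a,\tilde a}}\,X_{p_{\tilde a,\partial_0b}}X_{p_{\partial_1b,\tilde a}}\,X_{p_{\tilde a,a}}$ and then invokes \eqref{con:1}, i.e.\ $\rho\loc{o}_{|b_1|}(X_{b_1})\cdots\rho\loc{o}_{|b_n|}(X_{b_n})=\chi(X)X_{p}=\rho\loc{\tilde o}_{|b_1|}(X_{b_1})\cdots\rho\loc{\tilde o}_{|b_n|}(X_{b_n})$ for the two endpoints of $p=p_{a,\tilde a}$, together with the intertwining property of $X$ and the localization of the morphisms, to commute the outer factors past the middle term until they cancel; a second step using \textbf{K4}, \textbf{K6} and an interpolating chain of mutually spacelike regions removes the restriction $a\perp\tilde a$. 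Without simplicity the adjoint actions based at the two endpoints of $p_{a,\tilde a}$ need not agree on $X_{p_{a,\tilde a}}$ up to the common sign $\chi(X)$, and the independence genuinely fails. Since your localization step explicitly appeals to ``the independence already established'' in order to conclude that one and the same operator lies in $\mA(d)'$ for every $d\perp|b|$, that step inherits the gap as well.
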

\begin{proof}
So let us start by showing the independence of the choice of $a$.  
So take $\tilde a\perp b$. We first assume that $\tilde a \perp a$. So we may take a path 
$p_{a,\tilde a}=d_m*\cdots *d_1$ in the causal complement of $b$ and
\begin{align*}
 X_{p_{a,\partial_0b}} & X_{p_{\partial_1b, a}} =  X_{p_{a,\tilde a}}  X_{p_{\tilde a,\partial_0b}} X_{p_{\partial_1b, \tilde a}} X_{p_{\tilde a,a}} \\
 & \stackrel{\eqref{con:1}}{=}
\rho\loc{\tilde a}_{|d_n|}(X_{d_n})\cdots \rho\loc{\tilde a}_{|d_1|}(X_{d_1}) X_{p_{\tilde a,\partial_0b}} X_{p_{\partial_1b, \tilde a}} 
\rho\loc{\tilde a}_{|d_1|}(X_{\bar d_1})\cdots \rho\loc{\tilde a}_{|d_m|}(X_{\bar d_m})\\
& = X_{p_{\tilde a,\partial_0b}} \rho\loc{\partial_0b}_{|d_m|}(X_{d_m})\cdots 
\rho\loc{\partial_0 b}_{|d_1|}(X_{d_1})\cdot 
\rho\loc{\partial_1b}_{|d_1|}(X_{\bar d_1})\cdots \rho\loc{\partial_1 b}_{|d_m|}(X_{\bar d_m})  X_{p_{\partial_1b, \tilde a}} \\
& = X_{p_{\tilde a,\partial_0b}}  X_{d_m} \cdots X_{d_1}\cdot X_{\bar d_1}\cdots X_{\bar d_m}  X_{p_{\partial_1b, \tilde a}} \\
& =  X_{p_{\tilde a,\partial_0b}}\, X_{p_{\partial_1b, \tilde a}} \ .
\end{align*}
Now let $\tilde a\perp |b|$ but $\tilde a\not\perp a$. Since the causal complement 
of $|b|$ is pathwise connected there is a path $p_{a,\tilde a}=b_n*\cdots*b_1:a\to \tilde a$ 
in the causal complement of $|b|$.
Clearly $\partial_1b_1=a$ and $\partial_0b_n=\tilde a$. Since the support of $b_1$ is spacelike 
separated from the support of $b$ there is, by \textbf{K4},  $o\perp (|b_1|\cup |b|)$. Note in particular 
that $o\perp a, \partial_0b_1$. So applying the above argument first 
with respect to $o$ and $a$ and then with respect to $o$ and $\partial_0b_1$ arrive to 
\[
X_{p_{a,\partial_0b}} X_{p_{\partial_1b, a}} =  X_{p_{o,\partial_0b}}\, X_{p_{\partial_1b, o}}=
X_{p_{\partial_0b_1,\partial_0b}}\, X_{p_{\partial_1b, \partial_0b_1}}
\]
So by iterating this idea to all the 1-simplices of the path we arrive to 
\[
X_{p_{a,\partial_0b}} X_{p_{\partial_1b, a}} =  
X_{p_{\tilde a ,\partial_0b}}\, X_{p_{\partial_1b, \tilde a}}
\]
We now prove that $\bar X$ is a Roberts 1-cocycle. First of all 
observe that $\bar{X}_b\in\mA_{|b|}$. In fact for any $\tilde a \perp |b|$
we may take $a\perp \tilde a, |b|$ and the paths $p_{\partial_1b,a},   p_{a,\partial_0b}$ in the causal complement of $\tilde a$; if $A\in \mA_{\tilde a}$ then 
\[
\bar{X}_b A = X_{p_{a,\partial_0b}} X_{p_{\partial_1b, a}} A=
A X_{p_{a,\partial_0b}} X_{p_{\partial_1b, a}}= A \bar{X}_b
\]
and the proof follows by relative Haag duality. Secondly
given a 2-symplex $c$ take $a\perp |c|$ we have 
\begin{align*}
\bar{X}_{\partial_{0}c}\bar{X}_{\partial_{2}c} & = X_{p_{a,\partial_{00}c}} X_{p_{\partial_{10}c},a} \, X_{p_{a,\partial_{02}c}} \, X_{p_{\partial_{12}c,a}}=
X_{p_{a,\partial_{01}c}} X_{p_{\partial_{02}c},a} \, X_{p_{a,\partial_{02}c}} \, X_{p_{\partial_{11}c,a}}\\
& = X_{p_{a,\partial_{01}c}} \, X_{p_{\partial_{11}c,a}} = \bar{X}_{\partial_1c}
\end{align*}
\end{proof}
Since $\bar X$ is a Roberts 1-cocycle, it defines by \eqref{c:1} 
a morphism $\bar{\rho}$ of the net $\mA_K$. 
\begin{lem}
\label{con:2a}
Let $X$ be a simple covariant 1-cocycle. Let $\bar\rho$  be the morphism of the net 
associated with $\bar X$ by \eqref{c:1}. Then 
\[
\bar\rho\loc{o}_{\tilde o}\circ \rho\loc{o}_{\tilde o} = \rho\loc{o}_{\tilde o}\circ \bar\rho\loc{o}_{\tilde o} = id_{\mA_{\tilde o}} \ , \qquad o\subseteq \tilde o
\]
\end{lem}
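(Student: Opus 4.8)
The plan is to reduce the stated identity to a commutation property of the implementing unitaries and then to exploit simplicity. Fix $o\subseteq\tilde o$ and $A\in\mA(\tilde o)$, and choose $\tilde a\perp\tilde o$; by \textbf{K2} this forces $\tilde a\perp o$ as well. By \eqref{c:1} we have $\rho\loc{o}_{\tilde o}(A)=X_{p_{o,\tilde a}}\,A\,X_{p_{\tilde a,o}}$, and by \lemref{c:2}$(ii)$ this element again lies in $\mA(\tilde o)$, so that $\bar\rho\loc{o}_{\tilde o}$ may legitimately be applied to it. Since $\bar X$ is a Roberts $1$-cocycle it is path independent (by \textbf{K7}), so I would evaluate $\bar\rho\loc{o}_{\tilde o}$ along the same path $p_{o,\tilde a}$ and obtain
\[
\bar\rho\loc{o}_{\tilde o}\big(\rho\loc{o}_{\tilde o}(A)\big)=W\,A\,W^{*},\qquad W:=\bar X_{p_{o,\tilde a}}\,X_{p_{o,\tilde a}} .
\]
As $W$ is unitary ($WW^{*}=1$), the whole statement reduces to proving that $W$ commutes with $\mA(\tilde o)$.

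To analyse $W$ I would factor the transport through a spacelike region. By \textbf{K4}, applied to the disjoint pair $\tilde o,\tilde a$, there is $a'\perp\tilde o,\tilde a$; and since by \textbf{K6} the causal complement $\tilde o^{\perp}$ is pathwise connected, I may realize the path from $o$ to $\tilde a$ in the factored form $p_{a',\tilde a}\ast p_{o,a'}$ with the segment $p_{a',\tilde a}$ lying entirely in $\tilde o^{\perp}$. Writing both $X$ and $\bar X$ as ordered products along this path and using that each $X_{b}$ and each $\bar X_{b}$ belongs to $\mA(|b|)$, the sub-words coming from $p_{a',\tilde a}$ are localized in $\tilde o^{\perp}$ and hence, by causality, commute with $\mA(\tilde o)$. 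Thus the commutation of $W$ with $\mA(\tilde o)$ is controlled entirely by the segment $p_{o,a'}$ joining $o\subseteq\tilde o$ to $a'\perp\tilde o$, the part of the transport that unavoidably meets $\tilde o$.

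On this segment the decisive tool is the simplicity relation \eqref{con:2}: for a simple cocycle the $\rho$-dressed transporter along a path between causally disjoint regions equals $\chi(X)$ times the bare transporter, with $\chi(X)\in\{1,-1\}$. I would combine this with the base-point independence of $\bar X$ established in the previous lemma to rewrite $\bar X_{p_{o,a'}}X_{p_{o,a'}}$ as a product of transporters whose supports can be moved into $\tilde o^{\perp}$, the two resulting $\chi(X)$ factors cancelling because $\chi(X)^{2}=1$. This exhibits $W$ as an element of $\mA(\tilde o^{\perp})''$, which commutes with $\mA(\tilde o)$ by causality, whence $WAW^{*}=A$; relative Haag duality is what guarantees that this local commutation pins down the value exactly. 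The reverse identity $\rho\loc{o}_{\tilde o}\circ\bar\rho\loc{o}_{\tilde o}=\mathrm{id}$ then follows by the identical computation with the roles of $X$ and $\bar X$ interchanged, using $\bar{\bar X}=X$.

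The delicate point, and the step I expect to be the main obstacle, is exactly the rewriting in the third paragraph: one must arrange the intermediate regions and paths so that the contributions of $\bar X$ and of $X$ along the unavoidable near-$o$ part of the path telescope against one another, and one must insert \eqref{con:2} at precisely the right places so that the statistics signs cancel rather than accumulate. Everything surrounding this core computation—well-definedness of the composition via \lemref{c:2}$(ii)$, and the passage from commutation with the spacelike algebras to the stated equality via causality and relative Haag duality—is routine by comparison.
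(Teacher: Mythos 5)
Your overall architecture is sound and close to the paper's: you correctly reduce the claim to showing that $W=\bar X_{p_{o,\tilde a}}X_{p_{o,\tilde a}}$ commutes with $\mA(\tilde o)$, you correctly route the transport through an auxiliary region spacelike to $\tilde o$, and you correctly isolate the near-$o$ segment as the crux. But that crux is exactly where your proposal stops being a proof: the claim that $\bar X_{p_{o,a'}}X_{p_{o,a'}}$ can be ``moved into $\tilde o^{\perp}$'' by inserting the simplicity relation \eqref{con:2} twice and cancelling the two $\chi(X)$ factors is asserted, not carried out, and you yourself flag it as the main obstacle. Note also that your factorization through $p_{a',\tilde a}$ does not by itself buy anything: the $\tilde o^{\perp}$-localized sub-words are interleaved with $X_{p_{o,a'}}$ and $\bar X_{p_{o,a'}}$, and after conjugating $A$ through them one is back to computing $\bar X_{p_{o,a'}}\,\rho\loc{o}_{\tilde o}(A)\,\bar X_{p_{a',o}}$, i.e.\ the original expression. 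Two further slips: (a) once an operator is exhibited as a product of elements of algebras $\mA(c)$ with $c\perp\tilde o$, causality alone gives commutation with $\mA(\tilde o)$ --- relative Haag duality plays no role in that direction; (b) the reverse identity does not follow ``by interchanging $X$ and $\bar X$ using $\bar{\bar X}=X$'', since $\bar{\bar X}=X$ is nowhere established and would itself require the lemma you are proving.

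The paper closes the gap by a different and sign-free mechanism. It first proves the telescoping identity \eqref{con:3}: for a path $p$ from $o$ to $\tilde o$ lying in the causal complement of $a$, one has $\bar X_{p}=X_{p_{a,\tilde o}}X_{p_{o,a}}$; this follows purely from the base-point independence of $\bar X$ (where simplicity has already been used) and triviality of Roberts cocycles on loops --- no $\chi(X)$ appears. With $a\perp\tilde o$ and $\hat a\perp a\cup\tilde o$, the expression $\bar X_{p_{o,a}}\,\rho\loc{o}_{\tilde o}(A)\,\bar X_{p_{a,o}}$ becomes $X_{p_{\hat a,o}}X_{p_{a,\hat a}}\,\rho\loc{o}_{\tilde o}(A)\,X_{p_{\hat a,a}}X_{p_{o,\hat a}}$; the middle factors are along a path in $\tilde o^{\perp}$ and commute with $\rho\loc{o}_{\tilde o}(A)\in\mA(\tilde o)$ (\lemref{c:2}$(ii)$), and the outer factors are absorbed by the intertwining relation of \lemref{c:2}$(iii)$, leaving $\rho\loc{\hat a}_{\tilde o}(A)=A$ by \lemref{c:2}$(i)$. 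So the ``unavoidable near-$o$ part'' is handled by the cocycle intertwining property, not by the statistics phase, and the reverse composition is proved by a parallel computation applying \lemref{c:2}$(i)$--$(iii)$ to $\bar\rho$ and $\bar X$ directly, with no appeal to $\bar{\bar X}=X$. To repair your argument you would need to either supply the missing $\chi$-cancellation computation in full, or (more economically) replace that step by \eqref{con:3} together with \lemref{c:2}$(iii)$.
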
 
\begin{proof}
We need a preliminary result. 
Let $p:o\to\tilde o$ be a path which lays in the causal complement of $a$,  then 
\begin{equation}
\label{con:3}
\bar{X}_{p}=  X_{p_{a,\tilde o}} X_{p_{o,a}}
\end{equation}
In fact if $p=b_n*\cdots *b_1$, using the independence of the choice of $a$ proved in the previous Lemma  
we have
\begin{align*}
\bar{X}_{p} & = \bar{X}_{b_n}\cdots \bar{X}_{b_1} =
X_{p_{a,\partial_0 b_n}} X_{p_{\partial_1b_n,a}}X_{p_{a,\partial_0 b_{n-1}}} X_{p_{\partial_1b_{n-1},a}} \cdots \
X_{p_{a,\partial_0 b_1}} X_{p_{\partial_1b_{1},a}} \\
& = X_{p_{a,\partial_0 b_n}} X_{p_{\partial_1b_{n-1},a}} \cdots \
X_{p_{a,\partial_0 b_1}} X_{p_{\partial_1b_{1},a}} \\
& = X_{p_{a,\partial_0 b_n}}  X_{p_{\partial_1b_{1},a}}=  X_{p_{a,\tilde o}}  X_{p_{o,a}}
\end{align*}
where we used the fact that $ \partial_1 b_i=\partial_0 b_{i-1}$ and the fact that 
Roberts 1-cocycle equals 1 when evaluated on loops. \smallskip 

Now we take $\tilde o\in K$ with $o\subseteq \tilde o$ and $a\perp \tilde o$. Then 
we consider $\hat a \perp a\cup \tilde o$ and a path $p_{\hat a,a}$ in the causal complement of $\tilde o$.
Then for any $A\in\mA_{\tilde o}$, since $\rho\loc{o}_{\tilde o}(\mA(\tilde o))\subseteq \mA(\tilde o)$ (Lem. \ref{c:2}$(ii)$) we have 
\begin{align*}
\bar{\rho}\loc{o}_{\tilde o} (\rho\loc{o}_{\tilde o})( A ) & =
\bar{X}_{p_{o,a}}(\rho\loc{o}_{\tilde o}( A )) \bar{X}_{p_{a,o}}
\stackrel{\eqref{con:3}}{=}  X_{p_{\hat a, o}} X_{p_{a,\hat a}} \rho\loc{o}_{\tilde o}(A) X_{p_{\hat a,a}} X_{p_{o,\hat a}} \\
& =  X_{p_{\hat a,o}} \rho\loc{o}_{\tilde o}(A) X_{p_{o,\hat a}}\stackrel{Lem. \ref{c:2}(iii)}{=} 
\rho\loc{\hat a}_{\tilde o}(A)=A
\end{align*}
where we have used the fact that  $p_{a,\hat a}$ is in the causal complement of 
$\tilde o$ and $\rho\loc{o}_{\tilde o}(A)\in\mA(\tilde o)$. 
Conversely  
\begin{align*}
 \rho\loc{o}_{\tilde o}(\bar{\rho}\loc{o}_{\tilde o}( A )) & =
 X_{p_{o,a}} \bar{\rho}\loc{o}_{\tilde o}( A ) X_{p_{a,o}}
=X_{p_{\hat a,a}} X_{p_{a,\hat  a}}X_{p_{o,a}} \bar{\rho}\loc{o}_{\tilde o}( A ) X_{p_{a,o}} X_{p_{\hat a,a}} 
X_{p_{a,\hat a}}\\
& \stackrel{\eqref{con:3}}{=} X_{p_{\hat a,a}} \bar{X}_{p_{\hat a,o}} \bar{\rho}\loc{o}_{\tilde o}( A ) \bar{X}_{p_{o,\hat a}} X_{p_{a,\hat a}}
 = X_{p_{\hat a,a}}  \bar{\rho}\loc{\hat a}_{\tilde o}( A ) X_{p_{a,\hat a}}\\
& = X_{p_{\hat a,a}}  A  X_{p_{a,\hat a}}=A
\end{align*}
\end{proof}
We now are ready to prove the existence of the conjugate of  simple and covariant 1-cocycles. 
\begin{thm}
\label{con:4}
Given a simple covariant 1-cocycle $X$ then 
\begin{equation}
\label{con:5}
\bar{X}_b(\lambda):= \bar{\rho}\loc{\partial_1b}_{|b|}(X^*_b(\lambda))\, \qquad b\in\Sigma_1(K) \, , \ \lambda\in\mP \, ,  
\end{equation}  
is the conjugate covariant 1-cocycle of $X$. In particular $X$ and $\bar X$ are irreducible. 
\end{thm}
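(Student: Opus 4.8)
The plan is to establish three things: that the field \eqref{con:5} is a genuine object of $Z^1_c(\mA_K)$ (a covariant $1$-cocycle), that the two tensor products $\bar X\otimes X$ and $X\otimes\bar X$ collapse onto the identity cocycle $I$, and that the conjugate equations \eqref{ceq} and the irreducibility then follow almost formally. The engine of the whole argument is Lemma~\ref{con:2a}: the morphism $\bar\rho$ attached to the Roberts $1$-cocycle $\bar X$ is, locally, a two-sided inverse of the morphism $\rho$ attached to $X$, and by Lemma~\ref{c:2}(ii) the map $\bar\rho\loc{\partial_1b}_{|b|}$ is a unital $*$-morphism sending $\mA(|b|)$ into itself. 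In particular $\bar X_b(\lambda)=\bar\rho\loc{\partial_1b}_{|b|}(X^*_b(\lambda))$ is automatically a unitary lying in $\mA(|b|)$, so the field \eqref{con:5} has the correct localization and unitarity for free.

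First I would check that $\bar X$ satisfies the covariant cocycle identity \eqref{b:2}. Substituting \eqref{con:5}, inserting the adjoint form of \eqref{b:2} for $X$, and using multiplicativity of $\bar\rho$, one reduces the claim to commuting $\alpha^{-1}_\lambda$ through the localized morphism $\bar\rho$; the precise tool is the covariance relation of Lemma~\ref{c:2}(iv) read for $\bar\rho$, together with the simplicial identities $\partial_{ij}=\partial_{j(i+1)}$ to match the supports and base points of the morphisms on the two sides. I expect this bookkeeping to be the main obstacle, since it is the only step that genuinely uses covariance; the manipulation runs parallel to the cocycle verification already carried out for the tensor product in Proposition~\ref{tens:3}. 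Once $\bar X$ is known to be a covariant $1$-cocycle whose associated morphism is $\bar\rho$, the decisive simplification is immediate. By \eqref{tens:1}, multiplicativity of $\bar\rho$, and $X^*_b(\lambda)X_b(\lambda)=1$,
\[
(\bar X\otimes X)_b(\lambda)=\bar\rho\loc{\partial_1b}_{|b|}\!\big(X^*_b(\lambda)\big)\,\bar\rho\loc{\partial_1b}_{|b|}\!\big(X_b(\lambda)\big)=\bar\rho\loc{\partial_1b}_{|b|}(1)=1 ,
\]
while, since $\rho\loc{\partial_1b}_{|b|}\circ\bar\rho\loc{\partial_1b}_{|b|}=\mathrm{id}$ on $\mA(|b|)$ by Lemma~\ref{con:2a},
\[
(X\otimes\bar X)_b(\lambda)=X_b(\lambda)\,\big(\rho\loc{\partial_1b}_{|b|}\circ\bar\rho\loc{\partial_1b}_{|b|}\big)\!\big(X^*_b(\lambda)\big)=X_b(\lambda)X^*_b(\lambda)=1 .
\]
Hence $\bar X\otimes X=I=X\otimes\bar X$ as covariant $1$-cocycles, and this holds uniformly for $\chi(X)=\pm1$, as it uses nothing beyond Lemma~\ref{con:2a}.

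With both tensor products strictly equal to $I$, the arrows $r\in(I,\bar X\otimes X)$ and $\bar r\in(I,X\otimes\bar X)$ both live in $(I,I)=\bC\cdot 1_I$, and I would take $r=\bar r=1_I$, i.e.\ $r_a=\bar r_a=1$ for every $a\in\Sigma_0(K)$. Unwinding \eqref{tens:2} with $\rho\loc{a}_a(1)=1$ shows that $1_X\otimes r$ and $\bar r^*\otimes 1_X$ are both the constant field $1$, so their composite equals $1_X$; the same computation yields the second relation of \eqref{ceq}, exhibiting $\bar X$ as the conjugate. Finally, irreducibility is automatic: since $\bar X\otimes X=I=X\otimes\bar X$, tensoring by $\bar X$ is an equivalence of $Z^1_c(\mA_K)$ with quasi-inverse tensoring by $X$, hence fully faithful, so $(X,X)\cong(\bar X\otimes X,\bar X\otimes X)=(I,I)=\bC\cdot 1$, using that $I$ is irreducible; thus $(X,X)=\bC\cdot1_X$, and symmetrically $\bar X$ is irreducible.
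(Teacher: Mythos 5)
Your overall strategy coincides with the paper's: localization and unitarity of $\bar X_b(\lambda)$ come for free from Lemma~\ref{c:2}(ii), both tensor products collapse to the identity cocycle $I$ exactly as you compute, and the conjugate equations are then solved by $r=\bar r=1_I$. The one step you defer as ``bookkeeping'' is, however, also the one where the tool you name does not apply as stated: Lemma~\ref{c:2}(iv) is proved for the morphism attached to a \emph{covariant} $1$-cocycle, whereas at this stage $\bar\rho$ is only known to arise from the Roberts $1$-cocycle $\bar X_b$, so invoking ``(iv) read for $\bar\rho$'' presupposes precisely the covariance of $\bar X$ that you are trying to establish. The paper avoids this circularity by sandwiching with $\rho$: using Lemma~\ref{con:2a} it rewrites the factor $\alpha^{-1}_{\lambda}\bigl(\bar{\rho}\loc{\lambda\partial_{10}c}_{|c|}(X^*_{\lambda\partial_0c}(\sigma))\bigr)$ as $\bar{\rho}\loc{\partial_{12}c}_{|c|}\bigl(\rho\loc{\partial_{12}c}_{|c|}(\,\cdots)\bigr)$, applies Lemma~\ref{c:2}(iv) to $\rho$ (for which it is available), and cancels $\rho\circ\bar\rho$ again by Lemma~\ref{con:2a}. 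You should either adopt this detour or first derive a covariance relation for $\bar\rho$ directly from Lemma~\ref{b:6}; as written, this step is a gap.

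Your irreducibility argument, by contrast, is correct and takes a genuinely different, cleaner route. The paper uses simplicity, i.e.\ $\epsilon(X,X)=\chi(X)\cdot 1$, together with naturality of the symmetry to obtain $\rho\loc{o}_o(t_o)=t_o$, and then Lemma~\ref{con:2a} and the triviality of $(\bar X\otimes X)_b$ to show that $t$ is a constant field killed by causality. You instead observe that $X\otimes-$ and $\bar X\otimes-$ are strictly mutually inverse endofunctors (on arrows this is again Lemma~\ref{con:2a}, since $(1_X\otimes(1_{\bar X}\otimes t))_a=\rho\loc{a}_a(\bar\rho\loc{a}_a(t_a))=t_a$), hence fully faithful, so $(X,X)\cong(I,I)=\bC\cdot 1$. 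This uses only the invertibility of $X$ in the tensor category and not simplicity at all, which is a small conceptual gain; both arguments ultimately rest on Lemma~\ref{con:2a} and on the irreducibility of $I$.
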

\begin{proof}
By Lemma \ref{c:2}$ii)$  we have that $\bar X_b(\lambda)\in \mA(|b|)$ for any 
$\lambda\in \mP$. For any 2-simplex $c$ and any $\lambda,\sigma\in\mP$ we have 
\begin{align*}
\alpha^{-1}_{\lambda}\big(\bar{X}_{\lambda\partial_0c}(\sigma)\big)\bar{X}_{\partial_{2c}}(\lambda) & = 
\alpha^{-1}_{\lambda}\big(\bar{\rho}\loc{\lambda\partial_{10}c}_{|c|}(X^*_{\lambda\partial_0c}(\sigma))\big)\bar{\rho}\loc{\partial_{12}c}_{|c|}\big(X^*_{\partial_{2c}}(\lambda)\big)\\
& \stackrel{Lem.\, \ref{con:2a}}{=} \bar{\rho}\loc{\partial_{12c}}_{|c|}\Big\{ \rho\loc{\partial_{12c}}_{|c|}\Big(\alpha^{-1}_{\lambda}\big(\bar{\rho}\loc{\lambda\partial_{10}c}_{|c|}(X^*_{\lambda\partial_0c}(\sigma)\big)\Big)
 X^*_{\partial_{2c}}(\lambda)\Big\}\\
& = \bar{\rho}\loc{\partial_{12c}}_{|c|}\Big\{ X_{\partial_{2}c}(\lambda) \rho\loc{\partial_{12}c}_{|c|}\Big(\alpha^{-1}_{\lambda}\big(\bar{\rho}\loc{\lambda\partial_{10}c}_{|c|}(X_{\lambda\partial_0c}(\sigma)\big)\Big)
\Big\}^*\\
& \stackrel{Lem.\, \ref{c:2}iv)}{=} \bar{\rho}\loc{\partial_{12}c}_{|c|}\Big\{\alpha^{-1}_\lambda\Big(\rho\loc{\lambda\partial_{02}c}_{|c|}\big(\bar{\rho}\loc{\lambda\partial_{10}c}_{|c|}(X_{\lambda\partial_0c}(\sigma)\big)\Big)X_{\partial_{2}c}(\lambda) 
\Big\}^*\\
& = \bar{\rho}\loc{\partial_{12c}}_{|c|}\Big\{X^*_{\partial_{2c}}(\lambda) \alpha^{-1}_\lambda\Big(\rho\loc{\lambda\partial_{02c}}_{|c|}\big(\bar{\rho}\loc{\lambda\partial_{10}c}_{|c|}(X^*_{\lambda\partial_0c}(\sigma)\big)\Big)
\Big\}\\
& \stackrel{\ref{con:2a}}{=}\bar{\rho}\loc{\partial_{12c}}_{|c|}\Big\{X^*_{\partial_{2c}}(\lambda) \alpha^{-1}_\lambda\Big(X^*_{\lambda\partial_0c}(\sigma)\Big)\Big\}\\
& = \bar{\rho}\loc{\partial_{11c}}_{|c|}\Big\{X^*_{\partial_{1c}}(\lambda)\Big\} = 
\bar{X}_{\lambda\partial_0c}(\sigma\lambda) \ .
\end{align*}
So $\bar X$ is a covariant 1-cocycle. Concerning the conjugate equations, given $b$ take $a\perp |b|$ and 
\[
(X\otimes \bar{X})_b(\lambda)  = X_{b}(\lambda)\rho\loc{\partial_1b}_{|b|}(\bar{X}_b(\lambda))
 = X_{b}(\lambda)\rho\loc{\partial_1b}_{|b|}(\bar{\rho}\loc{\partial_1b})_{|b|}(X^*_b(\lambda)) =
 X_{b}(\lambda)X^*_b(\lambda) = 1_{\mH} 
\]
and similarly $(\bar{X}\otimes X)_b(\lambda)=1_{\mH}$. So both $X\otimes \bar X$ and $\bar X\otimes X$ are equal to the 
identity cocycle $I$ and the conjugate equations \eqref{ceq} are verified by taking $\bar r$ and $r$ equal to 
$1_I=1_{\mH}$. Finally in order to prove that $X$ is irreducible, let 
$t\in(X,X)$. By 
\eqref{con:0} and by the definition of tensor product we have $1_X\otimes t\cdot \epsilon(X,X)=\epsilon(X,X) \cdot t\otimes 1_X$ and this implies that $\rho\loc{o}_o(t_o)=t_o$ for any $o\in K$. So by Lemma \ref{con:2a} we also have 
$\bar{\rho}\loc{o}_o(t_o)= (\bar{\rho}\loc{o}_o\circ \rho\loc{o}_o)(t_o)= t_o$ for any $o\in K$. Hence for any 1-simplex $b$ we have 
\[
t_{\partial_0b}=\bar\rho\loc{\partial_0b}_{\partial_0b}(t_{\partial_0b})=(1_{\bar X}\otimes t)_{\partial_0b} (\bar X\otimes X)_b= (\bar X\otimes X)_b (1_{\bar X}\otimes t)_{\partial_1b}= \bar\rho\loc{\partial_1b}_{\partial_1b}(t_{\partial_1b})=t_{\partial_1b} 
\]
because,  as observed above, $(\bar X\otimes X)_b=1_{\mH}$ for any 1-simplex $b$.
Since $K$ is pathwise connected $t$ is a constant field and causality implies that $t$ is a multiple of the identity,
completing the proof.
\end{proof}

We now  introduce the notion of objects with finite statistics. To this end we recall 
that a \emph{left inverse of $X$} is a linear map 
$\phi_{Z,Y}:(X\otimes Z,X\otimes Y)\to (Z,Y)$ satisfying the relations
\begin{itemize}
\item $\phi_{Z\otimes \tilde X, Y\otimes \tilde X}(t\otimes 1_{\tilde X})  =
\phi_{Z, Y}(t)\otimes 1_{\tilde X}$; 
\item $\phi_{Z',Y'}(1_X\otimes s\cdot  t\cdot  1_X\otimes r )  = s \cdot \phi_{Z,Y}(t)\cdot r$  , 
\end{itemize}
for any  $t\in(X\otimes Z, X\otimes Y)$, $s\in (Z,Z')$ and $r\in (Y,Y')$. A left inverse of $X$ 
is said to be \emph{positive} whenever, for any object $Y$, $\phi_{Y,Y}$ sends positive
elements of $(X\otimes Y,X\otimes Y)$ into positive elements of $(Y,Y)$; \emph{normalized}
whenever  $\phi_{I,I}(1_X)= 1_I$ where $I$ is the identity object of the category. 
A positive normalized left inverse $\phi$ of $X$ is said to be \emph{standard} whenever 
$(\phi_{X,X}(\epsilon(X,X)))^2= c\cdot 1_X$ with  $c >0$. 
\begin{defn}
A covariant 1-cocycle $X$ is said to have \textbf{finite statistics} if it admits a standard left inverse. 
The full subcategory of $Z^1_c(\mA_K)$ of objects having finite statistics is denoted by 
$Z^1_{\mathrm{c,f}}(\mA_{K})$.
\end{defn}
We note that any simple covariant 1-cocycle $X$ has finite statistics. 
Since $\bar X\otimes X=X\otimes \bar X=I$ and $r=\bar r= 1_I$, defining   
\[
\phi_{Z,Y}(t):=  1_{\bar X}\otimes t  \ , \qquad t\in(X\otimes Z,X\otimes Y) \, ,
\]
we get a positive, normalized left inverse which is, by definition of simple objects, standard. 
In particular, we note that $\phi_{Z,Y}(t)_a=\bar\rho\loc{a}_a(t_a)$ for any  $a\in\Sigma_0(K)$.
\begin{prop}
The category $Z^1_{\mathrm{c,f}}(\mA_{K})$ is a symmetric tensor $\rC^*$-category closed under tensor products, subobjects and direct sums. Any object of $Z^1_{\mathrm{c,f}}(\mA_{K})$
is a finite direct sum of irreducible objects.
\end{prop}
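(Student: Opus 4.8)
The plan is to reduce the statement to the Doplicher--Roberts analysis of permutation statistics, for which the ambient structure has already been assembled. By Proposition~\ref{tens:3} and Proposition~\ref{sim:5}, $Z^1_c(\mA_K)$ is a symmetric tensor $\rC^*$-category; by the Proposition establishing closure of $Z^1_c(\mA_K)$ under direct sums and subobjects, it is closed under those two operations; and its unit $I$ is irreducible. Since $Z^1_{\mathrm{c,f}}(\mA_K)$ is the \emph{full} subcategory whose objects admit a standard left inverse, it inherits the symmetric tensor $\rC^*$-structure as soon as we check that it contains $I$ and is stable under the three operations. The unit $I$ qualifies, being simple and hence of finite statistics by the remark following the definition of finite statistics. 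Thus the proof splits into two tasks: verifying the three closure properties, and proving the finite decomposition into irreducibles.

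First I would establish closure under tensor products. Given $X,Y\in Z^1_{\mathrm{c,f}}(\mA_K)$ with standard left inverses $\phi^X,\phi^Y$, I would set $\phi^{X\otimes Y}:=\phi^Y\circ\phi^X$, using the identification of $(X\otimes Y)\otimes(\,\cdot\,)$ with $X\otimes\big(Y\otimes(\,\cdot\,)\big)$. Positivity and normalization of the composite are immediate from the corresponding properties of $\phi^X$ and $\phi^Y$, while standardness is preserved because the statistics parameter $\phi_{X,X}(\epsilon(X,X))$ is multiplicative under the tensor product in a symmetric category, so that the square condition for $X\otimes Y$ reduces to the product of those for $X$ and $Y$. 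Closure under subobjects and direct sums I would obtain by the usual transport of left inverses: for a subobject cut out by an isometry I would compress the standard left inverse of the ambient object and check that the square condition survives under the compression of $\epsilon$; for a direct sum $Z$ with defining isometries $w,v$ satisfying $w\cdot w^*+v\cdot v^*=1_Z$, I would glue $\phi^X$ and $\phi^Y$ into a single left inverse on $Z$, the weights being fixed by the requirement that the resulting statistics parameter again square to a positive multiple of the identity. These are the standard Doplicher--Roberts constructions; in each the only nonformal point is the standardness of the resulting left inverse.

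The heart of the matter, and the step I expect to be the main obstacle, is the finite decomposition into irreducibles. Here I would follow the Doplicher--Roberts treatment of permutation symmetry: the standard left inverse, together with the symmetry $\epsilon$, induces on the tensor powers $X^{\otimes n}$ compatible representations of the symmetric groups, and the standardness condition $\big(\phi_{X,X}(\epsilon(X,X))\big)^2=c\cdot 1_X$ with $c>0$ forces the statistics parameter of each irreducible summand to be quantised as $\chi/d$ with $\chi\in\{1,-1\}$ and $d\in\bN$. This quantisation is exactly what bounds the totally (anti)symmetric projections on the tensor powers and thereby forces the intertwiner algebra $(X,X)$ to be finite-dimensional. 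A finite-dimensional $\rC^*$-algebra is a finite direct sum of matrix algebras, so it has only finitely many minimal projections; applying closure under subobjects to each of them yields finitely many irreducible objects $X_i$ and isometries realising $X\cong\bigoplus_i X_i$, and each $X_i$ again carries a standard left inverse by closure under subobjects, hence lies in $Z^1_{\mathrm{c,f}}(\mA_K)$. The sign $\chi$ recorded by this analysis reproduces the Bose/Fermi parastatistics, while the conjugates furnished by Theorem~\ref{con:4} for the simple (dimension-one) cocycles anchor the base of the dimension argument. This completes the verification that $Z^1_{\mathrm{c,f}}(\mA_K)$ is a symmetric tensor $\rC^*$-category with the asserted closure properties and finite decomposition.
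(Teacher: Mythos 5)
Your proposal is correct and follows the same route as the paper, whose proof consists of the single line ``The proof follows from the properties of standard left inverses, see [DHR71, Rob90]''; you have simply spelled out the content of those citations (composition, compression and gluing of standard left inverses for the closure properties, and the quantisation of the statistics parameter forcing finite-dimensionality of $(X,X)$ for the decomposition into irreducibles). No discrepancy with the paper's argument.
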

\begin{proof}
The proof follows from the properties of standard left inverses, see \cite{DHR71, Rob90}.
\end{proof} 
If $X$ is an irreducible covariant 1-cocycle with finite statistics, following \cite{DHR71}, and $\phi$ is a left inverse of $X$,  one has that 
\[
\phi_{X,X}(\epsilon(X,X))= \frac{\chi(X)}{d(X)} \cdot  1_X 
\]
where $\chi(X)\in\{-1,1\}$  and  $d(X)\in\mathbb{N}$, called, respectively, the statistical \emph{phase} and 
\emph{dimension} are invariant of the equivalence class of $X$. These invariants means that
$X$ has a para-statics of order $d(X)$ of Bose or Fermi type depending on whether 
$\chi(X)$ is $1$ or $-1$. Note in particular that simple covariant 1-cocycle follows ordinary Bose or Fermi statistics.

Having shown that  simple covariant 1-cocycles have finite statistics and have conjugates Lemma \ref{con},
we now are ready to give  the main result of the present paper.
\begin{thm}
The category of covariant 1-cocycles with finite statistics $Z^1_{\mathrm{c,f}}(\mA_{K})$ has conjugates.
\end{thm}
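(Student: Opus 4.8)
The plan is to bootstrap from simple covariant $1$-cocycles, whose conjugates are already produced by \thmref{con:4}, to arbitrary objects of finite statistics, following the Doplicher--Haag--Roberts strategy \cite{DHR71, Rob90}.

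First I would reduce to the irreducible case. By the preceding Proposition every $X\in Z^1_{\mathrm{c,f}}(\mA_K)$ splits as a finite direct sum $X\simeq\bigoplus_{i=1}^n X_i$ of irreducibles, realized by isometries $w_i\in(X_i,X)$ with $\sum_i w_i\cdot w_i^*=1_X$. Since conjugation is additive --- if each $X_i$ has a conjugate $\bar X_i$ with solutions $r_i\in(I,\bar X_i\otimes X_i)$, $\bar r_i\in(I,X_i\otimes\bar X_i)$ of \eqref{ceq}, then $\bar X:=\bigoplus_i\bar X_i$ is a conjugate of $X$, the required arrows being assembled from the $r_i,\bar r_i$ composed with the $w_i$ and the corresponding isometries for $\bar X$ --- it suffices to treat an irreducible $X$ with statistical phase $\chi(X)\in\{1,-1\}$ and dimension $d=d(X)\in\bN$.

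Next I would extract the \emph{determinant}, a simple object to which \thmref{con:4} applies. As $Z^1_{\mathrm{c,f}}(\mA_K)$ is symmetric, $\epsilon(X,X)^2=1_{X\otimes X}$ together with the braid relation, so the operators $1_{X^{\otimes(i-1)}}\otimes\epsilon(X,X)\otimes 1_{X^{\otimes(n-i-1)}}$ generate a unitary representation of the symmetric group $S_n$ inside $(X^{\otimes n},X^{\otimes n})$. Finiteness of statistics means exactly that the totally antisymmetric projection $E^{(n)}$ of this representation (the totally symmetric one in the Fermi case $\chi(X)=-1$) vanishes for $n>d$ while $E^{(d)}\neq 0$; irreducibility of $X$ forces $E^{(d)}$ to be minimal. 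Closure under subobjects then yields a subobject $\delta$ with an isometry $W\in(\delta,X^{\otimes d})$, $W\cdot W^*=E^{(d)}$, which by the standard parastatistics analysis of \cite{DHR71} is invertible, hence simple in the sense of \eqref{con:0}. By \thmref{con:4}, $\delta$ admits a conjugate $\bar\delta$ with solutions $r_\delta,\bar r_\delta$ of \eqref{ceq}.

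Finally I would assemble the conjugate of $X$. One defines $\bar X$ as the subobject of $X^{\otimes(d-1)}\otimes\bar\delta$ cut out by an isometry built from $W$, and takes as candidate solutions of \eqref{ceq} the arrows $R\in(I,\bar X\otimes X)$ and $\bar R\in(I,X\otimes\bar X)$ obtained by composing $W$ --- which embeds $I$ into the totally antisymmetric part of $X^{\otimes d}$ --- with $r_\delta,\bar r_\delta$. Verifying the two conjugate equations is a computation resting on the $S_d$-relations for $\epsilon$, on the identity $W^*\circ(1_X\otimes E^{(d-1)})=W^*$ expressing that $W$ lands in the antisymmetric subspace, and on the conjugate equations for $\delta$; standardness of the resulting solution follows from the standard left inverse $\phi_{Z,Y}(t)=1_{\bar X}\otimes t$ already exhibited for simple objects. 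I expect the main obstacle to be precisely this last verification: controlling the parastatistics combinatorics carried by $E^{(d)}$ and the Bose/Fermi signs $\chi(X)$, while checking that each arrow employed is a genuine intertwiner of covariant $1$-cocycles --- the latter being automatic, since the entire construction lives inside the symmetric tensor $\rC^*$-category $Z^1_{\mathrm{c,f}}(\mA_K)$ and therefore transports covariance for free.
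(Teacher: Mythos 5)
Your proposal is correct and follows essentially the same route as the paper: the paper's proof is a one-line reduction to the fact that simple objects have conjugates (Theorem~\ref{con:4}), citing the appendix of \cite{Ruz05} for the standard Doplicher--Haag--Roberts machinery (decomposition into irreducibles, the determinant construction via the antisymmetrizer $E^{(d)}$, and the realization of $\bar X$ inside $X^{\otimes(d-1)}\otimes\bar\delta$), which is precisely what you spell out. Your closing observation that covariance is transported for free because the whole construction takes place inside the symmetric tensor $\rC^*$-category $Z^1_{\mathrm{c,f}}(\mA_K)$ is exactly the point that makes the citation legitimate.
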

\begin{proof}
The fact that $Z^1_{\mathrm{c,f}}(\mA_{K})$ has conjugates derives from the fact that simple objects have conjugates
(see the appendix of \cite{Ruz05}). 
\end{proof}

 \section{Conclusions and outlooks}
\label{sec.concl}

In the current study, we have introduced physically motivated properties that define appropriate families (sets of indices) 
of spacetime regions where quantum charges are expected to be localized.
Then, given such a set of indices and an observable net fulfilling factoriality and relative Haag duality, 
we constructed a covariant superselection structure wherein charges are localized within the aforementioned regions. 
This achievement was made possible by employing a novel approach based on covariant 1-cocycles. 
We emphasize that the definitions and proofs given in this paper do not rely on the symmetry being a group, 
but rather on its semigroup structure. 
Therefore, our approach remains valid even when the symmetry of the charge localization regions is a semigroup.

\medskip 

Our method allows to recover the sectors of the DHR and BF analysis and, in curved spacetimes, 
the sectors of DHR type encoded by the cohomology of the observable net. 
Yet, in the case where the ambient spacetime is the light cone and the set of indices is the one of hypercones,
it remains an open problem to understand the relation between our superselection structure
and the one defined by Buchholz and Roberts for charges of electromagnetic type.
We believe that this question is of interest because,
were the two superselection structures inequivalent, we would be led to conclude that 
picking the set of indices does not uniquely determine the superselection structure with that localization.
In other words, besides the choice of the localization regions, a further input is needed 
to discriminate the superselection structure of interest among those having the same localization.

\medskip 

Another open point is relative to our hypothesis that the ambient spacetime is simply connected. 
This simplified our exposition and it made possible to rule out Aharonov-Bohm external potentials
to which at the present stage we are not interested. 
Yet, we wish to discuss our superselection sectors in full generality, 
with the final goal to arrive to the reconstruction of the field net without limitations on the 
topology of the spacetime. 
In this sense the localized morphisms constructed in \S \ref{c} should play an important role, 
analogous to the one played by DHR endomorphisms in Minkowski spacetime. 
But, in a spacetime with non-trivial fundamental group, 
one has to take into account the fact that localized morphisms exhibit 
a non-trivial parallel transport, depending on the homotopy class of the path along which they are translated. 

\medskip 

Finally, we would like to mention that, thanks to the level of abstraction that we adopted, 
our methods can be applied to low dimensional spacetimes, with the natural modifications 
arising from the fact that permutation symmetry could be replaced by a braiding in specific situations. 
We may therefore apply our suitably modified construction in such scenarios. 
At this purpose, we note that models studied in the language adopted in the present paper are available 
\cite{Cio09}, thus they constitute candidates for further examples of superselection structures
that can be obtained with our method.

\ 

\noindent \textbf{Acknowledgements.}
The authors sincerely thank D. Buchholz, for insightful comments and observations which improved the manuscript, and S. Carpi, G. Morsella, R. Longo for fruitful discussions.
The authors gratefully acknowledge GNAMPA-INdAM. GR  acknowledges the Excellence Project
2023-2027 MatMod@TOV awarded to the Department of Mathematics, University of Rome
Tor Vergata. EV and FC gratefully acknowledge the kind hospitality of the Department of Mathematics, University of Rome ``Tor Vergata” in the final stage of the present work.

%
%

{\small

}

\end{document}